\newcolumntype{L}[1]{>{\raggedright\let\newline\\\arraybackslash\hspace{0pt}}m{#1}}
\newcolumntype{C}[1]{>{\centering\let\newline\\\arraybackslash\hspace{0pt}}m{#1}}
\newcolumntype{R}[1]{>{\raggedleft\let\newline\\\arraybackslash\hspace{0pt}}m{#1}}
\newcommand{\todo}[2][]{\ignorespaces
	\if\relax\detokenize{#1}\relax
	{\color{red}[TODO: #2]}%
	\else
	{\color{red}[TODO (#1): #2]}%
	\fi
}
\definecolor{change}{HTML}{0096b8}
\DeclareFontFamily{U}{tipa}{}
\DeclareFontShape{U}{tipa}{m}{n}{<->tipa10}{}
\newcommand{\arc@char}{{\usefont{U}{tipa}{m}{n}\symbol{62}}}%
\newcommand{\arc}[1]{\mathpalette\arc@arc{#1}}
\newcommand{\arc@arc}[2]{%
  \sbox0{$\m@th#1#2$}%
  \vbox{
    \hbox{\resizebox{\wd0}{\height}{\arc@char}}
    \nointerlineskip
    \box0
  }%
}
\newcommand{\ts}{\textsuperscript}
\DeclareMathOperator{\sinc}{sinc}
\begin{document}
\title{Simultaneous Information and Energy Transmission with Short Packets and Finite Constellations}

\author{Sadaf ul Zuhra,~\IEEEmembership{Member,~IEEE,}
Samir M. Perlaza,~\IEEEmembership{Senior Member,~IEEE,}
H. Vincent Poor,~\IEEEmembership{Fellow,~IEEE,}
Mikael Skoglund,~\IEEEmembership{Fellow,~IEEE}
\thanks{Sadaf ul Zuhra, Samir M. Perlaza, and H. Vincent Poor are with the Department of Electrical and Computer Engineering, Princeton University, 08540 Princeton, NJ, USA. $\lbrace$sadaf.zuhra, poor$\rbrace$@princeton.edu\newline
Samir M. Perlaza is with INRIA, Centre Inria d'Universit\'{e} C\^{o}te d'Azur, 2004  Route des Lucioles, 06902 Sophia Antipolis, France. $\lbrace$samir.perlaza$\rbrace$@inria.fr\newline
Mikael Skoglund is with the School of Electrical Engineering and Computer Science, Malvinas V\"{a}g 10,  KTH Royal Institute of Technology, 11428 Stockholm, Sweden. (skoglund@kth.se)\newline
Samir M. Perlaza is also with the Laboratoire de Math\'{e}matiques GAATI, Universit\'{e} de la Polyn\'{e}sie Fran\c{c}aise,  BP 6570, 98702 Faaa, French Polynesia. \newline
%
%
This research was supported in part by the European Commission through the H2020-MSCA-RISE-2019 program under grant 872172; in part by the Agence Nationale de la Recherche (ANR) through the project MAESTRO-5G (ANR-18-CE25-0012); in part by the U.S. National Science Foundation under Grant CCF-1908308; and in part by the French Government through the ``Plan de Relance" and ``Programme d’investissements d’avenir". \newline
This paper was presented in part at the IEEE Information Theory Workshop (ITW) 2021, IEEE International Symposium on Information Theory (ISIT)~2022, and IEEE ITW 2022.
}
}

%

\maketitle

\begin{abstract}
This paper characterizes the trade-offs between information and energy transmission over an additive white Gaussian noise channel in the finite block-length regime with finite channel input symbols.
These trade-offs are characterized in the form of inequalities involving the information transmission rate, energy transmission rate, decoding error probability (DEP) and energy outage probability (EOP) for a given finite block-length code.
The first set of results identify the tuples of information rate, energy rate, DEP and EOP that cannot be achieved by a given code.
Following this, a novel method for constructing a family of codes that can satisfy a target information rate, energy rate, DEP and EOP is proposed.
Finally, the achievability results identify the set of tuples of information rate, energy rate, DEP and EOP that can be simultaneously achieved by the constructed family of codes.
\end{abstract}

\begin{IEEEkeywords}
Simultaneous information and energy transmission, finite block-length, and finite channel inputs.
\end{IEEEkeywords}

\section{Introduction} \label{sec:intro}
Simultaneous information and energy transmission (SIET), also known as simultaneous wireless information and power transfer (SWIPT), is a technology that employs radio frequency (RF) signals to simultaneously transmit both information and energy to (possibly different) remote devices~\cite{Tesla-Patent-1914, GroverSahai,Iqbal2024}.
A central question in modern SIET is to characterize the information rate, energy rate, decoding error probability (DEP) and energy outage probability (EOP) that can be simultaneously achieved by any given transmission technique.  
Such question has been traditionally studied under at least one of two simplifying assumptions. 
The first assumption is that the transmission duration is infinitely long, as in~\cite{varshney2008transporting, amor2016fundamental, NizarItw2021, GroverSahai, amor2016feedback, KhalfetGIC}. Under this assumption, the DEP and the EOP can be made arbitrarily close to zero, and thus, the focus is exclusively on the information and energy transmission rates. If this assumption is dropped, as in the remainder of this work, the DEP and EOP are bounded away from zero, and thus, the performance analysis of SIET shall take into account four metrics: information rate, energy rate, DEP, and EOP. 
%
The second common assumption is that channel inputs symbols can be drawn from probability distributions whose support is uncountable, e.g., Gaussian distributions as in~\cite{8115220}. This presumes the ability to transmit with infinitely many channel input symbols, which is not the case in practice. In a departure from this assumption, this work considers the more practical case of finite channel input symbols, which provides insights for practical systems operating with modulation schemes built upon commercial constellations.

\subsection{State of the art}
The existing body of work in SIET can be divided into three main categories. The first category is the study of the trade-offs between the information and energy rates that can be simultaneously transmitted by an RF signal. In the asymptotic block-length regime, the focus is on the notion of information-energy regions to characterize the set of all information and energy rate pairs that can be simultaneously achieved~\cite{amor2016fundamental}. To capture the trade-off between the information and energy rates, a capacity-energy function is defined in~\cite{varshney2008transporting} for various channels including the discrete memoryless channel, binary symmetric channel, and the additive white Gaussian noise (AWGN) channel. The information-energy capacity region of the Gaussian multiple access channel is characterized in~\cite{amor2016feedback,amor2017feedback}, whereas the information-energy capacity region of the Gaussian interference channel is approximated in~\cite{KhalfetGIC}. In the finite block-length regime,~\cite{perlaza2018simultaneous} and~\cite{khalfet2019ultra} provide a characterization of the information-energy capacity region with binary antipodal channel inputs and fixed DEP and EOP. Converse and achievability bounds on these metrics for finite channel input symbols are presented in~\cite{zuhraITW} and~\cite{zuhraISIT}, respectively. The fundamental limits of SIET for the case of semantic communications are characterized in~\cite{10613795}.
%


The second category of research studies the modeling of energy harvesting (EH) circuits. One of the key questions in this regard tackles the estimation of the amount of  energy that can be harvested from an RF signal. This line of inquiry has revealed that, due to the presence of non-linear elements such as diodes in the EH circuits, the expected energy harvested from a signal is a function of the second and fourth powers of the signal magnitude~\cite{varasteh2017wireless}. Recent research on EH non-linearities~\cite{8115220,7547357} has shown that energy models that do not account for these non-linearities result in inaccurate estimates of the harvested energy. The impact of the energy harvester non-linearities on the fundamental limits of SIET in the finite block-length regime has been studied in~\cite{zuhraITW2}. 
In~\cite{9241856}, a learning based approach is proposed for modeling the EH circuits where the memory of non-linear elements in the EH circuit is modeled as a Markov decision process.

The third line of research covers the system design, implementation, and applications of SIET networks.
Optimal waveform design for SIET from a multi-antenna transmitter to multiple single antenna receivers is studied in~\cite{9447959}.
Signal and system design exclusively for wireless energy transmission has been studied in~\cite{7547357,9411899,9184149,7867826, 9447237} and~\cite{9153166}. In~\cite{9377479}, the authors optimize resource allocation and beamforming for intelligent reflecting surfaces aided SIET. 
 An algorithm for designing a circular quadrature amplitude modulation scheme for SIET that maximizes the peak-to-average power ratio has been proposed in~\cite{9593249}. 
 
More comprehensive overviews of the work on SIET in the second and third categories detailed above, can be found in~\cite{survey,surveyA},~\cite{8476597}, and~\cite{ClerckxFoundations}. A comparison of relevant aspects of the existing literature on SIET with this work is provided in Table~\ref{TableSOA} below. A tick mark indicates that the specific factor has been taken into consideration while a dash indicates the opposite.

\begin{table*}[t] 
\captionof{table}{Summary of the state-of-art} 
\centering
\begin{tabular}{ | m{1.5 cm} | m{1.3 cm} |m{1.7 cm} |m{1.5 cm} | m{3cm}| m{1cm}| m{1cm}|} 
 \hline
  \textbf{Reference}  & \textbf{Channel inputs} & \textbf{Block-length} & \textbf{EH non-linearities} & \textbf{Channel} & \textbf{DEP} & \textbf{EOP} \\
 \hline
 \hline
\cite{varshney2008transporting}       & Infinite                    & Asymptotic			& -				& DMC, AWGN & - & -\\
\hline
\cite{GroverSahai}  		      &Infinite			 & 	Asymptotic			& 	-			&  AWGN + Frequency-selective fading & - & - \\
\hline
\cite{NizarItw2021} 			& Infinite                    &     Asymptotic                            & \checkmark 		& Rayleigh fading & - & - \\
\hline
\cite{6373669} 				& Infinite 	               & Asymptotic				& 	-			& Flat fading & - & - \\
\hline
\cite{9241856} 		                & Infinite		       	       & Asymptotic		& -				& AWGN& - & - \\
 \hline
 \cite{9149424} 		        & Finite		       	       & 	Asymptotic			& -				& AWGN + Fading& - & - \\
 \hline
 \cite{9377479} 		        & Finite		       	       & 	Asymptotic			& -				& AWGN + Flat-fading& - & - \\
 \hline
 \cite{7998252} 		        & Finite		       	              & -			& 	-			& AWGN + Rayleigh fading & \checkmark & - \\
 \hline
  \cite{9593249} 		& Finite		       	              & -			& -				& AWGN& -  & - \\
 \hline
    \cite{varasteh2017wireless} 	& Infinite		       	              & Asymptotic	& \checkmark			& AWGN& - & - \\
 \hline
     \cite{8115220} 		        & Infinite		       	              & Asymptotic	& \checkmark			& Multi-path fading & - & - \\
     \hline
   \cite{9741251}			&	Uncountable	&	Asymptotic	&	-		&	AWGN + Rayleigh fading	& \checkmark & -\\
  \hline
   \cite{9734045}			&	Uncountable	&	Asymptotic	& -		&	AWGN + Rayleigh and Rician fading	& - & - \\
  \hline
   \cite{liu2022joint}			&	Uncountable	&	Asymptotic	&	-		&	AWGN	& -& -\\
   \hline
   \cite{9502719}			&	Uncountable	&	Asymptotic	&	\checkmark		&	Multi-path fading	& - & -\\     
 \hline
    \cite{6489506}			&	Uncountable	&	Asymptotic	&	-		&	Multi-path fading	& - & -\\     
 \hline
     \cite{ 7063588}			&	Uncountable	&	Asymptotic	&	-		&	AWGN + Flat-fading	& - & -\\     
 \hline
      \cite{9169700 }			&	Finite	&	Finite	&	 \checkmark			&	AWGN 	& - & -\\     
 \hline
  \cite{perlaza2018simultaneous}               & Finite                         & Finite              & 	-			& BSC & \checkmark & \checkmark\\
\hline
 \cite{khalfet2019ultra} 	& Finite		       	              & Finite			& 	-			& BSC& - & -\\
 \hline
      \textbf{This work} 	& Finite		       	              & Finite	& \checkmark			& AWGN& \checkmark & \checkmark\\
        \hline
 \end{tabular} \label{TableSOA}
 \end{table*}

\subsection{Contributions}
This work provides a comprehensive characterization of the relationships between the parameters of short packet codes with finite channel input symbols (finite constellations, e.g., QPSK, 16-QAM, 64-QAM, among others) for simultaneous transmission of information and energy.
The main contributions can be summarized as follows.
\begin{itemize}
\item The first set of results, presented in Section~\ref{sec:results}, provide bounds on the information transmission rate, energy transmission rate, DEP and EOP of any given code over a point-to-point AWGN channel.
All these bounds emerge to be functions of common code parameters such as the block-length and the type.
Together these bounds form the necessary conditions that any code must satisfy for accomplishing the tasks of information and energy transmission simultaneously.
Through the dependence on common parameters of the code, these bounds also characterize the trade-offs between information rate, energy rate, DEP and EOP of a given code.
\item A method for constructing codes for SIET over short packet communication is proposed.
Using this method, a family of codes can be designed to achieve given rates of information and energy transmission while also ensuring that the DEP and EOP are kept within the required thresholds.
The feasible tuples of information rate, energy rate, DEP and EOP that can be achieved by the codes thus constructed are also characterized.
\item The performance of the constructed codes is evaluated by comparing the achievability results with the identified necessary conditions.
Comprehensive numerical analyses are also conducted to illustrate the trade-offs between the competing objectives of information and energy transmission identified by the results in this work.
\end{itemize}

%
  
\subsection{Notation} \label{SecNotation}
The sets of natural, real and complex numbers are denoted by $\ints$, $\reals$ and $\complex$, respectively. In particular, $0 \notin \ints$.
Random variables and random vectors are denoted by uppercase letters and uppercase bold letters, respectively. Scalars are denoted by lowercase letters and vectors by lowercase bold letters. The real and imaginary parts of a complex number $c \in \complex$ are denoted by  $\Re(c)$ and $\Im(c)$, respectively. The complex conjugate of $c \in \complex$ is denoted by $c^\star$ and the magnitude of $c$ is denoted by $|c|$.  The imaginary unit is denoted by $\mathrm{i}$, \ie, $\mathrm{i}^2 = -1$.  
The empty set is denoted by $\emptyset$. The $\sinc$ function is defined as follows
\begin{IEEEeqnarray}{rCl} \label{EqSinc}
\sinc(t) \triangleq \frac{\sin(\pi t)}{\pi t}.
\end{IEEEeqnarray}
The tail distribution function of the standard normal distribution is referred to as the $Q$-function where 
\begin{IEEEeqnarray}{rCl} \label{DefQfunc}
\mathrm{Q}(x) = \int_x^\infty \frac{1}{\sqrt{2 \pi}} \exp \left(-\frac{t^2}{2} \right) \mathrm{d}t.
\end{IEEEeqnarray}
\section{System Model}  \label{sec:system_model}
Consider a communication system formed by a transmitter, an information receiver (IR), and an energy harvester (EH). The objective of the transmitter is to simultaneously send information to the IR at a rate of $R$ bits per second while also delivering $B$ Joules of energy to the EH over an AWGN channel.
 The transmission takes place over a finite duration of $n \in \ints$ channel uses.
 Thus, the rate of energy transmission is $B/n$ Joules per channel use.
 The transmitter uses $L$ symbols from the set of channel input symbols
\begin{equation}\label{EqCIsymbols}
\mathcal{X} \triangleq \{x^{(1)}, x^{(2)}, \ldots, x^{(L)}\} \subset \mathds{C}.
\end{equation} 
For all $m \inCountK{n}$, denote by $\nu_m \in \mathcal{X}$, the symbol to be transmitted during channel use $m$.
Denote the vector of channel input symbols transmitted over $n$ channel uses by
\begin{IEEEeqnarray}{rCl} \label{Eqnu}
\boldsymbol{\nu} = (\nu_1, \nu_2, \ldots, \nu_n )^{\sf{T}}.
\end{IEEEeqnarray}
The baseband frequency of the transmitter in Hertz (Hz) is $f_w$. Denote by $T \triangleq \frac{1}{f_w}$, the duration of a channel use in time units. Hence, the transmission takes place over $nT$ time units.
The complex baseband signal at time $t$, with $t \in [0,nT]$ is given by
\begin{IEEEeqnarray}{rCl}
x(t) &=& \sum_{m=1}^n  \nu_m \sinc \left( f_w \left(t - (m-1)T\right) \right), \label{Eq4} 
\end{IEEEeqnarray}
where the $\sinc$ function is defined in~\eqref{EqSinc}. The signal $x(t) $ in~\eqref{Eq4} has a bandwidth of $\frac{f_w}{2} > 0$ Hz. Let $f_c > \frac{f_w}{2}$ denote the center frequency of the transmitter. The RF signal input to the channel at time $t$, denoted by $\tilde{x}(t)$, is obtained by the frequency up-conversion of the baseband signal $x(t)$ in~\eqref{Eq4} as follows:
\begin{IEEEeqnarray}{rcl}
\tilde{x}(t) &=& \Re \left(x(t) \sqrt{2} \exp(\mathrm{i} 2 \pi f_c t) \right), \label{EqXrf} 
\end{IEEEeqnarray}
where $\mathrm{i}$ is the complex unit.
The RF outputs of the AWGN channel at time $t \in [0,nT]$ are the random variables
\begin{subequations}
\begin{IEEEeqnarray}{rCl}
   \label{EqYt} Y(t) & = & \tilde{x}(t) + N_1(t), \mbox{ and }  \\
    Z(t) & = & \tilde{x}(t) + N_2(t), \label{Eq7b}
\end{IEEEeqnarray}
\end{subequations}
where, for all $t \in [0,nT]$, the random variables $N_1(t)$ and $N_2(t)$ represent real white Gaussian noise with zero mean and variance $\sigma^2$; and the random variables $Y(t)$ and $Z(t)$ are the inputs to the IR and the EH, respectively. 

At the IR, the received signal $Y(t)$ in~\eqref{EqYt} is first multiplied with $\sqrt{2} \exp(-\mathrm{i} 2 \pi f_c t)$ to obtain the down-converted output. The down-converted output is then passed through a unit gain low pass filter with impulse response $f_w \sinc \left( f_w t \right)$ that has a cut-off frequency of $\frac{f_w}{2}$ Hz to obtain the complex baseband equivalent of $Y(t)$. This is followed by ideally sampling the complex baseband output at intervals of $1/f_w$. 
The resulting discrete time baseband output at the end of $n$ channel uses is given by the following random vector~\cite[Section $2.2.4$]{tseV}:
\begin{IEEEeqnarray}{rCl} \label{EqChannelModel}
{\boldsymbol Y} & = & \boldsymbol{\nu} + {\boldsymbol N},
\end{IEEEeqnarray}
where the vector ${\boldsymbol Y} = (Y_1,Y_2, \ldots, Y_n)^{\sf{T}} \in \mathds{C}^{n}$ is the input to the IR;  the vector of channel input symbols $\boldsymbol{\nu}$ is in~\eqref{Eqnu}; and the noise vector $\boldsymbol{N} = (N_{1}, N_{2}, \ldots, N_{n})^{\sf{T}}\in \mathds{C}^{n}$ is such that, for all $m \inCountK{n}$, the random variable $N_{m}$ is a complex circularly symmetric Gaussian random variable whose real and imaginary parts have zero means and variances $\frac{1}{2}\sigma^2$. Moreover, the random variables $N_1, N_2, \ldots, N_n$ are mutually independent (see~\cite[Section $2.2.4$]{tseV}). 
Therefore, for all $\boldsymbol{y} = (y_1, y_2, \ldots, y_n)^{\sf{T}} \in \mathds{C}^{n}$, and for all $\boldsymbol{\nu} = (\nu_1, \nu_2, \ldots, \nu_n)^{\sf{T}} \in \mathds{C}^{n}$, the conditional probability density function of the channel output $\boldsymbol{Y}$ in~\eqref{EqChannelModel} is given by
\begin{subequations}\label{EqYXdistribution}
\begin{IEEEeqnarray}{rCl} 
    f_{\boldsymbol{Y}|\boldsymbol{X}}(\boldsymbol{y}|\boldsymbol{\nu}) & = & \prod_{m=1}^n f_{Y|X}(y_m|\nu_m),
\end{IEEEeqnarray}
where, for all $m \in \lbrace 1,2, \ldots, n \rbrace$,
\begin{IEEEeqnarray}{lcl}
f_{Y|X}(y_m|\nu_m)
\label{Eq4a}\label{EqDensities}
& = & \frac{1}{\pi \sigma^2}\exp \left( - \frac{\left| y_m - \nu_m \right|^2}{\sigma^2} \right). 
 \end{IEEEeqnarray}
\end{subequations}
At the EH, the RF signal $Z(t)$ in~\eqref{Eq7b} is not down-converted or filtered. See for instance~\cite{7547357,8115220}. 


\subsection{Information and Energy Transmission} \label{SubsecInformationEnergyTransmission}
Let $M$ be the cardinality of the set of symbols at the output of the information source.
To accomplish the tasks of information and energy transmission, the transmitter makes use of an $(n,M,\mathcal{X})$-code defined as follows. 
\begin{definition}[$(n,M,\mathcal{X})$-code] \label{DefNmCode}
An $(n,M,\mathcal{X})$-code, with $\mathcal{X}$ in~\eqref{EqCIsymbols}, is a set of pairs
\begin{equation}
    \left \lbrace ({\boldsymbol u}(1),\mathcal{D}_1), ({\boldsymbol u}(2),\mathcal{D}_2), \ldots, ({\boldsymbol u}(M),\mathcal{D}_M)\right \rbrace,
\end{equation}
such that for all $(i,j) \in \{1,2, \ldots, M\}^2$, with $i\neq j$,
\begin{subequations}\label{EqCodeProperties}
\begin{align}
\label{eq:u_i}  &{\boldsymbol u}(i) = (u_1(i), u_2(i), \ldots, u_n(i)) \in \mathcal{X}^n, \\
&\mathcal{D}_i \subseteq \mathds{C}^n,  \mbox{ and } \\
&\mathcal{D}_i \cap \mathcal{D}_j = \emptyset.
\end{align}
\end{subequations}
\end{definition}
Note that Definition~\ref{DefNmCode} specifies the exact set of channel input symbols via the set $\mathcal{X}$.
Thus, constraints on the average power or peak-amplitude power are vacuous as those are fixed by the definition of $\mathcal{X}$.  

The associated encoding, decoding and energy harvesting operations for a given $(n,M,\mathcal{X})$-code are defined as follows.
\subsubsection{Encoding and decoding} 
Assume that the transmitter uses the $(n,M,\mathcal{X})$-code
\begin{equation} \label{Eqnm_code}
    \mathscr{C} \triangleq \{({\boldsymbol u}(1),\mathcal{D}_1), ({\boldsymbol u}(2),\mathcal{D}_2), \ldots, ({\boldsymbol u}(M),\mathcal{D}_M)\},
\end{equation}
that satisfies~\eqref{EqCodeProperties}.

To transmit message index $i$, with $i \in \lbrace 1,2, \ldots, M \rbrace$, the transmitter uses the codeword ${\boldsymbol u}(i)= (u_1(i), u_2(i), \ldots, u_n(i))$. That is, at channel use $m$, with $m \in \lbrace 1,2, \ldots, n\rbrace$, the transmitter inputs the RF signal corresponding to symbol $u_{m}(i)$ into the channel. At the end of $n$ channel uses, the IR observes a realization of the random vector ${\boldsymbol Y} = (Y_1, Y_2, \ldots, Y_n)^{\sf{T}}$ in~\eqref{EqChannelModel}.
Let $W$ be a random variable that represents the output of the information source. 
For all $i \inCountK{M}$, the probability of transmitting codeword ${\boldsymbol u}(i)$ is 
\begin{IEEEeqnarray}{rCl} \label{Eqprior}
P_W \left(i \right) = \frac{1}{M},
\end{IEEEeqnarray}
which represents a maximum entropy information source.
 The IR decides that message index $i$ was transmitted, if the following event takes place:
\begin{equation}
    {\boldsymbol Y} \in \mathcal{D}_i,
\end{equation}
with $\mathcal{D}_i$ in~\eqref{Eqnm_code}.
Therefore, the DEP associated with the transmission of message index $i$ is 
\begin{IEEEeqnarray}{rCl}
\label{EqDEPi}
    \gamma_i(\mathscr{C}) 
    &\triangleq& 1 - \int_{\mathcal{D}_i} f_{\boldsymbol{Y}|\boldsymbol{X}}(\boldsymbol{y}|\boldsymbol{u}(i)) \mathrm{d}\boldsymbol{y},
\end{IEEEeqnarray}
with the conditional density $f_{\boldsymbol{Y}|\boldsymbol{X}}$ defined in \eqref{EqYXdistribution}.
Alternatively, the average DEP for code $\mathscr{C}$ is
\begin{IEEEeqnarray}{rCl} 
    \label{eq:gamma} 
    \gamma(\mathscr{C}) &\triangleq& \frac{1}{M}\sum_{i = 1}^M \gamma_i(\mathscr{C}). 
\end{IEEEeqnarray}

The information transmission rate of any $(n,M,\mathcal{X})$-code $\mathscr{C}$, denoted by $R(\mathscr{C})$, satisfies
\begin{equation} \label{EqR}
    R(\mathscr{C}) \triangleq \frac{\log_2 M}{n},
\end{equation}
in bits per channel use. 

\subsubsection{Energy Harvesting}
The channel output observed at the EH while transmitting codeword $\boldsymbol{u}(i)$, with $i \inCountK{M}$, is denoted by $Z_{i}(t)$, with $t \in [0,nT]$. From~\eqref{EqXrf} and~\eqref{Eq7b}, such a signal $Z_{i}(t)$ is 
\begin{IEEEeqnarray}{rcl} 
\label{Eq37} Z_{i}(t) &=&
\Re \Big(\sqrt{2} \sum_{m=1}^n u_m(i) \\
&&\sinc \left( f_w (t - (m-1)T) \right) \exp \left(\mathrm{i} 2 \pi f_c t \right) \Big) + N_2(t), \nonumber 
\end{IEEEeqnarray}
where, for all $m \inCountK{n}$, the complex $u_m(i)$ is the $m$\ts{th} symbol of the codeword $\boldsymbol{u}(i)$ in~\eqref{eq:u_i} and the random variable $N_2(t)$ is a real Gaussian random variable with zero mean and variance $\sigma^2$.
For all $i \inCountK{M}$ and $t \in [0,nT]$, denote by $x_{i}(t)$ the signal 
\begin{IEEEeqnarray}{l} \label{Eq39}
 x_{i}(t) = \\
 \Re \Big(\sqrt{2} \sum_{m=1}^n u_m(i) \sinc \left( f_w (t - (m-1)T) \right) \exp \left(\mathrm{i} 2 \pi f_c t \right) \Big). \nonumber
 \end{IEEEeqnarray}
 From~\eqref{Eq37} and~\eqref{Eq39} it follows that, 
 \begin{IEEEeqnarray}{rCl} 
Z_{i}(t) &=& x_{i}(t) + N_2(t). \label{Eq38}
\end{IEEEeqnarray}

Thus, for all $t \in [0,nT]$, the channel output $Z_i(t)$ in~\eqref{Eq38} is a real Gaussian random variable with mean $x_i(t)$ and variance~$\sigma^2$.
 
The energy harvesting model used in this work is the non-linear model introduced in~\cite{8115220} and~\cite{7547357}. This model states that the energy harvested from an RF signal is proportional to the DC component of the second and fourth powers of the signal. Hence, for all $i\inCountK{M}$, the expected energy harvested from the channel output $Z_i(t)$ in \eqref{Eq37} during the whole duration of the source symbol $i$ is 
\begin{IEEEeqnarray}{rCl} \label{Eqei}
e_i &\triangleq& k_1 \sum_{m=1}^n \left| u_m(i) \right|^2 + k_2 \sum_{m=1}^n \left| u_m(i) \right|^4, 
\end{IEEEeqnarray}
where 
the constants $k_1$ and $k_2$ are positive, with $k_1 = 0.0034$ and $k_2 = 0.3829$~\cite{8115220}; and for all $m \inCountK{n}$, the complex number $u_m(i)$ is in~\eqref{eq:u_i}. 

The energy harvested during the transmission of a codeword, i.e., during $n$ channel uses,  is a random variable denoted by~$E$.
The probability of harvesting energy $e$ given that codeword $\boldsymbol{u}(i)$ was transmitted is given by
\begin{IEEEeqnarray}{rCl} \label{Eq34}
P_{E | W} \left(e|i \right) = \mathds{1}_{\lbrace e = e_i \rbrace}, 
\end{IEEEeqnarray}
where the random variable  $W$ possesses the probability mass function defined in~\eqref{Eqprior}; and the energy level $e_i$ is defined in~\eqref{Eqei}.
The equality in~\eqref{Eq34} follows from the fact that once the transmitted codeword is known to be $\boldsymbol{u}(i)$, the energy harvested is a deterministic quantity $e_i$, as defined in~\eqref{Eqei}.
This is essentially due to the fact that the noise $N_2(t)$ in \eqref{Eq38} does not contribute any significant amount of energy. The signal from which energy is harvested is $x_{i}(t)$.
Therefore, the probability mass function of the harvested energy, given that codeword $\boldsymbol{u}(i)$ was transmitted is a point mass concentrated at $e_i$.

The probability mass function of the random variable $E$, denoted by $P_E$ is given by the following:
\begin{IEEEeqnarray}{rCl} \label{Eq35}
P_{E} \left(e\right) &=& \sum_{i=1}^M P_{E | W} \left(e|i \right) P_W \left(i \right) \\
&=& \frac{1}{M} \sum_{i=1}^M \mathds{1}_{\lbrace e = e_i \rbrace}. \label{Eq35b}
\end{IEEEeqnarray}
Let $B \in \reals$ be the energy required at the EH. Then, using~\eqref{Eq35}, the EOP associated with code $\mathscr{C}$ is the probability of transmitting a message $i \in \lbrace 1, 2, \ldots, M\rbrace$, such that $e_i$ in \eqref{Eqei} satisfies $e_i < B$. That is,  
\begin{IEEEeqnarray}{rCl} \label{eq:theta_def}
 \theta(\mathscr{C},B) &\triangleq&
 \sum_{i \in \lbrace j \inCountK{M}: e_j < B \rbrace}  P_E (e_i) \\
 &=& \sum_{i \in \lbrace j \inCountK{M}: e_j < B \rbrace} \frac{1}{M} \sum_{i=1}^M \mathds{1}_{\lbrace e = e_i \rbrace} \label{Eq37b} \\
  &=& \frac{1}{M} \Big| \lbrace i \inCountK{M}: e_i<B \rbrace \Big| \\
  & = & \frac{1}{M} \sum_{i=1}^M \mathds{1}_{\lbrace e_i < B \rbrace}, \label{Eq25}
 \end{IEEEeqnarray}
 where the equality in~\eqref{Eq37b} follows from~\eqref{Eq35b}. 

Definition~\ref{DefNmCode} can now be refined to include the DEP and the EOP as follows.
\begin{definition}[$(n,M,\mathcal{X},\epsilon,B,\delta)$-code] \label{def:nmed_code}
An $(n,M,\mathcal{X})$-code~$\mathscr{C}$ is said to be an $(n,M,\mathcal{X},\epsilon,B,\delta)$-code for the random transformation in~\eqref{EqYXdistribution}  if  the following hold:
\begin{IEEEeqnarray}{rCl}
\label{EqGammaUpperbound}
\gamma(\mathscr{C}) &\leq& \epsilon, \mbox{and} \\
\label{eq:delta}
    \theta(\mathscr{C},B) &\leq& \delta,
\end{IEEEeqnarray}
where $\gamma(\mathscr{C})$ and $\theta(\mathscr{C},B)$ are defined in \eqref{eq:gamma} and \eqref{eq:theta_def}, respectively.
\end{definition}

The $(n,M,\mathcal{X},\epsilon,B,\delta)$-codes thus defined can be employed for transmitting at most $B$ energy units in $n$ channel uses and at most $\frac{\log_2 M}{n}$ bits per channel use, while ensuring that the resulting DEP is smaller than $\epsilon$ and the EOP is smaller than~$\delta$.

The primary objective of what follows in Section~\ref{sec:results} is a characterization of the trade-offs between the various information and energy parameters of $(n,M,\mathcal{X},\epsilon,B,\delta)$-codes.
These trade-offs are presented in the form of inequalities involving the code parameters $n,M,\mathcal{X},\epsilon,B$ and $\delta$.
This characterization serves two main purposes.
First, these inequalities define what tuples $(n,M,\mathcal{X},\epsilon,B,\delta)$ are ``\emph{possible}".
In other words, these inequalities define the necessary conditions that an $(n,M,\mathcal{X},\epsilon,B,\delta)$-code must satisfy.
Second, the dependence of $M, \epsilon, B, \delta$ on common parameters precisely quantifies the trade-offs between the objectives of information and energy transmission.


%
\section{Necessary Conditions} \label{sec:results}
This section introduces some  inequalities involving the information and energy rates, the DEP and the EOP that any $(n,M,\mathcal{X},\epsilon,B,\delta)$-code satisfies. From this perspective, these inequalities are recognized to be necessary conditions for SIET.   
Furthermore, these inequalities quantify the mutual dependence between the information rate, the energy rate, the DEP and the EOP of $(n,M,\mathcal{X},\epsilon,B,\delta)$-codes, which in turn, provides insights into the feasibility of SIET for specific reliability guarantees.
For a specific $(n,M,\mathcal{X},\epsilon,B,\delta)$-code, e.g., the code $\mathscr{C}$ in~\eqref{Eqnm_code}, this mutual dependence is presented in terms of the \emph{type} induced by each of the codewords of such a code. 
The type induced by the codeword $\boldsymbol{u}(i)$, with $i \inCountK{M}$, is a probability mass function, denoted by $P_{\boldsymbol{u}(i)}$, such that for all $x \in \mathcal{X}$,
\begin{equation} \label{eq:u_measure}
    P_{\boldsymbol{u}(i)}(x) \triangleq \frac{1}{n} \sum_{m=1}^n \mathds{1}_{\{u_m(i) = x\}}.
\end{equation}
The type induced by all the codewords in $\mathscr{C}$ is also a probability mass function on the set $\mathcal{X}$ in~\eqref{EqCIsymbols}. This type is denoted by  $P_{\mathscr{C}}$ and  for all $x \in \mathcal{X}$,
\begin{equation} \label{eq:p_bar}
    P_{\mathscr{C}}(x) \triangleq \frac{1}{M} \sum_{i=1}^M P_{\boldsymbol{u}(i)}(x).
\end{equation}
For the ease of presentation, the main results are presented for a specific class of $(n,M,\mathcal{X},\epsilon,B,\delta)$-codes, referred to as \emph{constant composition} codes. Nonetheless, the results presented in this section can be easily extended to the whole class of $(n,M,\mathcal{X},\epsilon,B,\delta)$-codes. 
\begin{definition}[Constant Composition Codes]\label{DefHC}
An $(n,M,\mathcal{X})$-code $\mathscr{C}$ is said to be a constant composition code if for all $i \inCountK{M}$ and for all $x \in \mathcal{X}$, it holds that
\begin{equation}\label{EqHomogeneousCodes}
 P_{\boldsymbol{u}(i)}(x) =  P_{\mathscr{C}}(x),
\end{equation}
where, $P_{\boldsymbol{u}(i)}$ and $P_{\mathscr{C}}$ are the types defined in~\eqref{eq:u_measure} and~\eqref{eq:p_bar}, respectively.
\end{definition}
Constant composition codes are $(n,M,\mathcal{X})$-codes in which  a given channel input symbol is used the same number of times in all the codewords.
As a consequence, for a constant composition code $\mathscr{C}$, the amount of  energy $e_i$ in \eqref{Eqei} harvested while transmitting the codeword $i \in \lbrace 1,2, \ldots, M \rbrace$, is invariant with respect to  $i$. That is, for all $i \inCountK{M}$,
 \begin{IEEEeqnarray}{rCl} 
 \label{EqeiHomogeneous}
 e_i = e_\mathscr{C},
  \end{IEEEeqnarray}
and the positive real $e_\mathscr{C}$ satisfies  
 \begin{IEEEeqnarray}{rCl} \label{EqeiHomogeneousb}
e_\mathscr{C} \triangleq k_1 \sum_{x \in \mathcal{X}} n P_{\mathscr{C}} \left( x \right) \left| x \right|^2 + k_2 \sum_{x \in \mathcal{X}}n P_{\mathscr{C}} \left( x \right) \left| x \right|^4,
 \end{IEEEeqnarray}
where the constants $k_1$ and $k_2$ are the same as in \eqref{Eqei};  $\mathcal{X}$ in~\eqref{EqCIsymbols}; and $P_{\mathscr{C}}$ in~\eqref{eq:p_bar}.

The following theorem introduces a first caracterization of the dependences among the performance metrics of an $(n,M,\mathcal{X},\epsilon,B,\delta)$-code.
\begin{theorem} \label{TheoremImpossibility}
If a given constant composition $(n,M,\mathcal{X})$-code,~$\mathscr{C}$, with $\mathcal{X}$ in \eqref{EqCIsymbols}, is an $(n,M,\mathcal{X},\epsilon,B,\delta)$-code for the random transformation in~\eqref{EqYXdistribution}, then the following conditions simultaneously hold:
\begin{subequations}
\label{EqImpossibility}
\begin{IEEEeqnarray}{l}
 \label{EqeImp}
 \epsilon \geq (M-1) \mathrm{Q} \left( \sqrt{\frac{\sum_{\ell=1}^L n P_{\mathscr{C}}(x^{(\ell)}) \left| x^{(\ell)} - \bar{x}^{(\ell)} \right|^2}{2 \sigma^2}} \right), \\
\label{EqRbound}
R(\mathscr{C}) \leq \frac{1}{n} \log_2 \left( \frac{n!}{\prod_{\ell=1}^L (nP_{\mathscr{C}}(x^{(\ell)}))!}\right), \\
 \label{EqdImp}
 \delta \geq \mathds{1}_{\lbrace e_\mathscr{C} < B \rbrace}, \\
 \label{EqBImp}
 B \leq e_\mathscr{C},
\end{IEEEeqnarray}
\end{subequations}
where, the information rate $R(\mathscr{C})$ is defined in \eqref{EqRbound}; the function $Q$ is defined in \eqref{DefQfunc}; 
for all $\ell \inCountK{L}$, the complex $\bar{x}^{(\ell)} \in \mathcal{X}$ is \begin{IEEEeqnarray}{rCl} \label{EqNeighbor}
\bar{x}^{(\ell)} \in \arg\max_{x \in \mathcal{X} \setminus \lbrace x^{(\ell)} \rbrace} \left|x^{(\ell)} - x \right|;
\end{IEEEeqnarray} 
$P_{\mathscr{C}}$ is the type defined in~\eqref{eq:p_bar}; and 
$e_\mathscr{C}$ is defined in~\eqref{EqeiHomogeneousb}.
\end{theorem}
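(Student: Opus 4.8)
The plan is to establish the four inequalities separately, since each concerns a different performance metric and the constant-composition hypothesis largely decouples them. The energy and EOP conditions are the most direct. Because $\mathscr{C}$ has constant composition, \eqref{EqeiHomogeneous} gives $e_i = e_{\mathscr{C}}$ for every $i$, so the EOP in \eqref{Eq25} collapses to $\theta(\mathscr{C},B) = \frac{1}{M}\sum_{i=1}^M \mathds{1}_{\{e_{\mathscr{C}} < B\}} = \mathds{1}_{\{e_{\mathscr{C}} < B\}}$. Combining this with the defining constraint $\theta(\mathscr{C},B)\leq\delta$ in \eqref{eq:delta} immediately yields \eqref{EqdImp}. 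Inequality \eqref{EqBImp} then follows by observing that an outage which is not a certain event forces $\mathds{1}_{\{e_{\mathscr{C}} < B\}} = 0$, i.e. $B \leq e_{\mathscr{C}}$; equivalently, if $B > e_{\mathscr{C}}$ then \eqref{EqdImp} would require $\delta \geq 1$.

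For the rate bound \eqref{EqRbound} I would argue by counting. The $M$ codewords may be taken distinct (two identical codewords have, by \eqref{EqYXdistribution} and the disjointness of their decoding regions, a combined error of at least one, so repetition is never advantageous), and by the constant-composition property each uses the symbol $x^{(\ell)}$ exactly $n P_{\mathscr{C}}(x^{(\ell)})$ times. Hence the codewords form a subset of the set of length-$n$ sequences of fixed type $P_{\mathscr{C}}$, whose cardinality is the multinomial coefficient $n!/\prod_{\ell=1}^L (nP_{\mathscr{C}}(x^{(\ell)}))!$. Thus $M$ is at most this coefficient, and \eqref{EqRbound} follows after taking $\frac{1}{n}\log_2(\cdot)$ and invoking \eqref{EqR}.

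The substantive inequality is the DEP bound \eqref{EqeImp}, for which the route is a pairwise converse. Since $\gamma(\mathscr{C}) \leq \epsilon$ by \eqref{EqGammaUpperbound}, it suffices to lower bound $\gamma(\mathscr{C})$. For a fixed $i$, disjointness of the decoding regions gives $\gamma_i(\mathscr{C}) = 1 - \int_{\mathcal{D}_i} f_{\boldsymbol{Y}|\boldsymbol{X}}(\boldsymbol{y}|\boldsymbol{u}(i))\mathrm{d}\boldsymbol{y} \geq \sum_{j\neq i} \int_{\mathcal{D}_j} f_{\boldsymbol{Y}|\boldsymbol{X}}(\boldsymbol{y}|\boldsymbol{u}(i))\mathrm{d}\boldsymbol{y}$, a sum of $M-1$ cross-probabilities. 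I would then lower bound each cross-probability by the pairwise error probability between $\boldsymbol{u}(i)$ and $\boldsymbol{u}(j)$, which for the Gaussian transformation \eqref{EqYXdistribution} equals $\mathrm{Q}\bigl(\sqrt{\|\boldsymbol{u}(i)-\boldsymbol{u}(j)\|^2/(2\sigma^2)}\bigr)$, and uniformize over $j$ using the largest admissible distance. Here the farthest-symbol construction \eqref{EqNeighbor} enters: position by position, $|u_m(i)-u_m(j)| = |x^{(\ell)}-x^{(\ell')}| \leq |x^{(\ell)} - \bar{x}^{(\ell)}|$, so summing over the $nP_{\mathscr{C}}(x^{(\ell)})$ occurrences of each symbol gives $\|\boldsymbol{u}(i)-\boldsymbol{u}(j)\|^2 \leq \sum_{\ell=1}^L nP_{\mathscr{C}}(x^{(\ell)})|x^{(\ell)}-\bar{x}^{(\ell)}|^2$. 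As $\mathrm{Q}$ is decreasing, each of the $M-1$ terms is at least $\mathrm{Q}\bigl(\sqrt{\sum_\ell nP_{\mathscr{C}}(x^{(\ell)})|x^{(\ell)}-\bar{x}^{(\ell)}|^2/(2\sigma^2)}\bigr)$; summing, averaging over $i$, and using $\gamma(\mathscr{C})\leq\epsilon$ delivers \eqref{EqeImp}.

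The main obstacle is precisely the per-term bound $\int_{\mathcal{D}_j} f_{\boldsymbol{Y}|\boldsymbol{X}}(\boldsymbol{y}|\boldsymbol{u}(i))\mathrm{d}\boldsymbol{y} \geq \mathrm{Q}\bigl(\sqrt{\|\boldsymbol{u}(i)-\boldsymbol{u}(j)\|^2/(2\sigma^2)}\bigr)$: the region $\mathcal{D}_j$ is arbitrary, so relating the mass it carries under the $\boldsymbol{u}(i)$-centered Gaussian to the clean half-space pairwise error is the delicate point, and it is what legitimizes both the worst-case distance substitution and the factor $M-1$. I expect to handle it by exploiting that $\mathcal{D}_j$ must capture most of the $\boldsymbol{u}(j)$-mass (small $\gamma_j$) together with the spherical symmetry of the density in \eqref{EqYXdistribution} and the monotonicity of $\mathrm{Q}$; the remaining three inequalities are comparatively routine.
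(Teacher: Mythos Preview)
Your arguments for \eqref{EqdImp}, \eqref{EqBImp}, and \eqref{EqRbound} are correct and match the paper's proofs essentially line for line.

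For the DEP bound \eqref{EqeImp}, your route and the paper's diverge. The paper does not attempt to control the cross-probabilities $\int_{\mathcal{D}_j} f_{\boldsymbol{Y}|\boldsymbol{X}}(\boldsymbol{y}\,|\,\boldsymbol{u}(i))\,\mathrm{d}\boldsymbol{y}$ for the given (arbitrary) decoding sets. Instead it first replaces $\mathcal{D}_1,\ldots,\mathcal{D}_M$ by the MAP/minimum-distance regions in~\eqref{EqMAPregion}, invoking their optimality so that the resulting DEP lower bounds $\gamma(\mathscr{C})$; it then writes the MAP error directly as the sum $\sum_{j\neq i}\Pr\bigl(|\boldsymbol{Y}-\boldsymbol{u}(j)|<|\boldsymbol{Y}-\boldsymbol{u}(i)|\bigr)$ of pairwise error events, each of which evaluates exactly to a $\mathrm{Q}$-term, and finally bounds every pairwise distance by the extreme value built from the $\bar{x}^{(\ell)}$'s, as you outlined. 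So the reduction to the optimal decoder is the ingredient your plan is missing.

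Your proposed per-term inequality, by contrast, is not merely delicate but false as stated: Definition~\ref{DefNmCode} requires only disjointness, not that the $\mathcal{D}_j$ cover $\mathds{C}^n$, so $\mathcal{D}_j$ may have arbitrarily small measure, making $\int_{\mathcal{D}_j} f_{\boldsymbol{Y}|\boldsymbol{X}}(\boldsymbol{y}\,|\,\boldsymbol{u}(i))\,\mathrm{d}\boldsymbol{y}$ near zero while $\mathrm{Q}(\cdot)$ stays positive. The fix you sketch --- exploiting that $\mathcal{D}_j$ carries most of the $\boldsymbol{u}(j)$-mass --- is also unavailable, since nothing in the hypothesis forces individual $\gamma_j(\mathscr{C})$ to be small; only the average is constrained by~\eqref{EqGammaUpperbound}. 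Passing to the optimal decoder first is what the paper uses to dissolve this obstacle.
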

\begin{proof}
The proofs are presented in Appendix~\ref{AppendixA}.
\end{proof}
Theorem~\ref{TheoremImpossibility} introduces the fundamental limits 
on the main performance metrics, namely the information transmission rate, energy transmission rate, DEP, and EOP, of all constant composition $(n,M,\mathcal{X})$-codes. 
Essentially, if a given constant composition  $(n,M,\mathcal{X})$-code $\mathscr{C}$ is an $(n,M,\mathcal{X}, \epsilon, B,\delta)$-code, then 
the DEP $\epsilon$ cannot be smaller that the right-hand side of \eqref{EqeImp};
the information transmission rate $R(\mathscr{C})$ in \eqref{EqR} cannot be bigger than the right-hand side of \eqref{EqRbound};
the EOP $\delta$ cannot be smaller than the right-hand side of \eqref{EqdImp}; and
the energy transmitted cannot be bigger than the right-hand side of \eqref{EqBImp}.
These fundamental upper and lower bounds on the performance metrics depend on the type induced by the code $\mathscr{C}$.  
This observation reveals a profound conclusion: Two different  $(n,M,\mathcal{X})$-codes $\mathscr{C}_1$ and $\mathscr{C}_2$ of the form in \eqref{Eqnm_code}, might exhibit different codewords and different decoding regions, nonetheless, if they induce the same types, i.e., $P_{\mathscr{C}_1} = P_{\mathscr{C}_2}$, then their performance metrics to simultaneously transmit information and energy  are subject to identical constraints.
This reveals the relevance of the type $P_{\mathscr{C}}$ in \eqref{EqImpossibility} as the key parameter in code design for SIET.

In the remainder of this section, the necessary conditions on the information and energy transmission parameters of $(n,M,\mathcal{X},\epsilon,B,\delta)$-codes are discussed in further detail.

\subsection{DEP}
To minimize the DEP for the code $\mathscr{C}$, the decoding regions $\mathcal{D}_1$,  $\mathcal{D}_2$, $\ldots$, $\mathcal{D}_M$  are defined using the maximum a posteriori (MAP) decision rule~\cite[Chapter $21$]{lapidoth}, \ie, for all  $i \inCountK{M}$, 
 %
%
 \begin{IEEEeqnarray}{lcl}
\label{EqMAPregion}
\mathcal{D}_i 
 & = & \left\lbrace \boldsymbol{y} \in \complex : \forall \ j \inCountK{M},\frac{ f_{\boldsymbol{Y}|\boldsymbol{X}}(\boldsymbol{y}|\boldsymbol{u}(i))}{f_{\boldsymbol{Y}|\boldsymbol{X}}(\boldsymbol{y}|\boldsymbol{u}(j))} \geq 1 \right\rbrace. \IEEEeqnarraynumspace
\end{IEEEeqnarray}
The lower bound on the DEP, $\epsilon$, in \eqref{EqeImp}, is independent of the decoding regions $\mathcal{D}_1$,  $\mathcal{D}_2$, $\ldots$, $\mathcal{D}_M$ of the code $\mathscr{C}$. 
This lower bound is the exact DEP of a code in which all the channel input symbols in $\mathcal{X}$ are equidistant from one another, such as in the case of binary phase-shift keying.
Hence, the lower bound on the DEP, $\epsilon$, in \eqref{EqeImp} is tight for such a code.  

%

\subsection{Energy Transmission Rate and EOP}
The lower and upper bounds in \eqref{EqdImp} and \eqref{EqBImp}, which might appear redundant with respect to each other in the case of constant composition codes, are more informative in the general case as discussed below.
The following lemma provides a lower bound on $\delta$ for the general case of $(n,M,\mathcal{X},\epsilon,B,\delta)$-codes. 
\begin{lemma} \label{LemmaB}
If a given $(n,M,\mathcal{X})$-code,~$\mathscr{C}$, with $\mathcal{X}$ in \eqref{EqCIsymbols}, is an $(n,M,\mathcal{X},\epsilon,B,\delta)$-code for the random transformation in~\eqref{EqYXdistribution}, then the following holds for $\delta$ in~\eqref{eq:delta}:
\begin{IEEEeqnarray}{rCl}
 \label{eq:B_bound_lemma}
\delta \geq \frac{1}{M} \sum_{i=1}^M \mathds{1}_{\lbrace e_i < B \rbrace},
\end{IEEEeqnarray}
where, for all $i \inCountK{M}$, the real $e_i \in [0,\infty)$ is in~\eqref{Eqei}.
\end{lemma}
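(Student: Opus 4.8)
The plan is to recognize that this lemma is an immediate consequence of the way the EOP was defined in the system model together with the defining constraint of an $(n,M,\mathcal{X},\epsilon,B,\delta)$-code; in particular, no constant-composition hypothesis is needed. First I would recall that the EOP $\theta(\mathscr{C},B)$ was introduced in~\eqref{eq:theta_def} as the probability, under the maximum-entropy source $P_W$ in~\eqref{Eqprior}, of transmitting a message index $i$ whose harvested energy $e_i$ in~\eqref{Eqei} satisfies $e_i < B$. The chain of equalities~\eqref{Eq37b}--\eqref{Eq25} already reduces this probability to the closed form
\begin{IEEEeqnarray}{rCl}
\theta(\mathscr{C},B) &=& \frac{1}{M} \sum_{i=1}^M \mathds{1}_{\lbrace e_i < B \rbrace}.
\end{IEEEeqnarray}
I would re-derive this identity by substituting the point-mass conditional law $P_{E|W}$ of~\eqref{Eq34} into the marginal $P_E$ of~\eqref{Eq35} and then counting the indices $i$ for which $e_i < B$.

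With this closed form in hand, the second step invokes the hypothesis that $\mathscr{C}$ is an $(n,M,\mathcal{X},\epsilon,B,\delta)$-code, which by~\eqref{eq:delta} means precisely that $\theta(\mathscr{C},B) \leq \delta$. Combining the two statements yields
\begin{IEEEeqnarray}{rCl}
\delta &\geq& \theta(\mathscr{C},B) = \frac{1}{M} \sum_{i=1}^M \mathds{1}_{\lbrace e_i < B \rbrace},
\end{IEEEeqnarray}
which is exactly~\eqref{eq:B_bound_lemma}.

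There is no genuine obstacle here: the lemma simply records, for the general (non-constant-composition) case, the EOP lower bound that collapses to the single-term form $\mathds{1}_{\lbrace e_\mathscr{C} < B\rbrace}$ in~\eqref{EqdImp} of Theorem~\ref{TheoremImpossibility} once the energies $e_1,\dots,e_M$ are forced to coincide by~\eqref{EqeiHomogeneous}. The one point meriting care is the justification of the closed form above, namely that the harvested energy is \emph{deterministic} given the transmitted codeword; this is the content of~\eqref{Eq34}, which holds because the noise $N_2(t)$ in~\eqref{Eq38} contributes no significant energy, so that $P_E$ is a discrete measure supported on $\{e_1,\dots,e_M\}$ and the EOP is a genuine counting fraction. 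I would therefore keep the indicator sum over all $i$ intact throughout, since the $e_i$ need not be equal in the general setting.
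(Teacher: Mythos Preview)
Your proposal is correct and matches the paper's own proof, which simply states that the result follows from~\eqref{Eq25} and~\eqref{eq:delta}. You have merely spelled out the two steps in more detail, including the justification via~\eqref{Eq34} of why $\theta(\mathscr{C},B)$ reduces to the counting fraction~\eqref{Eq25}.
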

\begin{proof}
The result follows from~\eqref{Eq25} and~\eqref{eq:delta}.
\end{proof}
In Lemma~\ref{LemmaB}, the lower bound on the EOP, $\delta$, is the probability of the set of symbols that deliver energy that is smaller than the required value of $B$. That is, $\delta \geqslant P_{W} \left( \left\lbrace i \inCountK{M}: e_i < B \right\rbrace \right)$, with $P_{W}$ in \eqref{Eqprior}. Hence, if there exists at most one codeword $\boldsymbol{u}(i)$, for some $i \inCountK{M}$, such that $e_i < B$, then, the EOP is bounded away from zero. More specifically, $\delta \geqslant \frac{1}{M}$. If there are $\rho$ codewords $\boldsymbol{u}(\ell_1)$, $\boldsymbol{u}(\ell_2)$, $\cdots$, $\boldsymbol{u}(\ell_{\rho})$, such that for all $j \inCountK{\rho}$, $e_{\ell_j} < B$, then, $\delta \geqslant \frac{\rho}{M}$.
If all the codewords $\boldsymbol{u}(1)$, $\boldsymbol{u}(2)$, $\ldots$, $\boldsymbol{u}(M)$ are such that $\max_{i} e_{i} < B$, then the EOP is equal to one. That is, it is impossible to guarantee that~$B$ units of energy will be harvested at the EH. 
This reveals the entanglement between $B$ and $\delta$. 

To quantify the relationship between $B$ and $\delta$ , denote by $M' \leq M$, the number of unique values in the vector $\left(e_1, e_2, \ldots, e_M \right)^{\sf{T}}$ with $e_i$ in~\eqref{Eqei}. 
These $M'$ unique energy levels are represented by $\left\lbrace \bar{e}_1, \bar{e}_2, \ldots, \bar{e}_{M'} \right\rbrace \subset \reals$. 
More specifically, for all $i \inCountK{M}$, there exists $j \inCountK{M'}$ such that $e_i = \bar{e}_j$. 

Assume without loss of generality that the following holds:
\begin{IEEEeqnarray}{rCl} \label{EqUniqueLevels}
0 < \bar{e}_1 < \bar{e}_2 < \ldots < \bar{e}_{M'}.
\end{IEEEeqnarray}
Using this notation, the following lemma provides an upper bound on the energy requirement $B$ for any $(n,M,\mathcal{X},\epsilon,B,\delta)$-code as a function of the EOP $\delta$. 
\begin{lemma} \label{LemmaBnew}
Given an $(n,M,\mathcal{X},\epsilon,B,\delta)$-code $\mathscr{C}$ for the random transformation in~\eqref{EqYXdistribution}, with $\mathcal{X}$ in \eqref{EqCIsymbols}, let $j^{+} \in \ints$ be defined as follows.
\begin{IEEEeqnarray}{rcl}
j^{+} & \triangleq & \min \left\lbrace j \in \lbrace 1, \ldots M' \rbrace: \delta \leqslant \frac{\displaystyle\sum_{k=1}^{j}  \displaystyle\sum_{i=1}^M \mathds{1}_{\left\lbrace e_i = \bar{e}_k \right\rbrace }}{M} \right\rbrace,
\end{IEEEeqnarray}
where, the positive integer $M'$ and the reals $\bar{e}_1$, $\bar{e}_2$, $\ldots$, $\bar{e}_{M'}$ are in~\eqref{EqUniqueLevels}.
Then, the following holds for the energy requirement $B$:
\begin{IEEEeqnarray}{rCl} \label{EqConverseB}
B \leq \bar{e}_{j^{+}} 
\end{IEEEeqnarray}
\end{lemma}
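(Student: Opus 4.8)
The plan is to convert the energy-outage constraint satisfied by $\mathscr{C}$ into a statement about the first distinct energy level at which the empirical mass of the harvested energies reaches $\delta$, and then to read off $B$ from that level. Since $\mathscr{C}$ is an $(n,M,\mathcal{X},\epsilon,B,\delta)$-code, the inequality \eqref{eq:delta} holds, and together with the expression for the EOP in \eqref{Eq25} (equivalently, by Lemma~\ref{LemmaB}) this gives
\begin{IEEEeqnarray}{rCl}
\delta & \geq & \frac{1}{M}\sum_{i=1}^M \mathds{1}_{\lbrace e_i < B \rbrace}. \nonumber
\end{IEEEeqnarray}
I would interpret the right-hand side as a nondecreasing step function of $B$ whose jumps occur exactly at the distinct energy levels $\bar{e}_1 < \bar{e}_2 < \cdots < \bar{e}_{M'}$ of \eqref{EqUniqueLevels}: on an interval $B \in (\bar{e}_{j},\bar{e}_{j+1}]$ it equals the cumulative fraction $\frac{1}{M}\sum_{k=1}^{j}\sum_{i=1}^M \mathds{1}_{\lbrace e_i = \bar{e}_k \rbrace}$ of codewords delivering at most $\bar{e}_{j}$. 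By construction $j^{+}$ is the first index at which this cumulative fraction reaches $\delta$; the defining set is nonempty because the fraction equals $1 \geq \delta$ at $j = M'$, so $j^{+}$ is well defined.

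The core of the argument is a contradiction. Suppose $B > \bar{e}_{j^{+}}$. Every codeword $i$ whose energy equals one of the $j^{+}$ smallest distinct levels satisfies $e_i \leq \bar{e}_{j^{+}} < B$ and hence contributes to $\mathds{1}_{\lbrace e_i < B \rbrace}$; grouping the codewords by their distinct energy levels therefore yields
\begin{IEEEeqnarray}{rCl}
\frac{1}{M}\sum_{i=1}^M \mathds{1}_{\lbrace e_i < B \rbrace} & \geq & \frac{1}{M}\sum_{k=1}^{j^{+}}\sum_{i=1}^M \mathds{1}_{\lbrace e_i = \bar{e}_k \rbrace}. \nonumber
\end{IEEEeqnarray}
Denoting the right-hand side by $s \triangleq \frac{1}{M}\sum_{k=1}^{j^{+}}\sum_{i=1}^M \mathds{1}_{\lbrace e_i = \bar{e}_k \rbrace}$, and combining the defining property $s \geq \delta$ of $j^{+}$ with the consequence $\delta \geq \frac{1}{M}\sum_{i=1}^M \mathds{1}_{\lbrace e_i < B \rbrace}$ of Lemma~\ref{LemmaB}, I obtain the chain $\delta \geq \frac{1}{M}\sum_{i=1}^M \mathds{1}_{\lbrace e_i < B \rbrace} \geq s \geq \delta$. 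In the generic situation where the cumulative fraction at $j^{+}$ strictly exceeds $\delta$, i.e. $s > \delta$, this chain reads $\delta \geq \cdots \geq s > \delta$, a contradiction; hence $B \leq \bar{e}_{j^{+}}$, which is \eqref{EqConverseB}.

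The step requiring the most care is the boundary case in which the cumulative fraction at $j^{+}$ equals $\delta$ exactly, i.e. $s = \delta$. There the chain above collapses to equalities and is consistent with a value of $B$ lying strictly between $\bar{e}_{j^{+}}$ and the next level $\bar{e}_{j^{+}+1}$, since for such $B$ the outage fraction is still exactly $s = \delta$; so the inequalities alone do not force $B \leq \bar{e}_{j^{+}}$. Resolving this hinges on the precise convention in the definition of $j^{+}$: to make \eqref{EqConverseB} an exact threshold one reads $j^{+}$ as the first index at which the cumulative fraction \emph{strictly} exceeds $\delta$, so that the outage at $B = \bar{e}_{j^{+}}$ equals $\frac{1}{M}\sum_{k=1}^{j^{+}-1}\sum_{i=1}^M \mathds{1}_{\lbrace e_i = \bar{e}_k \rbrace} \leq \delta$, while any $B > \bar{e}_{j^{+}}$ pushes it above $\delta$; under this reading the bound is both valid and tight. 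Everything else is the routine bookkeeping of rewriting the count $\sum_{i=1}^M \mathds{1}_{\lbrace e_i < B \rbrace}$ over the distinct levels $\bar{e}_1,\ldots,\bar{e}_{M'}$.
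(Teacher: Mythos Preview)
Your argument is correct and follows essentially the same route as the paper's proof: both translate the EOP constraint $\theta(\mathscr{C},B)\leq\delta$ into the count $\frac{1}{M}\sum_{i=1}^M\mathds{1}_{\{e_i<B\}}$, recognize this count as a step function in $B$ with jumps at the distinct levels $\bar{e}_1<\cdots<\bar{e}_{M'}$, and read off $B\leq\bar{e}_{j^{+}}$ from the first level at which the cumulative mass reaches~$\delta$. You are in fact more careful than the paper in isolating the boundary case where the cumulative fraction at $j^{+}$ equals $\delta$ exactly; the paper's proof simply asserts the ``if and only if'' in \eqref{EqBrd} without discussing this edge case, and your suggested strict-inequality reading of $j^{+}$ is the natural way to make the bound tight there.
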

\begin{proof}
    The proof is given in Appendix~\ref{AppendixE}.
\end{proof}
\subsection{Information Transmission Rate}
Though the bound on $R(\mathscr{C})$ in~\eqref{EqRbound} is tight, it can quickly become computationally infeasible as the block-length $n$ increases. 
The following theorem provides a tractable approximation of this bound. 
 
\begin{theorem} \label{CorRUpperBoundRelax} If a given constant composition $(n,M,\mathcal{X})$-code,~$\mathscr{C}$, with $\mathcal{X}$ in \eqref{EqCIsymbols}, is an $(n,M,\mathcal{X},\epsilon,B,\delta)$-code for the random transformation in~\eqref{EqYXdistribution}, then the following holds for  $R(\mathscr{C})$ in~\eqref{EqR}:
\begin{IEEEeqnarray}{l} \label{EqCorRUpperBoundRelax}
R(\mathscr{C}) \leq 
H\left( P_{\mathscr{C}} \right)  + \frac{1}{n^2} \left( \frac{1}{12} - \sum_{\ell=1}^L \frac{1}{12 P_{\mathscr{C}}(x^{(\ell)}) +1}\right) 
  \\
 + \frac{1}{n} \left(\log \left( \sqrt{2\pi}\right)   - \sum_{\ell=1}^L \log\sqrt{2\pi P_{\mathscr{C}}(x^{(\ell)})} \right) - \frac{\log n}{n} \left( \frac{L-1}{2} \right), \nonumber 
\end{IEEEeqnarray}
where, $P_{\mathscr{C}}$ is the type defined in~\eqref{eq:p_bar};  and $L$ is the number of channel input symbols in~\eqref{EqCIsymbols}.
\end{theorem}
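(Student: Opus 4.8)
The plan is to start from the tight but intractable bound on $R(\mathscr{C})$ given in Theorem~\ref{TheoremImpossibility}, equation~\eqref{EqRbound}, namely
\begin{IEEEeqnarray}{rCl} \label{EqPlanStart}
R(\mathscr{C}) \leq \frac{1}{n} \log_2 \left( \frac{n!}{\prod_{\ell=1}^L (nP_{\mathscr{C}}(x^{(\ell)}))!}\right),
\end{IEEEeqnarray}
and to convert the multinomial coefficient on the right-hand side into the closed-form expression in~\eqref{EqCorRUpperBoundRelax} by applying Stirling's approximation to every factorial. First I would write $n_\ell \triangleq n P_{\mathscr{C}}(x^{(\ell)})$, so that $\sum_{\ell=1}^L n_\ell = n$, and recall the two-sided Stirling bound $\log m! = m \log m - m + \frac{1}{2}\log(2\pi m) + r_m$, where the remainder satisfies $\frac{1}{12m+1} < r_m < \frac{1}{12m}$. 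The $\frac{1}{12}$ term in the first parenthesis of~\eqref{EqCorRUpperBoundRelax} comes from the upper-bound side of the remainder for $\log n!$, whereas the $\sum_{\ell=1}^L \frac{1}{12P_{\mathscr{C}}(x^{(\ell)})+1}$ terms come from the lower-bound side of the remainders for the $\log n_\ell!$ in the denominator, which appear with a minus sign.

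Next I would substitute these expansions into~\eqref{EqPlanStart} and collect terms by order in $n$. The leading $n\log n - n$ pieces cancel against $\sum_\ell (n_\ell \log n_\ell - n_\ell)$ up to the entropy: using $n_\ell = n P_{\mathscr{C}}(x^{(\ell)})$ one gets $\frac{1}{n}\big(n\log n - \sum_\ell n_\ell \log n_\ell\big) = -\sum_\ell P_{\mathscr{C}}(x^{(\ell)}) \log P_{\mathscr{C}}(x^{(\ell)}) = H(P_{\mathscr{C}})$, which yields the dominant term. The $\frac{1}{2}\log(2\pi n)$ from the numerator against the $\sum_\ell \frac{1}{2}\log(2\pi n_\ell)$ from the denominator produces, after dividing by $n$, the $\frac{1}{n}\big(\log\sqrt{2\pi} - \sum_\ell \log\sqrt{2\pi P_{\mathscr{C}}(x^{(\ell)})}\big)$ term together with the $-\frac{\log n}{n}\frac{L-1}{2}$ term; the latter arises because the numerator contributes one factor of $\frac{1}{2}\log n$ while the denominator contributes $L$ such factors, leaving $\frac{1}{2}(1-L)\log n$, and the residual $\frac{1}{2}\log n_\ell = \frac{1}{2}(\log n + \log P_{\mathscr{C}}(x^{(\ell)}))$ splitting is what separates the $\log n$ piece from the $\log P_{\mathscr{C}}(x^{(\ell)})$ piece folded into the $\sqrt{2\pi P_{\mathscr{C}}(x^{(\ell)})}$ logarithm.

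The only subtlety worth flagging is the direction of the inequality: since~\eqref{EqPlanStart} is an upper bound on $R(\mathscr{C})$, I must ensure that every Stirling substitution preserves an upper bound on the right-hand side. Concretely, the numerator $\log n!$ must be replaced by its \emph{upper} Stirling estimate (contributing $+\frac{1}{12n}$, hence the $\frac{1}{12}$ after multiplying by $\frac{1}{n}$), while each denominator $\log n_\ell!$ must be replaced by its \emph{lower} Stirling estimate (contributing $+\frac{1}{12n_\ell+1}$, hence the subtracted $\sum_\ell \frac{1}{12P_{\mathscr{C}}(x^{(\ell)})+1}$ after dividing by $n$), because the denominator terms enter with a negative sign and a lower bound on a subtracted quantity preserves the overall upper bound. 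I expect this bookkeeping of which side of the two-sided Stirling bound to use for each factorial to be the main obstacle, as it is easy to flip a sign; the rest is routine algebraic collection of terms and a final change of logarithm base (the $\log_2$ in~\eqref{EqPlanStart} versus the natural $\log$ in~\eqref{EqCorRUpperBoundRelax}, which I would absorb or note explicitly).
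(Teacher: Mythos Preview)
Your proposal is correct and follows essentially the same route as the paper: start from~\eqref{EqRbound}, apply the upper Stirling estimate to $n!$ and the lower Stirling estimate to each $(nP_{\mathscr{C}}(x^{(\ell)}))!$ (exactly as you argue, to preserve the direction of the inequality), and collect terms by order in $n$. One small algebraic point you gloss over: dividing $\frac{1}{12n_\ell+1}$ by $n$ does not literally yield $\frac{1}{n^2}\cdot\frac{1}{12P_{\mathscr{C}}(x^{(\ell)})+1}$; the paper inserts the additional relaxation $\frac{1}{12nP_{\mathscr{C}}(x^{(\ell)})+1} \geq \frac{1}{n(12P_{\mathscr{C}}(x^{(\ell)})+1)}$ (valid for all $n\geq 1$), and since this term enters with a minus sign the upper bound is preserved.
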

\begin{proof}
    The proof is given in Appendix~\ref{AppendixC}.
\end{proof}

 Note that all terms in~\eqref{EqCorRUpperBoundRelax}, except the entropy $H\left( P_{\mathscr{C}} \right)$, vanish with the block-length $n$.
This implies the well known information theoretic result that information rate is essentially constrained by the entropy of the channel input symbols. In particular, note that  $H\left( P_{\mathscr{C}}\right) \leqslant \log_2 L$.
Furthermore, this bound holds with equality for a code $\mathscr{C}$ when the type $P_{\mathscr{C}}$ is uniform, \ie, all the symbols from the set of channel inputs $\mathcal{X}$ are used with the same frequency in $\mathscr{C}$.
It is important to note that, even though it may not be explicitly clear in Theorem~\ref{TheoremImpossibility} and Theorem~\ref{CorRUpperBoundRelax}, the inequalities on the information rate and the DEP are in fact dependent on each other. Each of the inequalities are a function of the type $P_{\mathscr{C}}$, which defines the dependence between these parameters. 

\begin{remark}
The value $\frac{n!}{\prod_{\ell=1}^L (nP_{\mathscr{C}}(x^{(\ell)}))!}$ in \eqref{EqRbound} is the exact number of different codewords of length $n$, while exhibiting an average type $P_{\mathscr{C}}$.  
Hence, codes such that every channel input symbol is used the same number of times among all the codewords are less constrained in terms of the information rate.
This is the case in which  $P_{\mathscr{C}}$ is a uniform distribution. Alternatively, using a uniform type $P_{\mathscr{C}}$ might reduce the energy transmission rate significantly. For instance, assume that the set of channel input symbols is such that for at least one pair $(x_1,x_2) \in \mathcal{X}^2$, with $\mathcal{X}$ in~\eqref{EqCIsymbols}, it holds that $\left| x_{1} \right| < \left| x_{2} \right|$. Then, from~\eqref{EqeiHomogeneousb}, it is clear that using the symbol $x_{1}$ equally often as $x_{2}$ constraints the value of $e_\mathscr{C}$ and hence, the energy $B$. Alternatively, codes that exhibit the largest energy rates are those in which the symbols that have the largest magnitude are used most often. This clearly deviates from the uniform distribution and thus, constrains the information rate~$R$. 
\end{remark}

\section{Code Construction for SIET} \label{SecAchievability} 
This section introduces a method for constructing codes for SIET that ensure certain reliability guarantees. 
The tuples of information rate, energy rate, DEP and EOP that can be achieved by the constructed codes are also characterized.
This is accomplished in two steps.
The first step is the construction of the constellation $\mathcal{X}$ defined in~\eqref{EqCIsymbols} and the construction of a particular $(n,M,\mathcal{X})$-code, whose codewords and decoding sets are subject to certain conditions such that such the code is  an $(n,M,\mathcal{X},\epsilon,B,\delta)$-code.

\subsection{Constellation Design}

Consider a constellation represented by the set $\mathcal{X}$ in~\eqref{EqCIsymbols} and assume that $\mathcal{X}$ is the union of  $C \in \ints$ sets of channel input symbols, which will be referred to as \emph{layers}. A layer is a set of symbols that have the same magnitude.
For all $c \in \lbrace 1,2, \ldots, C \rbrace$, denote by  $L_c \in \ints$ the number of symbols and by $A_c \in \reals^+$ the amplitude of the symbols in layer $c$.
The underlying assumption is that symbols in layer $c$ are equally spaced along the circle of radius $A_c$. This is typically the case in phase-shift keying (PSK) modulations~\cite[Chapter $16$]{lapidoth}.
 This is because, given the number of symbols in a layer, the symbols being equally spaced is the most favorable in terms of the DEP, as formally shown later in the proof of the main results of this section.

Let $\alpha_c \in [0,2\pi]$ denote the phase shift of the symbols in layer~$c$. The layer~$c$ is denoted as
\begin{IEEEeqnarray}{rCl} 
    \mathcal{U}(A_c,L_c,\alpha_c) &\triangleq& \lbrace x_c^{\left( 1 \right)},x_c^{\left( 2 \right)}, \ldots, x_c^{\left( L_c \right)} \rbrace \label{EqLayerCircle} \\
    &=& \Big\lbrace x \in \complex : x = A_c \exp\left(\mathrm{i} \left(\frac{2\pi}{L_c} \ell+\alpha_c\right)\right), \nonumber \\
    && \ell \in \left \lbrace 0, 1,2, \ldots, (L_c-1)\right \rbrace \Big\rbrace, 
\end{IEEEeqnarray}
where $\mathrm{i}$ is the complex unit. Hence, the set $\mathcal{X}$ is given by
\begin{IEEEeqnarray}{rCl} \label{EqConstellationCircle}
    \mathcal{X} & = & \bigcup_{c=1}^C \mathcal{U}(A_c,L_c,\alpha_c).
\end{IEEEeqnarray} 
The vector of the amplitudes $A_c$ in~\eqref{EqConstellationCircle} is denoted by
\begin{equation} \label{EqAcVector}
\boldsymbol{A} = \left( A_1, A_2, \ldots, A_C\right)^{\sf{T}},
\end{equation}
and without any loss of generality, assume that
\begin{equation} \label{EqAmplitudesOrder}
    A_1 > A_2 > \ldots > A_C.
\end{equation}

The vector of the number of symbols in each layer in~\eqref{EqConstellationCircle} is denoted by
\begin{equation} \label{EqLcVector}
\boldsymbol{L} = \left( L_1, L_2, \ldots, L_C\right)^{\sf{T}};
\end{equation}
and the vector of the phase shifts of the symbols in each layer in~\eqref{EqConstellationCircle} is denoted by
\begin{equation} \label{EqAlphaVector}
\boldsymbol{\alpha} = \left( \alpha_1, \alpha_2, \ldots, \alpha_C\right)^{\sf{T}}.
\end{equation}
The total number of symbols $L$ in~\eqref{EqCIsymbols} satisfies
\begin{equation} \label{EqLSum}
    L = \sum_{c=1}^C L_c.
\end{equation}
 
Using the design in \eqref{EqConstellationCircle} for the set of channel input symbols, $\mathcal{X}$ in~\eqref{EqCIsymbols}, the $(n,M,\mathcal{X},\epsilon,B,\delta)$-code $\mathscr{C}$ is constructed in two steps.
The first step consists of designing the codewords, while the second step designs the corresponding decoding regions.

\subsection{Codeword Design}\label{SecCodewordDesing}

Consider an $(n,M,\mathcal{X})$-code of the form in \eqref{Eqnm_code}, with $\mathcal{X}$ in~\eqref{EqConstellationCircle}, whose codewords are all different and are constructed satisfying a constraint on the number of times the symbols in the different layers are used. 
For all $c \inCountK{C}$, let $p_c$ be the empirical probability of symbols of the $c$\ts{th} layer. 
That is,
\begin{IEEEeqnarray}{rCl} \label{Eqpc}
    p_c &=& \frac{1}{Mn} \sum_{\ell = 1}^{L_c} \sum_{i=1}^M \sum_{m=1}^n \mathds{1}_{\{u_m(i) = x_c^{(\ell)}\}} \\
    &=& \frac{1}{M} \sum_{\ell = 1}^{L_c} \sum_{i=1}^M  P_{\boldsymbol{u}(i)}\left(x_c^{(\ell)} \right)\label{Eq115i}\\
    & = & \sum_{\ell = 1}^{L_c} P_{\mathscr{C}}\left(x_c^{(\ell)} \right),
\end{IEEEeqnarray}
where, 
for all $i \inCountK{M}$ and for all $m \inCountK{n}$, the complex $u_m(i)$ is the $m$\ts{th} channel input symbol of the codeword $\boldsymbol{u}(i)$ in \eqref{eq:u_i}; the types $P_{\boldsymbol{u}(i)}$ and $P_{\mathscr{C}}$ are defined in~\eqref{eq:u_measure} and in~\eqref{eq:p_bar}, respectively; and $x_c^{(\ell)}$ is in~\eqref{EqLayerCircle}.
The resulting probability vector is denoted by
\begin{equation} \label{Eqp}
    \boldsymbol{p} = \left( p_1,p_2, \ldots, p_C \right)^{\sf{T}},
\end{equation}
which is a free parameter in the construction of code $\mathscr{C}$. 
The constraint on the codewords of code $\mathscr{C}$ is that for all $c \inCountK{C}$ and for all $\ell \inCountK{L_c}$, the frequency with which the symbol $x_c^{(\ell)}$ appears in $\mathscr{C}$ satisfies
\begin{IEEEeqnarray}{rCl}
\label{EqTypeCircle} 
P_{\mathscr{C}}(x_c^{(\ell)})     &=& \frac{p_c}{L_c}.
\end{IEEEeqnarray}
That is, symbols within the same layer are used with the same frequency in~$\mathscr{C}$. 
The motivation for this assumption is that intuitively, while using two symbols within the same layer with different frequencies does not impact the energy transmission rate $B$ or EOP $\delta$, it does have an impact on the information transmission rate and DEP of $\mathscr{C}$. 

\subsection{Decoding Regions}\label{SecDRdesing}
The focus is now on the design of the decoding regions $\mathcal{D}_1$, $\mathcal{D}_2$, $\ldots$, $\mathcal{D}_M$ of the code $\mathscr{C}$, for the codewords $\boldsymbol{u}(1)$, $\boldsymbol{u}(2)$, $\ldots$, $\boldsymbol{u}(M)$ constructed above.  
For all $c \inCountK{C}$ and $\ell \inCountK{L_c}$, let the  set $\mathcal{G}_c^{(\ell)} \subseteq \complex$ be a circle of radius $r_c \in \reals^+$ centered at $x_c^{(\ell)}$, with $x_c^{(\ell)}$ in \eqref{EqLayerCircle}.
That is,
\begin{IEEEeqnarray}{l}
\label{EqDecodingCircleComplex} 
\mathcal{G}_c^{(\ell)} = \left\lbrace y \in \complex : \left| y - x_c^{\left(\ell \right)}\right|^2 \leq r_c^2 \right\rbrace.   
\end{IEEEeqnarray}
For all $i \inCountK{M}$, the decoding region $\mathcal{D}_i$ for codeword $\boldsymbol{u}(i)$ is built as follows:
\begin{equation} \label{EqDecodingCodeword}
    \mathcal{D}_i = \mathcal{D}_{i,1} \times \mathcal{D}_{i,2} \times \ldots \times \mathcal{D}_{i,n}, 
\end{equation}
where, for all $m \inCountK{n}$,
\begin{IEEEeqnarray}{rCl}
\mathcal{D}_{i,m} & = & \mathcal{G}_c^{(\ell)}, 
\end{IEEEeqnarray}
with $c$ and $\ell$ satisfying $u_m(i) = x_c^{(\ell)}$.

To ensure that the decoding regions $\mathcal{D}_1$, $\mathcal{D}_2$, $\ldots$, $\mathcal{D}_M$ are mutually disjoint, for all $c \inCountK{C-1}$, the amplitudes $A_c$ in~\eqref{EqLayerCircle} and the radii $r_c$ in~\eqref{EqDecodingCircleComplex} satisfy the following:
\begin{IEEEeqnarray}{rCl} 
\label{EqAmplitudesDifference}
A_{c+1} - A_{c} \geq r_{c+1} + r_{c}.
\end{IEEEeqnarray}

The vector of the radii in~\eqref{EqDecodingCircleComplex} is denoted by
\begin{equation} \label{EqRadiusVector}
\boldsymbol{r} = \left( r_1, r_2, \ldots, r_C\right)^{\sf{T}}.
\end{equation}

Note that there are multiple ways of choosing the parameters $\boldsymbol{A},\boldsymbol{\alpha},\boldsymbol{p}$ and $\boldsymbol{r}$ in the proposed construction to satisfy the same information rate, energy rate, DEP and EOP requirements.
Thus, the construction defines a family of codes
\begin{IEEEeqnarray}{rCl}
\label{EqFamily}
{\sf C} \left(C,\boldsymbol{A},\boldsymbol{L},\boldsymbol{\alpha},\boldsymbol{p},\boldsymbol{r} \right),
\end{IEEEeqnarray}
with the number of layers $C$ in~\eqref{EqConstellationCircle}; 
$\boldsymbol{A}$ in~\eqref{EqAcVector}; 
$\boldsymbol{L}$ in~\eqref{EqLcVector}; 
$\boldsymbol{\alpha}$ in~\eqref{EqAlphaVector}; 
$\boldsymbol{p}$ in~\eqref{Eqp}; 
and $\boldsymbol{r}$ in~\eqref{EqRadiusVector}.
The codewords  $\boldsymbol{u}(1)$, $\boldsymbol{u}(2)$, $\ldots$, $\boldsymbol{u}(M)$ and their respective decoding regions  $\mathcal{D}_1$, $\mathcal{D}_2$, $\ldots$, $\mathcal{D}_M$ completely define an  $(n,M,\mathcal{X})$-code within the family ${\sf C} \left(C,\boldsymbol{A},\boldsymbol{L},\boldsymbol{\alpha},\boldsymbol{p},\boldsymbol{r} \right)$.

\subsection{Achievable Tuples}\label{SecAch}

The following theorem characterizes the tuples of information rate, energy rate, DEP and EOP that can be achieved by codes within the family ${\sf C} \left(C,\boldsymbol{A},\boldsymbol{L},\boldsymbol{\alpha},\boldsymbol{p},\boldsymbol{r} \right)$ in~\eqref{EqFamily}.
\begin{theorem} \label{TheoremAchievableRegion}
A code $\mathscr{C}$ of the form in~\eqref{Eqnm_code} from the family ${\sf C} \left(C,\boldsymbol{A},\boldsymbol{L},\boldsymbol{\alpha},\boldsymbol{p},\boldsymbol{r} \right)$ in~\eqref{EqFamily} is an $\left(n,M,\mathcal{X},\epsilon,B,\delta \right)$-code  for the random transformation in~\eqref{EqYXdistribution},  if the following inequalities are simultaneously satisfied:
\begin{subequations}
\label{EqAchievability}
\begin{IEEEeqnarray}{l}
\label{EqAchieveDEP}
 \epsilon \geq 1 - \frac{1}{M}\sum_{i = 1}^M\prod_{c=1}^{C}  \left( 1-\exp \left(-\frac{r_c^2}{\sigma^2}\right) \right)^{n \sum_{\ell=1}^{L_c} P_{\boldsymbol{u}(i)}(x_c^{(\ell)})}, \\
 R(\mathscr{C})  \leq \log_2\left( \sum_{c=1}^C \left\lfloor \frac{\pi}{2\arcsin{\frac{r_c}{2A_c}}} \right\rfloor \right), \label{EqTheoremR} \\
 \label{EqTheoremrc} 
 \delta \geq  \frac{1}{M} \sum_{i=1}^M \mathds{1}_{\left\lbrace \substack{ \sum_{c=1}^C \sum_{\ell=1}^{L_c}  P_{\boldsymbol{u}(i)} \left( x_c^{(\ell)} \right) \left( k_1  A_c^2 +  k_2 A_c^4 \right) < \frac{B}{n}} \right\rbrace}, \label{EqTheoremB} \\
\label{EqTheoremB2}
B \leq \bar{e}_{j^{+}},
\end{IEEEeqnarray}
\end{subequations}
where, the information rate $R(\mathscr{C})$ is defined in \eqref{EqRbound}; the energy level $\bar{e}_{j^{+}}$ is defined in~\eqref{EqConverseB}; and the constants $k_1$ and $k_2$ are the same as in~\eqref{Eqei}; and the type $P_{\boldsymbol{u}(i)}$ is defined in~\eqref{eq:u_measure}.
\end{theorem}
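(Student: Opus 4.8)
The plan is to establish the two defining inequalities of an $\left(n,M,\mathcal{X},\epsilon,B,\delta\right)$-code from Definition~\ref{def:nmed_code}, namely $\gamma(\mathscr{C})\leq\epsilon$ and $\theta(\mathscr{C},B)\leq\delta$, by computing $\gamma(\mathscr{C})$ and $\theta(\mathscr{C},B)$ \emph{exactly} for a code $\mathscr{C}$ drawn from the family ${\sf C}\left(C,\boldsymbol{A},\boldsymbol{L},\boldsymbol{\alpha},\boldsymbol{p},\boldsymbol{r}\right)$ and matching them to the right-hand sides of \eqref{EqAchieveDEP} and \eqref{EqTheoremB}; the remaining inequalities \eqref{EqTheoremR} and \eqref{EqTheoremB2} are then read off as the structural feasibility conditions. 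A preliminary step, needed both for $\mathscr{C}$ to be a valid $(n,M,\mathcal{X})$-code and for the DEP computation, is to verify that the decoding regions $\mathcal{D}_1,\ldots,\mathcal{D}_M$ in \eqref{EqDecodingCodeword} are pairwise disjoint. I would first show that all the elementary disks $\mathcal{G}_c^{(\ell)}$ in \eqref{EqDecodingCircleComplex} are pairwise disjoint: condition \eqref{EqAmplitudesDifference} separates layers radially, forcing $\left|x_c^{(\ell)}-x_{c'}^{(\ell')}\right|\geq r_c+r_{c'}$ for $c\neq c'$, while symbols within a layer are equispaced on the circle of radius $A_c$ and hence, under the per-layer packing implied by \eqref{EqTheoremR}, their radius-$r_c$ disks do not overlap. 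Since two distinct codewords differ in at least one coordinate $m$, the product sets in \eqref{EqDecodingCodeword} then have disjoint $m$-th factors and are therefore disjoint.

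For the DEP, the product form of $\mathcal{D}_i$ in \eqref{EqDecodingCodeword} together with the factorization of the transition density in \eqref{EqYXdistribution} gives $1-\gamma_i(\mathscr{C})=\prod_{m=1}^n\int_{\mathcal{D}_{i,m}}f_{Y|X}(y\,|\,u_m(i))\,\mathrm{d}y$. Each factor is the probability that a circularly symmetric complex Gaussian centered at $u_m(i)$ falls in a disk of radius $r_c$, where $c$ is the layer of $u_m(i)$; a polar-coordinate integration of \eqref{EqDensities} evaluates this to $1-\exp\left(-r_c^2/\sigma^2\right)$. Collecting the $n$ factors by layer through the type $P_{\boldsymbol{u}(i)}$ in \eqref{eq:u_measure} yields $1-\gamma_i(\mathscr{C})=\prod_{c=1}^C\left(1-\exp\left(-r_c^2/\sigma^2\right)\right)^{\,n\sum_{\ell=1}^{L_c}P_{\boldsymbol{u}(i)}(x_c^{(\ell)})}$, and averaging over $i$ as in \eqref{eq:gamma} reproduces the right-hand side of \eqref{EqAchieveDEP}; the hypothesis $\epsilon\geq(\cdots)$ then gives $\gamma(\mathscr{C})\leq\epsilon$.

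For the EOP, every symbol of layer $c$ has magnitude $A_c$, so the harvested energy \eqref{Eqei} for codeword $\boldsymbol{u}(i)$ collapses to $e_i=n\sum_{c=1}^C\sum_{\ell=1}^{L_c}P_{\boldsymbol{u}(i)}(x_c^{(\ell)})\left(k_1A_c^2+k_2A_c^4\right)$. Inserting this into the exact EOP expression \eqref{Eq25} shows that $\theta(\mathscr{C},B)=\frac{1}{M}\sum_{i=1}^M\mathds{1}_{\{e_i<B\}}$ equals the right-hand side of \eqref{EqTheoremB}, so the hypothesis on $\delta$ gives $\theta(\mathscr{C},B)\leq\delta$. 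The accompanying bound $B\leq\bar{e}_{j^{+}}$ in \eqref{EqTheoremB2} is exactly the $B$--$\delta$ consistency already proved in Lemma~\ref{LemmaBnew}, with $\bar{e}_{j^{+}}$ as in \eqref{EqConverseB}, and is recorded here to certify that the chosen $B$ is compatible with the target EOP.

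Finally, the rate condition \eqref{EqTheoremR} is a counting statement: a disk-packing argument on the circle of radius $A_c$ caps the number of symbols whose radius-$r_c$ decoding disks remain pairwise disjoint at $\left\lfloor\pi/\left(2\arcsin\left(r_c/(2A_c)\right)\right)\right\rfloor$, so at each channel use at most $\sum_{c=1}^C\left\lfloor\pi/\left(2\arcsin\left(r_c/(2A_c)\right)\right)\right\rfloor$ distinct symbols are available; hence there are at most $\left(\sum_{c=1}^C\lfloor\cdot\rfloor\right)^n$ admissible length-$n$ codewords, giving $M\leq\left(\sum_{c=1}^C\lfloor\cdot\rfloor\right)^n$ and therefore $R(\mathscr{C})\leq\log_2\left(\sum_{c=1}^C\lfloor\cdot\rfloor\right)$. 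I expect the main obstacle to be precisely this geometric packing step --- determining, via the minimum admissible chordal separation between equispaced symbols, the largest symbol count a layer can carry while keeping all radius-$r_c$ disks disjoint from one another and from the neighboring layers --- since this single fact simultaneously certifies the validity of the code, the exact product form of the DEP, and the rate bound.
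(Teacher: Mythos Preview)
Your proposal is correct and follows essentially the same route as the paper: the paper likewise computes $\gamma(\mathscr{C})$ exactly by factoring over channel uses, evaluating the Gaussian mass of a disk via polar coordinates to get $1-\exp(-r_c^2/\sigma^2)$, and grouping by layer through the type; it handles the EOP via $|x_c^{(\ell)}|=A_c$, invokes the same $B\leq\bar{e}_{j^{+}}$ argument as Lemma~\ref{LemmaBnew}, and obtains \eqref{EqTheoremR} from $M\leq L^n$ together with a geometric packing lemma bounding $L_c$ by $\lfloor\pi/(2\arcsin(r_c/(2A_c)))\rfloor$. The only mild difference is that you make the disjointness of the $\mathcal{D}_i$ an explicit preliminary step, whereas the paper treats it as a consequence of the construction (via \eqref{EqAmplitudesDifference} and the per-layer spacing) without a separate proof.
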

\begin{proof}
The proofs are presented in Appendix~\ref{SecAchievableBounds}.
\end{proof}
Theorem~\ref{TheoremAchievableRegion} characterizes the conditions on the parameters $C,\boldsymbol{A},\boldsymbol{L},\boldsymbol{\alpha},\boldsymbol{p}$ and $\boldsymbol{r}$ such that all codes within the family ${\sf C} \left(C,\boldsymbol{A},\boldsymbol{L},\boldsymbol{\alpha},\boldsymbol{p},\boldsymbol{r} \right)$  in~\eqref{EqFamily} are $\left(n,M,\mathcal{X},\epsilon,B,\delta \right)$-codes.   
Hence, given some target information transmission rate $R$; energy requirement $B$; DEP $\epsilon$; and EOP $\delta$, Theorem~\ref{TheoremAchievableRegion} provides the conditions  on the parameters $\left(C,\boldsymbol{A},\boldsymbol{L},\boldsymbol{\alpha},\boldsymbol{p},\boldsymbol{r} \right)$ for the existence of an $\left(n,M,\mathcal{X},\epsilon,B,\delta \right)$-code. Such a code can be constructed by designing the codewords and the decoding regions as suggested in Section~\ref{SecCodewordDesing} and Section~\ref{SecDRdesing}, respectively.

\begin{remark} \label{Remark1}
It is essential to note the following about the choice of decoding sets in~\eqref{EqDecodingCircleComplex} and the derived bound on the achievable DEP in~\eqref{EqAchieveDEP}.
Firstly, the circular construction of decoding sets in~\eqref{EqDecodingCircleComplex} is a choice made in order to obtain closed form tractable expressions for the lower bound on the DEP in~\eqref{EqAchieveDEP}.
This is enabled by the inherent circular symmetry of the complex AWGN noise.
The bound in~\eqref{EqAchieveDEP} reveals the dependence of the DEP on the various parameters of the code such as the type.
This is instructive for studying the trade-offs between the parameters of the constructed family of codes.
Secondly, this choice of decoding sets provides guidelines for constructing the set of channel input symbols $\mathcal{X}$ while ensuring that the DEP does not exceed a required value which is controlled using the choice of the radii of the decoding sets in~\eqref{EqRadiusVector}.
\end{remark}

\section{Performance Analysis} 
\label{SecExamples}
The performance of the constructed family of codes can be evaluated by comparing the achievability bounds identified for the constructed family of codes ${\sf C} \left(C,\boldsymbol{A},\boldsymbol{L},\boldsymbol{\alpha},\boldsymbol{p},\boldsymbol{r} \right)$ in Theorem~\ref{TheoremAchievableRegion} with the necessary conditions in Theorem~\ref{TheoremImpossibility} that any $(n,M,\mathcal{X},\epsilon,B,\delta)$-code must satisfy.
This comparison is presented in the form of information-energy regions in Fig.~\ref{FigRegions} where the information transmission rates are plotted as a function of the energy transmission rates of the codes.
For this comparison, consider constant composition $(n,M,\mathcal{X},\epsilon,B,\delta)$-codes in the family ${\sf C} \left(C,\boldsymbol{A},\boldsymbol{L},\boldsymbol{\alpha},\boldsymbol{p},\boldsymbol{r} \right)$ in~\eqref{EqFamily} that employ the set of channel input symbols $\mathcal{X}$ of the form in~\eqref{EqConstellationCircle} with number of layers $C = 3$. The duration of the transmission in channel uses is $n = 80$. 
The radii of the decoding regions $r_c$ are assumed to be the same for all the layers i.e., for all $c \inCountK{C}$, the radius $r_c = r$ in~\eqref{EqDecodingCircleComplex}.
The amplitude of the first layer is $A_1 = 30$. Amplitudes of the second and third layers $A_2$ and $A_3$ are determined by $r$ to satisfy~\eqref{EqAmplitudesDifference}. The points on the curves in Fig.~\ref{FigRegions} are obtained by varying $\epsilon$ and the probability vector $\boldsymbol{p}$ in~\eqref{Eqp}.

The first important observation from Fig.~\ref{FigRegions} is that the achievability results for the codes constructed in this work match the necessary conditions identified in Theorem~\ref{TheoremImpossibility} except for the DEP $\epsilon$.
This is illustrated by the overlap of the necessary conditions and the achievable information-energy curves.
However, for the same information rate and energy pair, the DEP for the achievable curves is higher than the DEP bound identified by the necessary conditions. 
The sub-optimality in DEP is a result of the choice of circular decoding regions in~\eqref{EqDecodingCircleComplex} (See Remark~\ref{Remark1}). 

Fig.~\ref{FigRegions} also shows a clear trade-off between the information and energy transmission rates that can be simultaneously supported by a given code.
The maximum achievable information transmission rate is $R = 4.93$ bits/channel use. This $R$ is achieved by a code in which all the symbols in the set of channel inputs $\mathcal{X}$ are used with the same frequency. The maximum energy that can be delivered at $R = 4.93$ bits/channel use is $B = 3.2 \times 10^5$ energy units. This corresponds to the point $D_1$ in Fig.~\ref{FigRegions}.
The maximum achievable $B$ is $3.8 \times 10^{5}$ energy units. This is achieved by a code that exclusively uses the symbols in the first layer i.e., the probability vector $\boldsymbol{p}$ in~\eqref{Eqp} is $\boldsymbol{p} = (1,0,0)^{\sf{T}}$. The maximum $R$ that can be achieved at $B = 3.8 \times 10^{5}$ energy units is $R = 3.8$ bits/channel use. This corresponds to the point $D_2$ in Fig.~\ref{FigRegions}.
The curves between the points $D_1$ and $D_2$ in Fig.~\ref{FigRegions} illustrate the trade-off between the information and energy transmission.
As $B$ is increased from $3.2 \times 10^5$ energy units at point $D_1$, $R$ begins to decreases.
Similarly, as $R$ is increased from $3.8$ bits/channel use at point $D_2$, $B$ begins to decrease.
 \begin{figure}[htb]
  \centering
  \includegraphics[width=0.7\textwidth]{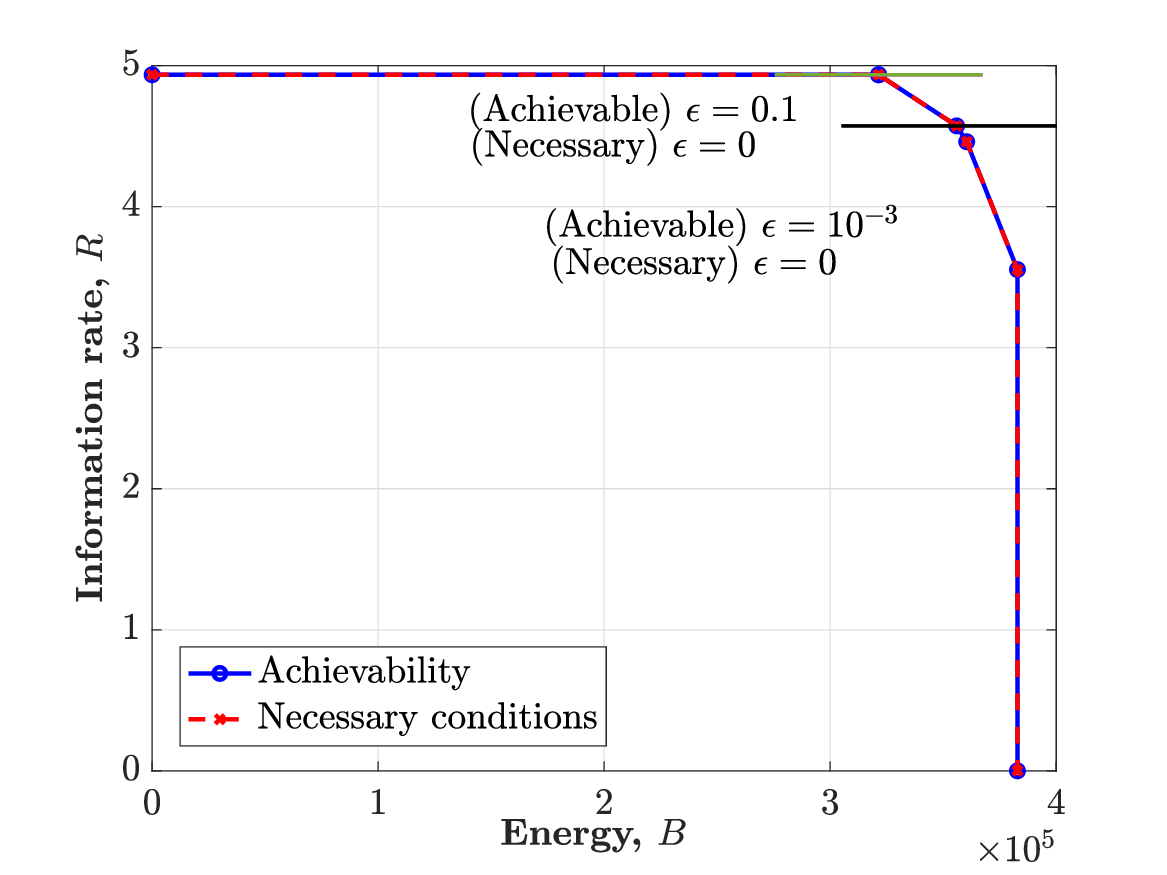}
  \caption{A comparison of the information-energy region defined by the necessary conditions (Theorem~\ref{TheoremImpossibility}) versus the achievable information-energy region (Theorem~\ref{TheoremAchievableRegion}) for constant composition codes in the family ${\sf C} \left(C,\boldsymbol{A},\boldsymbol{L},\boldsymbol{\alpha},\boldsymbol{p},\boldsymbol{r} \right)$.}
\label{FigRegions}
\end{figure}
\section{Discussion} \label{SecDiscussion} 
\subsubsection{Trade-offs} 
The results presented in this work reveal several interesting insights into the trade-offs between the information transmission rate $R$, the energy requirement $B$, the DEP $\epsilon$ and the EOP $\delta$. For instance, as shown by~\eqref{EqdImp}  and~\eqref{EqTheoremB}, $\delta$ increases as $B$ increases and vice versa. The consequence of this relationship is that a smaller $\delta$ can be achieved at the cost of lower $B$.
Similarly, for increasing $B$ in~\eqref{EqBImp} and~\eqref{EqTheoremB2}, a larger value of $\delta$ has to be tolerated.
These relationships also reveal that both $B$ and $\delta$ can be improved by a code that has higher values of $e_i$ in~\eqref{Eqei}. This is achieved by using the symbols with greater energy more frequently in the code. 

From the necessary condition in~\eqref{EqeImp}, it follows that $\epsilon$  decreases as the distance between the symbols in the set of channel inputs $\mathcal{X}$ in~\eqref{EqCIsymbols} increases. This implies that, a lower DEP can be achieved by increasing the distances between the channel input symbols. However, with the peak-amplitude constraint in place, increasing the distance between symbols implies that the number of symbols $L$ decreases.
This in turn, decreases the upper bound on the information rate $R$ in~\eqref{EqRbound}.
In~\eqref{EqTheoremrc}, $\epsilon$  decreases as a function of the radii $r_c$ of the decoding regions in~\eqref{EqDecodingCircleComplex}.
On the other hand, from~\eqref{EqTheoremR} it follows that the information rate $R$  increases as the radii $r_c$ decrease.
These trade-offs between $R$, $B$, $\epsilon$, and $\delta$ are more comprehensively illustrated in the following example.
\begin{figure}[t]
  \centering
  \includegraphics[width=0.7\textwidth]{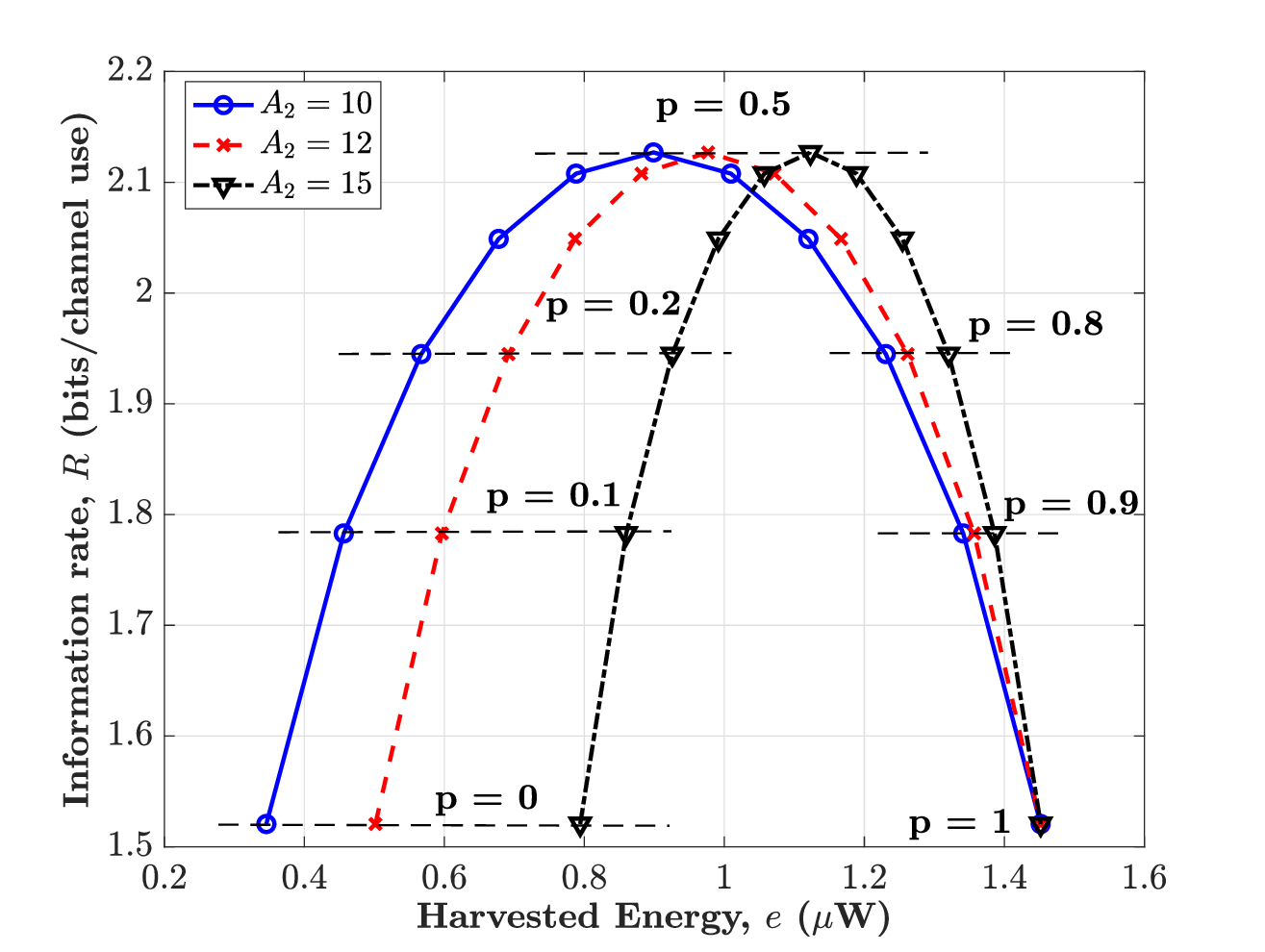}
\caption{Bounds on the information transmission rate $R$ in~\eqref{EqRbound} as a function of the harvested energy $e$ in~\eqref{Eq33}.}
\label{FigBR}
\end{figure}
%
%
%

Consider a family of constant composition $(n,M,\mathcal{X},\epsilon,B,\delta)$-codes $\mathscr{C}$ in the family ${\sf C} \left(C,\boldsymbol{A},\boldsymbol{L},\boldsymbol{\alpha},\boldsymbol{p},\boldsymbol{r} \right)$ in~\eqref{EqFamily}.
The set of channel input symbols $\mathcal{X}$ in \eqref{EqConstellationCircle} is composed of two layers with $5$ symbols in each layer, \ie, $C=2$ and $L_1 = L_2 = 5$. The radius of the first layer is $A_1 = 20$ millivolts and the radius of the second layer $A_2$ is varied to illustrate the trade-offs between the various parameters. The frequency with which symbols from the first layer appear in the code is $p_1 = p = 1-p_2$. That is, the vector $\boldsymbol{p}$ in~\eqref{Eqp} is given by
\begin{IEEEeqnarray}{l} \label{Eq71}
\boldsymbol{p} = \left( p, (1-p)\right)^{\sf{T}}.
 \end{IEEEeqnarray}
 The duration of the transmission in channel uses is $n = 100$. Since $\mathscr{C}$ is a constant composition code, from~\eqref{EqeiHomogeneous}, it holds that, for all $i \inCountK{M}$, 
 \begin{IEEEeqnarray}{rCl}
 e_i = e, \label{Eq33}
 \end{IEEEeqnarray}
where $e \in [0,\infty)$ is calculated as in~\eqref{EqeiHomogeneousb}. 
 
Fig.~\ref{FigBR} shows the trade-offs between the the information transmission rate $R$ in~\eqref{EqRbound} and the harvested energy $e$ in microwatts ($\mu W$) in~\eqref{Eq33} as a function of $p$ in~\eqref{Eq71}. Each curve in the figure is generated for some value of $A_2 < A_1$ by varying the value of $p \in [0,1]$. The following trade-offs can be observed from this figure. The harvested energy $e$ increases as $p$ increases. This is because higher $p$ corresponds to the symbols from the first layer $c=1$ which have higher energy (since $A_1 > A_2$) being used more frequently in $\mathscr{C}$.
For a fixed value of $A_2$ in Fig.~\ref{FigBR}, the information rate $R$ first increases and then decreases as a function of $e$ in~\eqref{Eq33}. For each of these curves, the maximum $R = 2.13$ bits/channel use corresponds to the uniform type, \ie, $p = 0.5$. For $p$ lesser or greater than $0.5$, the bound on $R$ decreases. Furthermore, the bounds on $R$ are independent of the values of $A_1$ and $A_2$. This is due to the fact that the information rate $R$ in~\eqref{EqRbound} is only a function of the type. 
The harvested energy $e$ also increases as $A_2$ increases. This is because, higher values of $A_2$ imply higher energy contained in the symbols in the second layer which in turn increases $e$. 

\subsubsection{Comparison with state of the art}
\begin{figure}[htb] 
  \centering
  \includegraphics[width=0.7\textwidth]{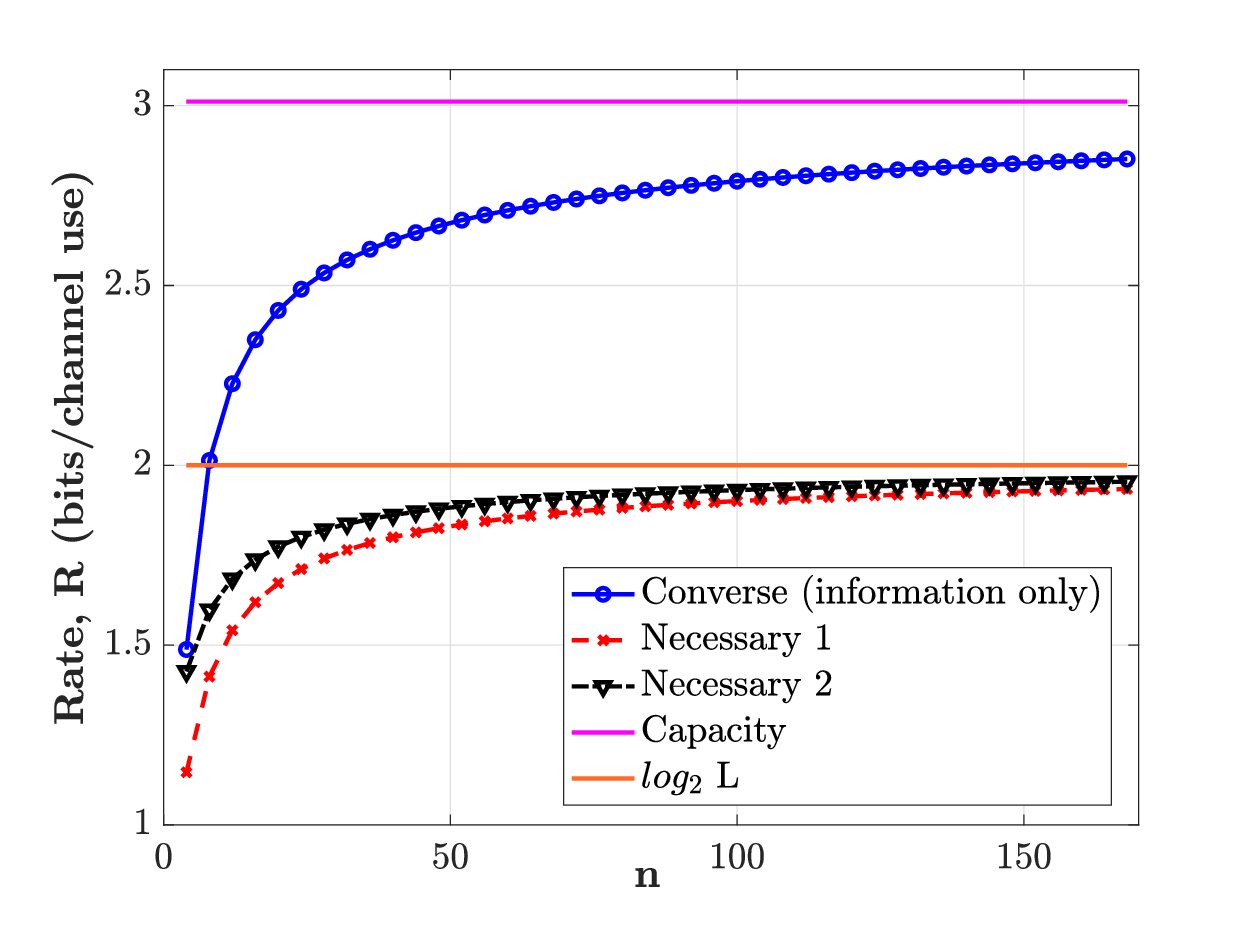}
  \caption{A comparison of the bounds on the information rate from the necessary conditions in~\eqref{EqRbound} (Necessary 1) and~\eqref{EqCorRUpperBoundRelax} (Necessary 2) for finite block-length SIET with the finite block-length converse in~\cite{polyanskiy2010channel} as a function of the block-length $n$.}
\label{FigCompare}
\end{figure}

Fig.~\ref{FigCompare} compares the bound on the information rate in~\eqref{EqRbound} (Necessary 1) and~\eqref{EqCorRUpperBoundRelax} (Necessary 2) with the finite block-length converse bound on the information rate in~\cite[Theorem 54]{polyanskiy2010channel} for a QPSK (\ie, $L=4$) constellation and signal to noise ratio (SNR) equal to $18$ dB.
It should be noted that the converse bounds in~\cite{polyanskiy2010channel} are concerned with information transmission alone. 
The following observations can be made from these plots. Firstly, the proposed bounds are very close to the finite block-length converse for small block-lengths.
As the block-length increases, the difference between the two bounds increases.
Secondly, the bounds in~\eqref{EqRbound} and~\eqref{EqCorRUpperBoundRelax} are very close. This shows that the approximation in~\eqref{EqCorRUpperBoundRelax} is tight, especially as the block-length increases. Finally, it should be noted that the bounds in~\eqref{EqRbound} and~\eqref{EqCorRUpperBoundRelax} are for the case of constant composition codes whereas the bound from~\cite{polyanskiy2010channel} has no such restriction. In fact, the information rate $R$ can be at most $\log_2 L$ in this case. 
As can be observed from Fig.~\ref{FigCompare}, the proposed bounds actually come very close to the maximum value of $\log_2 L$ as the block-length $n$ increases.

%
 \bibliographystyle{IEEEtran}
\bibliography{myrefs}

 \clearpage
 \appendix
 \subsection{Proofs for Theorem~\ref{TheoremImpossibility}}
\label{AppendixA}
\begin{theorem} \label{LemmaImpossibleDEP}
Given a constant composition $(n,M,\mathcal{X},\epsilon,B,\delta)$-code $\mathscr{C}$ for the random transformation in~\eqref{EqYXdistribution} of the form in~\eqref{Eqnm_code}, for all $\ell \inCountK{L}$, let the complex $\bar{x}^{(\ell)} \in \mathcal{X}$ in~\eqref{EqCIsymbols} be such that
\begin{IEEEeqnarray}{rCl} \label{EqNeighbor}
\bar{x}^{(\ell)} \in \arg\max_{x \in \mathcal{X} \setminus \lbrace x^{(\ell)} \rbrace} \left|x^{(\ell)} - x \right|.
\end{IEEEeqnarray}
Then, $\epsilon$ in~\eqref{EqGammaUpperbound} must satisfy the following:
\begin{IEEEeqnarray}{l} \label{EqboundDEP}
\epsilon \geq (M-1) \mathrm{Q} \left( \sqrt{\frac{\sum_{\ell=1}^L n P_{\mathscr{C}}(x^{(\ell)}) \left| x^{(\ell)} - \bar{x}^{(\ell)} \right|^2}{2 \sigma^2}} \right), \IEEEeqnarraynumspace
\end{IEEEeqnarray}
where, $P_{\mathscr{C}}$ is the type defined in~\eqref{eq:p_bar}; $L$ is the number of symbols in~\eqref{EqCIsymbols}, and the real $\sigma^2$ is the noise variance in~\eqref{Eq4a}.
\end{theorem}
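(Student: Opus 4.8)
The plan is to lower bound the average DEP $\gamma(\mathscr{C})$ in \eqref{eq:gamma}, since $\epsilon \geq \gamma(\mathscr{C})$ whenever $\mathscr{C}$ is an $(n,M,\mathcal{X},\epsilon,B,\delta)$-code. First I would reduce to the error-minimizing decoder: because the MAP rule in \eqref{EqMAPregion} minimizes the error probability among all decoders, any lower bound on the MAP error is automatically a lower bound on $\gamma(\mathscr{C})$ for the code's actual decoding regions. Thus it suffices to lower bound $\gamma_i$ for each transmitted codeword $\boldsymbol{u}(i)$ and then average over $i \in \{1,\ldots,M\}$. I would decompose the error event under the transmission of $\boldsymbol{u}(i)$ into the pairwise dominance events $E_{i,j} = \{\boldsymbol{y} : \lVert \boldsymbol{y} - \boldsymbol{u}(j)\rVert \leq \lVert \boldsymbol{y} - \boldsymbol{u}(i)\rVert\}$ for $j \neq i$, since an error occurs whenever some competing codeword is at least as likely as $\boldsymbol{u}(i)$.

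The key quantity is then the pairwise error probability $\Pr[E_{i,j} \mid \boldsymbol{u}(i)]$. Exploiting the circular symmetry of the complex Gaussian noise in \eqref{EqYXdistribution}, I would project the noise onto the unit vector along $\boldsymbol{u}(j) - \boldsymbol{u}(i)$; the real part of this projection is a zero-mean Gaussian with variance $\sigma^2/2$, so an error toward $\boldsymbol{u}(j)$ requires a noise excursion exceeding half the inter-codeword distance. This yields the closed form
\[
\Pr[E_{i,j} \mid \boldsymbol{u}(i)] = \mathrm{Q}\!\left( \frac{\lVert \boldsymbol{u}(i) - \boldsymbol{u}(j)\rVert}{\sqrt{2\sigma^2}} \right),
\]
with $\mathrm{Q}$ as in \eqref{DefQfunc}.

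The constant-composition hypothesis then removes the dependence on the particular pair $(i,j)$. For each coordinate $m$ with $u_m(i) = x^{(\ell)}$, the farthest-neighbor definition \eqref{EqNeighbor} gives the per-symbol inequality $\lvert u_m(i) - u_m(j)\rvert^2 \leq \lvert x^{(\ell)} - \bar{x}^{(\ell)}\rvert^2$ for every $u_m(j) \in \mathcal{X}$. Grouping coordinates by the symbol they carry, and using that $\boldsymbol{u}(i)$ employs $x^{(\ell)}$ exactly $n P_{\mathscr{C}}(x^{(\ell)})$ times, identically across all codewords by \eqref{EqHomogeneousCodes}, \eqref{eq:u_measure} and \eqref{eq:p_bar}, yields $\lVert \boldsymbol{u}(i) - \boldsymbol{u}(j)\rVert^2 \leq \sum_{\ell=1}^L n P_{\mathscr{C}}(x^{(\ell)}) \lvert x^{(\ell)} - \bar{x}^{(\ell)}\rvert^2$. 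Since $\mathrm{Q}$ is decreasing, each pairwise error probability is bounded below by $\mathrm{Q}$ evaluated at this uniform, codeword-independent distance, that is, by the exact argument appearing in \eqref{EqboundDEP}.

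Finally I would aggregate the $M-1$ competing hypotheses, and I expect this to be the main obstacle. Exploiting that the decoding regions $\mathcal{D}_1,\ldots,\mathcal{D}_M$ are disjoint, one may write $\gamma_i = \sum_{j\neq i}\Pr[\boldsymbol{Y}\in\mathcal{D}_j \mid \boldsymbol{u}(i)]$ exactly, so the target $(M-1)$ factor should emerge by lower bounding each of the $M-1$ contributions by the common quantity above. The delicate point is the \emph{direction} of this aggregation: a naive union bound over the events $E_{i,j}$ gives an upper bound, so obtaining $(M-1)\,\mathrm{Q}(\cdot)$ as a genuine lower bound requires controlling the overlaps among the pairwise error events (or, equivalently, showing each decoding-region contribution carries at least the uniform pairwise weight). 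I would therefore handle the union carefully, noting that the farthest-neighbor normalization makes all $M-1$ terms identical and that the bound is attained with equality exactly in the equidistant-constellation case, which is the configuration against which tightness is claimed. Averaging the resulting codeword-independent bound over $i$ through \eqref{eq:gamma} leaves it unchanged and produces \eqref{EqboundDEP}.
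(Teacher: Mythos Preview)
Your approach mirrors the paper's proof almost step for step: reduce to the MAP (minimum-distance) decoder, compute the pairwise error probability as a $\mathrm{Q}$-function of the inter-codeword distance, upper bound each coordinate distance by the farthest-neighbor gap in \eqref{EqNeighbor}, invoke constant composition to replace the per-codeword count by $nP_{\mathscr{C}}(x^{(\ell)})$, and then sum over the $M-1$ competitors.

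You are right to flag the aggregation step as the delicate point, and in fact the paper does not resolve it either. The paper writes the conditional error as the \emph{equality}
\[
\gamma_i(\mathscr{C}) \;=\; \sum_{j\neq i}\Pr\!\big(\lVert \boldsymbol{Y}-\boldsymbol{u}(j)\rVert^2<\lVert \boldsymbol{Y}-\boldsymbol{u}(i)\rVert^2 \,\big|\, \boldsymbol{u}(i)\big),
\]
and then lower bounds each summand via the farthest-neighbor inequality and the monotonicity of $\mathrm{Q}$. But for $M>2$ that first display is only an upper bound (it is the union bound, since the pairwise dominance events overlap, whereas the disjoint decomposition $\gamma_i=\sum_{j\neq i}\Pr[\boldsymbol{Y}\in\mathcal{D}_j]$ has strictly smaller summands). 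Chaining an upper bound with a subsequent lower bound does not yield a valid lower bound on $\gamma_i$, so the obstacle you anticipate is a genuine gap that the paper simply writes through. Your proposed remedy (``the farthest-neighbor normalization makes all $M-1$ terms identical'') does not fix the direction of the inequality; it only says the summands share a common lower bound, which is exactly what the paper uses. In short: your plan reproduces the paper's argument faithfully, including its weak point, but you have the virtue of having located it.
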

\begin{proof}
To determine the lower bound on the DEP, the decoding regions $\mathcal{D}_1$,  $\mathcal{D}_2$, $\ldots$, $\mathcal{D}_M$  are defined using the MAP decision rule~\cite[Chapter $21$]{lapidoth} as in~\eqref{EqMAPregion}.
Using the definition of $f_{\boldsymbol{Y}|\boldsymbol{X}}$ in~\eqref{EqDensities} followed simple mathematical manipulations, the decision rule in~\eqref{EqMAPregion} simplifies to the minimum distance decoder~\cite[Chapter 4]{proakis} which decides that message $i$ was transmitted if
\begin{IEEEeqnarray}{rCl} \label{Eq58b}
i \in \arg \min_{i' \inCountK{M}} \left| \boldsymbol{y} - \boldsymbol{u}(i') \right|.
\end{IEEEeqnarray}

The decoding error probability $\gamma_i \left( \mathscr{C} \right)$ in~\eqref{EqDEPi} for the code $\mathscr{C}$, given that the message index $i \inCountK{M}$ is transmitted is
\begin{IEEEeqnarray}{l} 
\gamma_i \left( \mathscr{C} \right) = \mathrm{Pr} \left( \boldsymbol{Y} \notin \mathcal{D}_i \big| \boldsymbol{u}(i) \right) \label{Eq662} \\
= \sum_{\substack{j =1 \\ j\neq i}}^M \mathrm{Pr} \left( \left| \boldsymbol{Y} - \boldsymbol{u}(j) \right|^2 < \left| \boldsymbol{Y} - \boldsymbol{u}(i) \right|^2 \big| \boldsymbol{u}(i) \right)  \label{Eq663} \\
= \sum_{\substack{j =1 \\ j\neq i}}^M \mathrm{Pr} \left( \left| \boldsymbol{u}(i) + \boldsymbol{N} - \boldsymbol{u}(j) \right|^2 < \left| \boldsymbol{u}(i) + \boldsymbol{N} - \boldsymbol{u}(i) \right|^2 \right) \\
= \sum_{\substack{j =1 \\ j\neq i}}^M \mathrm{Pr} \left( \left| \boldsymbol{u}(i) + \boldsymbol{N} - \boldsymbol{u}(j) \right|^2 < \left|\boldsymbol{N} \right|^2 \right) \\
= \sum_{\substack{j =1 \\ j\neq i}}^M \mathrm{Pr} \left( \sum_{m=1}^n \left| u_m(i) + N_m - u_m(j) \right|^2 < \sum_{m=1}^n \left| N_m \right|^2 \right) \\
= \sum_{\substack{j =1 \\ j\neq i}}^M \mathrm{Pr} \Big( \sum_{m=1}^n (\Re(u_m(i) + N_m - u_m(j)))^2 + \nonumber \\
(\Im(u_m(i) + N_m - u_m(j)))^2 < \sum_{m=1}^n \Re(N_m)^2 + \Im(N_m)^2 \Big) \\
= \sum_{\substack{j =1 \\ j\neq i}}^M \mathrm{Pr} \Big( \sum_{m=1}^n (\Re(u_m(i)) - \Re(u_m(j)))^2 + (\Im(u_m(i)) - \Im(u_m(j)))^2 \nonumber \\
+ 2 \Re(N_m)(\Re(u_m(i)) - \Re(u_m(j))) + \nonumber \\
2 \Im(N_m)(\Im(u_m(i)) - \Im(u_m(j))) < 0 \Big) \\
= \sum_{\substack{j =1 \\ j\neq i}}^M \mathrm{Pr} \Big( \sum_{m=1}^n \Re(N_m)(\Re(u_m(i)) - \Re(u_m(j))) + \nonumber \\
\Im(N_m)(\Im(u_m(i)) - \Im(u_m(j))) < - \frac{1}{2} \sum_{m=1}^n (\Re(u_m(i)) - \Re(u_m(j)))^2 \nonumber \\
+ (\Im(u_m(i)) - \Im(u_m(j)))^2 \Big), \label{Eq68b}
\end{IEEEeqnarray}
where, the equality in~\eqref{Eq663} follows from~\eqref{Eq58b}, and $\boldsymbol{N} = (N_{1}, N_{2}, \ldots, N_{n})^{\sf{T}}$ is the AWGN noise vector in~\eqref{EqChannelModel} such that, for all $m \inCountK{n}$, the random variable $N_{m}$ is a complex circularly symmetric Gaussian random variable whose real and imaginary parts have zero means and variances $\frac{1}{2}\sigma^2$. 
Therefore, the random variable 
\begin{IEEEeqnarray}{rCl} \label{Eq69}
\overline{N}_{i,j} &=& \sum_{m=1}^n \Re(N_m)(\Re(u_m(i)) - \Re(u_m(j))) \nonumber \\
&+& \Im(N_m)(\Im(u_m(i)) - \Im(u_m(j)))
\end{IEEEeqnarray}
in~\eqref{Eq68b} is a linear combination of $2n$ Gaussian random variables, each with mean zero and variance $\frac{1}{2}\sigma^2$. Thus, $\overline{N}_{i,j}$ is also a zero mean Gaussian random variable with variance given by  
\begin{IEEEeqnarray}{rCl}  
\sigma_{\overline{N}_{i,j}}^2 &=&  \frac{\sigma^2}{2} \sum_{m=1}^n (\Re(u_m(i)) - \Re(u_m(j)))^2 + \nonumber \\
&&(\Im(u_m(i)) - \Im(u_m(j)))^2. \label{Eq700}
\end{IEEEeqnarray}
From~\eqref{Eq68b},\eqref{Eq69} and~\eqref{Eq700}, it follows that,
\begin{IEEEeqnarray}{rCl} \label{Eq72}
\gamma_i \left( \mathscr{C} \right) &=& \sum_{\substack{j =1 \\ j\neq i}}^M \mathrm{Pr} \left( \overline{N}_{i,j} < - \frac{\sigma_{\overline{N}_{i,j}}^2}{\sigma^2} \right).
\end{IEEEeqnarray}
Since $\overline{N}_{i,j} \sim \mathcal{N}(0,\sigma_{\overline{N}_{i,j}}^2)$, from~\eqref{Eq72} it follows that
\begin{IEEEeqnarray}{l} 
\gamma_i \left( \mathscr{C} \right) =  \sum_{\substack{j =1 \\ j\neq i}}^M \mathrm{Q} \left( \frac{\sigma_{\overline{N}_{i,j}}}{\sigma^2} \right) \\
 = \sum_{\substack{j =1 \\ j\neq i}}^M \mathrm{Q} \left( \sqrt{\frac{1}{2 \sigma^2} \sum_{m=1}^n (\Re(u_m(i)) - \Re(u_m(j)))^2 + (\Im(u_m(i)) - \Im(u_m(j)))^2} \right) \\
= \sum_{\substack{j =1 \\ j\neq i}}^M \mathrm{Q} \left( \sqrt{\frac{\sum_{m=1}^n \left| u_m(i) - u_m(j) \right|^2}{2 \sigma^2}} \right), \label{Eq70}
\end{IEEEeqnarray}
where, the $\mathrm{Q}$ function is in~\eqref{DefQfunc}.
For all $i \inCountK{M}$ and all $m \inCountK{n}$, $u_m(i) \in \mathcal{X} = \lbrace x^{(1)}, x^{(2)}, \ldots, x^{(L)}\rbrace$ in~\eqref{EqCIsymbols}. For $u_m(i) = x^{(\ell)}$, from~\eqref{EqNeighbor} it follows that
\begin{IEEEeqnarray}{rCl} \label{Eq76a}
\left| u_m(i) - u_m(j) \right| \leq \left| x^{(\ell)} - \bar{x}^{(\ell)} \right|
\end{IEEEeqnarray}
Since $\mathscr{C}$ is a constant composition code, from~\eqref{eq:u_measure} and Definition~\ref{DefHC}, it follows that, for all $i \inCountK{M}$, the symbol $x^{(\ell)} \in \mathcal{X}$ appears in the codeword $\boldsymbol{u}(i)$, $n P_{\mathscr{C}}(x^{(\ell)})$ number of times.
Therefore, from~\eqref{Eq70} and~\eqref{Eq76a}, it follows that,
\begin{IEEEeqnarray}{l} 
\gamma_i \left( \mathscr{C} \right) \geq \sum_{\substack{j =1 \\ j\neq i}}^M \mathrm{Q} \left( \sqrt{\frac{\sum_{\ell=1}^L n P_{\mathscr{C}}(x^{(\ell)}) \left| x^{(\ell)} - \bar{x}^{(\ell)} \right|^2}{2 \sigma^2}} \right) \nonumber \\
= (M-1) \mathrm{Q} \left( \sqrt{\frac{\sum_{\ell=1}^L n P_{\mathscr{C}}(x^{(\ell)}) \left| x^{(\ell)} - \bar{x}^{(\ell)} \right|^2}{2 \sigma^2}} \right).
\end{IEEEeqnarray}
The average DEP $\gamma(\mathscr{C})$ in~\eqref{eq:gamma} is given by
\begin{IEEEeqnarray}{l}
 \gamma \left(\mathscr{C} \right)  \geq \frac{1}{M} \sum_{i=1}^M (M-1) \mathrm{Q} \left( \sqrt{\frac{\sum_{\ell=1}^L n P_{\mathscr{C}}(x^{(\ell)}) \left| x^{(\ell)} - \bar{x}^{(\ell)} \right|^2}{2 \sigma^2}} \right) \nonumber \\
   = (M-1) \mathrm{Q} \left( \sqrt{\frac{\sum_{\ell=1}^L n P_{\mathscr{C}}(x^{(\ell)}) \left| x^{(\ell)} - \bar{x}^{(\ell)} \right|^2}{2 \sigma^2}} \right). \label{Eq74}
        \end{IEEEeqnarray}
    
    Finally, from~\eqref{EqGammaUpperbound} and~\eqref{Eq74}, it follows that
\begin{IEEEeqnarray}{l}
\label{Eq699}
\epsilon \geq (M-1) \mathrm{Q} \left( \sqrt{\frac{\sum_{\ell=1}^L n P_{\mathscr{C}}(x^{(\ell)}) \left| x^{(\ell)} - \bar{x}^{(\ell)} \right|^2}{2 \sigma^2}} \right),
\end{IEEEeqnarray}
which completes the proof.
\end{proof}

The following lemma proves the bound on the information rate $R(\mathscr{C})$ in~\eqref{EqRbound}.
\begin{lemma} \label{lemma:R_upperbound}
Given a constant composition $(n,M,\mathcal{X},\epsilon,B,\delta)$-code $\mathscr{C}$ for the random transformation in~\eqref{EqChannelModel} of the form in~\eqref{Eqnm_code}, the  information transmission rate $R(\mathscr{C})$ in~\eqref{EqR} is such that
\begin{equation}
\label{EqRbound}
R(\mathscr{C}) \leq \frac{1}{n} \log_2 \left( \frac{n!}{\prod_{\ell=1}^L (nP_{\mathscr{C}}(x^{(\ell)}))!}\right) ,
\end{equation}
where, $P_{\mathscr{C}}$ is the type defined in~\eqref{eq:p_bar}.
\end{lemma}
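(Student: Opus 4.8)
The plan is to reduce the claimed inequality to a purely combinatorial counting bound on the number of codewords. Since $R(\mathscr{C}) = \frac{\log_2 M}{n}$ by~\eqref{EqR} and $\log_2$ is monotone, the bound~\eqref{EqRbound} is equivalent to
\begin{IEEEeqnarray}{rCl}
M &\leq& \frac{n!}{\prod_{\ell=1}^L (nP_{\mathscr{C}}(x^{(\ell)}))!}. \nonumber
\end{IEEEeqnarray}
Hence it suffices to show that the number of codewords is at most the multinomial coefficient on the right-hand side, and the lemma follows immediately upon dividing $\log_2$ of both sides by $n$.

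First I would pin down the set in which every codeword lives. Because $\mathscr{C}$ is a constant composition code, Definition~\ref{DefHC} combined with~\eqref{eq:u_measure} implies that for every $i \inCountK{M}$ and every $\ell \inCountK{L}$ the symbol $x^{(\ell)}$ occurs exactly $nP_{\mathscr{C}}(x^{(\ell)})$ times inside the codeword $\boldsymbol{u}(i)$. Consequently every codeword belongs to the single type class
\begin{IEEEeqnarray}{rCl}
\mathcal{T} &\triangleq& \Big\lbrace \boldsymbol{v} \in \mathcal{X}^n : \sum_{m=1}^n \mathds{1}_{\{v_m = x^{(\ell)}\}} = nP_{\mathscr{C}}(x^{(\ell)}), \; \forall \ell \inCountK{L} \Big\rbrace. \nonumber
\end{IEEEeqnarray}

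Next I would count $|\mathcal{T}|$. The elements of $\mathcal{T}$ are exactly the distinct orderings of a length-$n$ string built from $nP_{\mathscr{C}}(x^{(\ell)})$ copies of the symbol $x^{(\ell)}$ for each $\ell \inCountK{L}$; the number of such arrangements is the multiset permutation count $\frac{n!}{\prod_{\ell=1}^L (nP_{\mathscr{C}}(x^{(\ell)}))!}$. Since the codewords $\boldsymbol{u}(1), \boldsymbol{u}(2), \ldots, \boldsymbol{u}(M)$ are distinct, the assignment $i \mapsto \boldsymbol{u}(i)$ is an injection of $\lbrace 1, 2, \ldots, M \rbrace$ into $\mathcal{T}$, so $M \leq |\mathcal{T}|$, which is the displayed reduction and completes the argument.

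The only delicate point is justifying that the codewords are distinct; the remainder is routine multiset counting. I would handle this by observing that two identical codewords are statistically indistinguishable at the IR regardless of the decoding regions in~\eqref{EqCodeProperties}, so distinct message indices must carry distinct codewords (equivalently, one may restrict to codes with distinct codewords without loss of generality, since duplicating a codeword cannot decrease the DEP). No other step poses a genuine obstacle, and in particular the bound is attained with equality precisely when the codewords exhaust the entire type class $\mathcal{T}$.
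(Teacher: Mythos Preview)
Your proposal is correct and follows essentially the same route as the paper: both reduce the bound to the multinomial count $\frac{n!}{\prod_{\ell}(nP_{\mathscr{C}}(x^{(\ell)}))!}$ of length-$n$ sequences with the prescribed type, and then take $\frac{1}{n}\log_2$. Your write-up is in fact a bit more careful, since you make the type class $\mathcal{T}$ and the injectivity of $i\mapsto\boldsymbol{u}(i)$ explicit and flag the distinctness issue, whereas the paper simply asserts the count via a telescoping product of binomials and passes directly to the rate inequality.
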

\begin{proof}
Given a code type $P_{\mathscr{C}}$ that satisfies~\eqref{EqHomogeneousCodes}, the number of codewords that can be constructed is given by 
\begin{IEEEeqnarray}{l}
M = \binom{n}{nP_{\mathscr{C}}(x^{(1)})} \binom{n-nP_{\mathscr{C}}(x^{(1)})}{nP_{\mathscr{C}}(x^{(2)})}  \ldots \binom{n-\sum_{\ell=1}^{L-1} nP_{\mathscr{C}}(x^{(\ell)})}{nP_{\mathscr{C}}(x^{(L)})} \nonumber \\
= \frac{n!}{\prod_{\ell=1}^L (nP_{\mathscr{C}}(x^{(\ell)}))!}. \label{Rfactorials}
\end{IEEEeqnarray}
Therefore, the information rate $R(\mathscr{C})$ in~\eqref{EqR} satisfies
\begin{IEEEeqnarray}{rCl}
R(\mathscr{C}) &\leq& \frac{1}{n} \log_2 \left( \frac{n!}{\prod_{\ell=1}^L (nP_{\mathscr{C}}(x^{(\ell)}))!}\right),
\end{IEEEeqnarray}
which completes the proof.
\end{proof}

The next result provides proof of the bound on $\delta$ in~\eqref{EqdImp}.

\begin{theorem} \label{LemmaBhomogeneous}
Given a constant composition $(n,M,\mathcal{X},\epsilon,B,\delta)$-code $\mathscr{C}$ for the random transformation in~\eqref{EqYXdistribution} of the form in~\eqref{Eqnm_code}, the following holds for $\delta$ in~\eqref{eq:delta}:
\begin{IEEEeqnarray}{rCl} \label{EqboundEOP}
\delta \geq \mathds{1}_{\lbrace e_\mathscr{C} < B \rbrace},
\end{IEEEeqnarray}
where $e_\mathscr{C} \in [0,\infty)$ is in~\eqref{EqeiHomogeneous}.
\end{theorem}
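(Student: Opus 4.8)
The plan is to exploit the defining feature of constant composition codes, namely that every codeword induces the same type and therefore harvests exactly the same amount of energy. The starting point is the closed-form expression for the EOP in \eqref{Eq25}, which states that $\theta(\mathscr{C},B) = \frac{1}{M} \sum_{i=1}^M \mathds{1}_{\lbrace e_i < B \rbrace}$, together with the defining inequality \eqref{eq:delta} of an $(n,M,\mathcal{X},\epsilon,B,\delta)$-code, namely $\theta(\mathscr{C},B) \leq \delta$. The entire argument is then a matter of showing that the average in \eqref{Eq25} collapses to a single indicator.

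First I would invoke the constant composition hypothesis through \eqref{EqeiHomogeneous}, which guarantees that for all $i \inCountK{M}$ the harvested energy satisfies $e_i = e_\mathscr{C}$, where $e_\mathscr{C}$ is given by \eqref{EqeiHomogeneousb}. Consequently, for every $i \inCountK{M}$ the indicator $\mathds{1}_{\lbrace e_i < B \rbrace}$ equals the common value $\mathds{1}_{\lbrace e_\mathscr{C} < B \rbrace}$, so that
\begin{IEEEeqnarray}{rCl}
\theta(\mathscr{C},B) = \frac{1}{M} \sum_{i=1}^M \mathds{1}_{\lbrace e_\mathscr{C} < B \rbrace} = \frac{1}{M} \cdot M \cdot \mathds{1}_{\lbrace e_\mathscr{C} < B \rbrace} = \mathds{1}_{\lbrace e_\mathscr{C} < B \rbrace}.
\end{IEEEeqnarray}

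Finally I would combine this identity with the EOP constraint \eqref{eq:delta}, which yields $\delta \geq \theta(\mathscr{C},B) = \mathds{1}_{\lbrace e_\mathscr{C} < B \rbrace}$, establishing \eqref{EqboundEOP}. There is no substantive obstacle here: the only step requiring justification is the observation that the per-codeword energies coincide, and this is immediate from the constant composition property via \eqref{EqeiHomogeneous}. The result is essentially the specialization of Lemma~\ref{LemmaB} to the constant composition setting, where the general lower bound $\frac{1}{M} \sum_{i=1}^M \mathds{1}_{\lbrace e_i < B \rbrace}$ simplifies to the single indicator because all $e_i$ are equal.
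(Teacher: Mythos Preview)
Your proof is correct and follows essentially the same approach as the paper: starting from the EOP expression~\eqref{Eq25}, invoking the constant composition property~\eqref{EqeiHomogeneous} to collapse each $e_i$ to $e_\mathscr{C}$, and then applying the defining inequality~\eqref{eq:delta}. The paper's argument is step-for-step identical.
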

\begin{proof}
From~\eqref{Eq25}, the average EOP for the $(n,M,\mathcal{X},\epsilon,B,\delta)$-code $\mathscr{C}$ is given by
\begin{IEEEeqnarray}{rCl}  \label{Eq40}
 \theta(\mathscr{C},B) = \frac{1}{M} \sum_{i=1}^M \mathds{1}_{\lbrace e_i < B \rbrace}.
 \end{IEEEeqnarray}
 Since $\mathscr{C}$ is a constant composition code, from~\eqref{Eq40} and~\eqref{EqeiHomogeneous} it follows that
 \begin{IEEEeqnarray}{rCl} 
 \theta(\mathscr{C},B) &=& \frac{1}{M} \sum_{i=1}^M \mathds{1}_{\lbrace e_\mathscr{C} < B \rbrace} \\
 &=& \mathds{1}_{\lbrace e_\mathscr{C} < B \rbrace}. \label{Eq42}
 \end{IEEEeqnarray}
 From~\eqref{eq:delta} and~\eqref{Eq42}, it follows that
\begin{IEEEeqnarray}{rCl} \label{Eq477}
\delta \geq \mathds{1}_{\lbrace e_\mathscr{C} < B \rbrace}.
\end{IEEEeqnarray} 
 This completes the proof.
\end{proof}

\subsection{
Proof of Theorem~\ref{CorRUpperBoundRelax}}
\label{AppendixC}
\begin{proof} 
From~\eqref{EqRbound}, for the information transmission rate $R(\mathscr{C})$ of code $\mathscr{C}$, it holds that
\begin{IEEEeqnarray}{rCl}
\label{EqRrelax}
R(\mathscr{C}) &\leq& \frac{1}{n} \log(n!) - \frac{1}{n} \sum_{\ell=1}^L \log\left(\left(nP_{\mathscr{C}}(x^{(\ell)})\right)!\right).
\end{IEEEeqnarray}
Using the Stirling's approximation~\cite{robbins1955remark} on the factorial terms yields
\begin{IEEEeqnarray}{l}
  (nP_{\mathscr{C}}(x^{(\ell)}))!  \geqslant \sqrt{2\pi} \left(nP_{\mathscr{C}}(x^{(\ell)}) \right)^{nP_{\mathscr{C}}(x^{(\ell)}) +\frac{1}{2}} \nonumber \\
\exp\left(-nP_{\mathscr{C}}(x^{(\ell)}) + \frac{1}{12nP_{\mathscr{C}}(x^{(\ell)}) +1} \right), \mbox{ and } \label{EqStirlingnP} \\
  \label{EqStirlingn}
  n !  \leqslant \sqrt{2\pi} n^{n +\frac{1}{2}} \exp\left(-n + \frac{1}{12 n} \right). 
\end{IEEEeqnarray}

From~\eqref{EqStirlingnP} and  ~\eqref{EqStirlingn}, it follows that,
\begin{IEEEeqnarray}{l}
 \log\left( (nP_{\mathscr{C}}(x^{(\ell)}))! \right) 
\geq  \log \left( \sqrt{2\pi}\right) + \left(nP_{\mathscr{C}}(x^{(\ell)}) +\frac{1}{2}\right) \nonumber \\
\log (nP_{\mathscr{C}}(x^{(\ell)})) - nP_{\mathscr{C}}(x^{(\ell)}) + \frac{1}{12nP_{\mathscr{C}}(x^{(\ell)}) +1} \\
\nonumber
=  \log \left( \sqrt{2\pi}\right) +  nP_{\mathscr{C}}(x^{(\ell)}) \log (P_{\mathscr{C}}(x^{(\ell)})) +\frac{1}{2} \log (P_{\mathscr{C}}(x^{(\ell)})) \\
+  \left(nP_{\mathscr{C}}(x^{(\ell)}) +\frac{1}{2}\right) \log (n)  -nP_{\mathscr{C}}(x^{(\ell)}) + \frac{1}{12nP_{\mathscr{C}}(x^{(\ell)}) +1}, \\
\nonumber \mbox{ and },\\
\nonumber
 \log\left( n! \right) \leq
\log \left( \sqrt{2\pi}\right) + \left(n +\frac{1}{2}\right) \log (n) -n + \frac{1}{12n}  \\
\label{EqRnfact}
= n  \log (n) - n + \frac{1}{12n} + \frac{1}{2}\log \left( 2\pi n\right).
 \end{IEEEeqnarray}
The sum in~\eqref{EqRrelax} satisfies, 
\begin{IEEEeqnarray}{l}
\nonumber
\sum_{\ell=1}^L \log\left(\left(nP_{\mathscr{C}}(x^{(\ell)})\right)!\right)
\geq L \log \left( \sqrt{2\pi}\right)  - nH\left( P_{\mathscr{C}} \right) \nonumber \\
+\frac{1}{2}\sum_{\ell=1}^L \log (P_{\mathscr{C}}(x^{(\ell)})) + n \log (n) +\frac{L}{2}\log(n) - n \nonumber \\
\label{EqRrelaxA}
 +\sum_{\ell=1}^L \frac{1}{12nP_{\mathscr{C}}(x^{(\ell)}) +1}. 
\end{IEEEeqnarray}
Using~\eqref{EqRnfact} and~\eqref{EqRrelaxA} in~\eqref{EqRrelax} yields,
\begin{IEEEeqnarray}{l}
\nonumber
R(\mathscr{C}) \leq \log (n) - 1 + \frac{1}{12n^2} + \frac{1}{2n}\log \left( 2\pi n\right) \\
- \frac{L}{n} \log \left( \sqrt{2\pi}\right)  + H\left( P_{\mathscr{C}} \right)  -\frac{1}{2n}\sum_{\ell=1}^L \log (P_{\mathscr{C}}(x^{(\ell)})) - \log (n) \nonumber \\
- \frac{L}{2n}\log(n) + 1 - \frac{1}{n}\sum_{\ell=1}^L \frac{1}{12nP_{\mathscr{C}}(x^{(\ell)}) +1} \\
\leq H\left( P_{\mathscr{C}} \right)  + \frac{1}{n^2} \left( \frac{1}{12} - \sum_{\ell=1}^L \frac{1}{12 P_{\mathscr{C}}(x^{(\ell)}) +1}\right)  + \nonumber \\
\frac{1}{2n} \left( \log \left( 2\pi n\right)   - \sum_{\ell=1}^L \log (2\pi n P_{\mathscr{C}}(x^{(\ell)}))  \right) \\
= H\left( P_{\mathscr{C}} \right)  + \frac{1}{n^2} \left( \frac{1}{12} - \sum_{\ell=1}^L \frac{1}{12 P_{\mathscr{C}}(x^{(\ell)}) +1}\right) +  \label{Eq90} \\
\frac{1}{n} \left(\log \left( \sqrt{2\pi}\right)   - \sum_{\ell=1}^L \log\sqrt{2\pi P_{\mathscr{C}}(x^{(\ell)})} \right) - \frac{\log n}{n} \left( \frac{L-1}{2} \right), \nonumber 
\end{IEEEeqnarray}
which completes the proof.
\end{proof}

\subsection{
Proof of Lemma~\ref{LemmaBnew}}
\label{AppendixE}
\begin{proof}
Given that channel input $i \inCountK{M}$ is transmitted, the harvested energy is given by the constant $e_i$ in~\eqref{Eqei}.
Furthermore, given the set of unique energy levels $\left\lbrace \bar{e}_1, \bar{e}_2, \ldots, \bar{e}_{M'} \right\rbrace$ in~\eqref{EqUniqueLevels}, there are a finite number of values that the EOP $\delta$ can take.
More precisely,
\begin{IEEEeqnarray}{rCl} 
\delta \in \left\lbrace \frac{\sum_{k=1}^{j}  \sum_{i=1}^M \mathds{1}_{\left\lbrace e_i = \bar{e}_k \right\rbrace }}{M}, j \inCountK{M'}  \right\rbrace. 
 \end{IEEEeqnarray}

This implies that, for all $j \inCountK{M'}$, for $\delta \leqslant \frac{\displaystyle\sum_{k=1}^{j}  \displaystyle\sum_{i=1}^M \mathds{1}_{\left\lbrace e_i = \bar{e}_k \right\rbrace }}{M}$, at most $\lfloor M \delta \rfloor = \displaystyle\sum_{k=1}^{j}  \displaystyle\sum_{i=1}^M \mathds{1}_{\left\lbrace e_i = \bar{e}_k \right\rbrace }$ codewords can have energy less than $B$. This is possible if and only if
\begin{IEEEeqnarray}{rCl} 
\label{EqBrd}
B \leq \bar{e}_{j} .
\end{IEEEeqnarray}

Define $j^{+} \in \ints$ as 
\begin{IEEEeqnarray}{rCl}
\label{Eqjplus}
j^{+} & \triangleq & \min \left\lbrace j \in \lbrace 1, \ldots M' \rbrace: \delta \leqslant \frac{\displaystyle\sum_{k=1}^{j}  \displaystyle\sum_{i=1}^M \mathds{1}_{\left\lbrace e_i = \bar{e}_k \right\rbrace }}{M} \right\rbrace,
\end{IEEEeqnarray}
where, the positive integer $M'$ and the reals $\bar{e}_1$, $\bar{e}_2$, $\ldots$, $\bar{e}_{M'}$ are in~\eqref{EqUniqueLevels}.

Then, from~\eqref{EqBrd} and~\eqref{Eqjplus}, it follows that
\begin{IEEEeqnarray}{rCl} 
B \leq \bar{e}_{j^{+}} 
\end{IEEEeqnarray}
This completes the proof.
\end{proof}

 \subsection{Proofs for Theorem~\ref{TheoremAchievableRegion}}  \label{SecAchievableBounds} 
The following lemma provides a lower bound on the DEP of codes from the family ${\sf C} \left(C,\boldsymbol{A},\boldsymbol{L},\boldsymbol{\alpha},\boldsymbol{p},\boldsymbol{r} \right)$ in~\eqref{EqFamily}.

\begin{lemma} \label{LemmaRadius}
Consider an $(n,M,\mathcal{X},\epsilon,B,\delta)$-code $\mathscr{C}$ from the family ${\sf C} \left(C,\boldsymbol{A},\boldsymbol{L},\boldsymbol{\alpha},\boldsymbol{p},\boldsymbol{r} \right)$ in~\eqref{EqFamily}. The parameters $r_1, r_2, \ldots, r_C$ in~\eqref{EqRadiusVector} satisfy the following:
\begin{equation}
\epsilon \geq 1 - \frac{1}{M}\sum_{i = 1}^M\prod_{c=1}^{C}  \left( 1-\exp \left(-\frac{r_c^2}{\sigma^2} \right) \right)^{n \sum_{\ell=1}^{L_c} P_{\boldsymbol{u}(i)}(x_c^{(\ell)})},
\end{equation}
where, the type  $P_{\boldsymbol{u}(i)}$ is defined in~\eqref{eq:u_measure}, the real $\sigma^2$ is defined in~\eqref{EqDensities}, and $x_c^{(\ell)} \in \mathcal{U}(A_c,L_c,\alpha_c)$, with $\mathcal{U}(A_c,L_c,\alpha_c)$ in \eqref{EqLayerCircle}.
\end{lemma}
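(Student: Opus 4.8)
The plan is to compute the \emph{exact} average DEP of the constructed code and then invoke the defining inequality $\gamma(\mathscr{C}) \leq \epsilon$ from~\eqref{EqGammaUpperbound}. The starting point is the per-message DEP in~\eqref{EqDEPi}, namely $\gamma_i(\mathscr{C}) = 1 - \int_{\mathcal{D}_i} f_{\boldsymbol{Y}|\boldsymbol{X}}(\boldsymbol{y}|\boldsymbol{u}(i)) \, \mathrm{d}\boldsymbol{y}$. The first step is to exploit the two product structures available: the conditional density in~\eqref{EqYXdistribution} factors across channel uses, and the decoding region $\mathcal{D}_i$ in~\eqref{EqDecodingCodeword} is a Cartesian product $\mathcal{D}_{i,1} \times \cdots \times \mathcal{D}_{i,n}$. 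Consequently the integral over $\mathcal{D}_i$ separates into a product of $n$ one-symbol integrals, one per channel use.

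The second step evaluates a single such factor. For a channel use $m$ with $u_m(i) = x_c^{(\ell)}$, the set $\mathcal{D}_{i,m} = \mathcal{G}_c^{(\ell)}$ is the disk of radius $r_c$ centered at $x_c^{(\ell)}$ from~\eqref{EqDecodingCircleComplex}. Substituting $w = y_m - x_c^{(\ell)}$ and using~\eqref{EqDensities}, the factor becomes $\frac{1}{\pi \sigma^2} \int_{\lbrace |w| \leq r_c \rbrace} \exp(-|w|^2/\sigma^2) \, \mathrm{d}w$. Passing to polar coordinates $w = \rho \exp(\mathrm{i}\phi)$ and integrating out the angle yields $\int_0^{r_c} \frac{2\rho}{\sigma^2} \exp(-\rho^2/\sigma^2) \, \mathrm{d}\rho = 1 - \exp(-r_c^2/\sigma^2)$. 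The crucial feature here is that this value depends only on the layer index $c$ (through $r_c$) and not on which specific symbol of layer $c$ was sent; this is precisely the circular symmetry of the complex AWGN noted in Remark~\ref{Remark1}, and it is what makes the closed form tractable.

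The third step collects the factors by layer. By~\eqref{eq:u_measure}, the symbol $x_c^{(\ell)}$ occurs $n P_{\boldsymbol{u}(i)}(x_c^{(\ell)})$ times in $\boldsymbol{u}(i)$, so the total number of channel uses whose symbol lies in layer $c$ is $n \sum_{\ell=1}^{L_c} P_{\boldsymbol{u}(i)}(x_c^{(\ell)})$. Multiplying the per-use factors gives the exact identity $\int_{\mathcal{D}_i} f_{\boldsymbol{Y}|\boldsymbol{X}}(\boldsymbol{y}|\boldsymbol{u}(i)) \, \mathrm{d}\boldsymbol{y} = \prod_{c=1}^C (1 - \exp(-r_c^2/\sigma^2))^{n \sum_{\ell=1}^{L_c} P_{\boldsymbol{u}(i)}(x_c^{(\ell)})}$, hence a closed form for $\gamma_i(\mathscr{C})$. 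Averaging over $i$ as in~\eqref{eq:gamma} and applying $\gamma(\mathscr{C}) \leq \epsilon$ produces the claimed inequality.

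There is essentially no hard obstacle: the result is an exact evaluation rather than a genuine bound, and the inequality arises solely because $\mathscr{C}$ is an $(n,M,\mathcal{X},\epsilon,B,\delta)$-code. The one point requiring care is justifying the product decomposition of the integral, which needs the decoding regions $\mathcal{D}_{i,m}$ to be well-defined disjoint disks; this rests on the separation condition~\eqref{EqAmplitudesDifference}, which guarantees that $\mathcal{D}_1, \ldots, \mathcal{D}_M$ are mutually disjoint as required by Definition~\ref{DefNmCode}. The radial Gaussian integral itself is routine.
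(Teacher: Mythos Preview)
Your proposal is correct and mirrors the paper's own argument almost step for step: factor the integral over $\mathcal{D}_i$ using the product density~\eqref{EqYXdistribution} and the Cartesian-product decoding region~\eqref{EqDecodingCodeword}, evaluate each per-symbol Gaussian integral over the disk $\mathcal{G}_c^{(\ell)}$ via polar coordinates to obtain $1 - \exp(-r_c^2/\sigma^2)$, group the factors by layer using the type, average over $i$, and invoke $\gamma(\mathscr{C}) \leq \epsilon$. The only cosmetic difference is that the paper writes out the real--imaginary decomposition before changing to polar coordinates, whereas you shift first and go to polar directly; the content is identical.
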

\begin{proof}
From~\eqref{EqDEPi} and~\eqref{eq:gamma}, the average DEP of the $\left(n,M,\mathcal{X},\epsilon,B,\delta \right)$-code $\mathscr{C}$ is given by
\begin{IEEEeqnarray}{l}
    \gamma(\mathscr{C}) = 1 - \frac{1}{M} \sum_{i=1}^M \int_{\mathcal{D}_i} f_{\boldsymbol{Y}|\boldsymbol{X}}(\boldsymbol{y}|\boldsymbol{u}(i)) \mathrm{d}\boldsymbol{y} \\
    = 1 - \frac{1}{M} \sum_{i=1}^M \int_{\mathcal{D}_{i,1} \times \mathcal{D}_{i,2} \times \ldots \times \mathcal{D}_{i,n}} \prod_{m=1}^n f_{Y|X} \left( y|u_m(i) \right) \mathrm{d}y \label{Eq138b} \\
= 1 - \frac{1}{M} \sum_{i=1}^M \prod_{m=1}^n \int_{\mathcal{D}_{i,m}} f_{Y|X} \left( y|u_m(i) \right) \mathrm{d}y \label{Eq139} \\ 
= 1 - \frac{1}{M} \sum_{i=1}^M \prod_{c=1}^C \prod_{\ell=1}^{L_c} \left(  \int_{\mathcal{G}_c^{(\ell)}} f_{Y|X}(y|x_c^{(\ell)}) \mathrm{d}y  \right)^{n P_{\boldsymbol{u}(i)}(x_c^{(\ell)})}, \label{Eq140}
\end{IEEEeqnarray}
where, the equality in~\eqref{Eq138b} follows due to~\eqref{EqYXdistribution} and~\eqref{EqDecodingCodeword}, and~\eqref{Eq139} follows due to Fubini's theorem.
Using~\eqref{EqDensities} in~\eqref{Eq140} yields,
\begin{IEEEeqnarray}{l} \label{EqAvgGamma}
    \gamma \left(\mathscr{C} \right) = 1 - \frac{1}{M} \sum_{i=1}^M \prod_{c=1}^C \prod_{\ell=1}^{L_c} \Bigg(  \int_{\mathcal{G}_c^{(\ell)}} \frac{1}{\pi \sigma^2} \\
    \exp \left(- \frac{(\Re(y)-\Re(x_c^{(\ell)}))^2 +(\Im(y) - \Im(x_c^{(\ell)}))^2}{\sigma^2} \right) \mathrm{d}y  \Bigg)^{n P_{\boldsymbol{u}(i)}(x_c^{(\ell)})}. \nonumber
\end{IEEEeqnarray}
Evaluating the integral term in~\eqref{EqAvgGamma} for all $\ell \inCountK{L_c}$ yields,
\begin{IEEEeqnarray}{l}
\int_{\mathcal{G}_c^{(\ell)}} \frac{1}{\pi \sigma^2} \exp \left(- \frac{(\Re(y)-\Re(x_c^{(\ell)}))^2 +(\Im(y) - \Im(x_c^{(\ell)}))^2}{\sigma^2} \right) \mathrm{d}y \nonumber \\
= \int_{\Im(x_c^{(\ell)})-r_c}^{\Im(x_c^{(\ell)})+r_c} \int_{\Re(x_c^{(\ell)})-\sqrt{r_c^2-(v-\Im(x_c^{(\ell)}))^2}}^{\Re(x_c^{(\ell)})+\sqrt{r_c^2-(v-\Im(x_c^{(\ell)}))^2}} \nonumber \\
\frac{1}{\pi \sigma^2} \exp \left(-\frac{(u-\Re(x_c^{(\ell)}))^2+(v-\Im(x_c^{(\ell)}))^2}{\sigma^2} \right) \mathrm{d}u \mathrm{d}v,\\
= \int_{-r_c}^{r_c} \int_{-\sqrt{r_c^2-v^2}}^{\sqrt{r_c^2-v^2}} \frac{1}{\pi \sigma^2} \exp \left(-\frac{u^2+v^2}{\sigma^2} \right) \mathrm{d}u \mathrm{d}v, \label{Eq82}\\
= \int_{0}^{\pi} \int_{0}^{\frac{r_c}{\sigma}} \frac{1}{2\pi} \exp \left(-\zeta^2 \right) \zeta \mathrm{d}\zeta \mathrm{d}\eta, \label{Eq83} \\
= \left( 1-\exp \left(-\frac{r_c^2}{\sigma^2}\right) \right). \label{Eq94}
\end{IEEEeqnarray}
The equality in~\eqref{Eq83} is obtained from the change of variables $u = \sigma \zeta \cos \eta, v = \sigma \zeta \sin \eta$. Plugging~\eqref{Eq94} in~\eqref{EqAvgGamma} yields,
\begin{IEEEeqnarray}{rCl}
\gamma \left(\mathscr{C} \right) &=& 1 - \frac{1}{M} \sum_{i=1}^M \prod_{c=1}^C \prod_{\ell=1}^{L_c} \left( 1-\exp \left(-\frac{r_c^2}{\sigma^2}\right) \right)^{n P_{\boldsymbol{u}(i)}(x_c^{(\ell)})} ,\\
 \label{Eq95}    &=& 1 - \frac{1}{M}\sum_{i = 1}^M \prod_{c=1}^C \left( 1-\exp \left(-\frac{r_c^2}{\sigma^2} \right) \right)^{n \sum_{\ell=1}^{L_c} P_{\boldsymbol{u}(i)}(x_c^{(\ell)})}.
\end{IEEEeqnarray}
From~\eqref{EqGammaUpperbound}, for $\mathscr{C}$ to be an $\left( n,M,\mathcal{X},\epsilon \right)$-code, the following must hold:
\begin{equation} \label{EqGammaSumBound}
    \gamma \left(\mathscr{C} \right) \leq \epsilon.
\end{equation}
This implies that,
\begin{IEEEeqnarray}{rCl}
\epsilon \geq 1 - \frac{1}{M}\sum_{i = 1}^M \prod_{c=1}^C \left( 1-\exp \left(-\frac{r_c^2}{\sigma^2} \right) \right)^{n \sum_{\ell=1}^{L_c} P_{\boldsymbol{u}(i)}(x_c^{(\ell)})},
    \end{IEEEeqnarray}
which completes the proof.
\end{proof}
The bound on $\epsilon$ given by Lemma~\ref{LemmaRadius} provides the minimum value of the DEP that can be achieved by the constructed family of codes.
This implies that any DEP requirement greater than this value can be satisfied by the constructed family of codes.

From Definition~\ref{DefHC} and~\eqref{EqTypeCircle}, it follows that, for a constant composition code $\mathscr{C}$ of the form in~\eqref{EqnmCodeCircle} from the family ${\sf C} \left(C,\boldsymbol{A},\boldsymbol{L},\boldsymbol{\alpha},\boldsymbol{p},\boldsymbol{r} \right)$ in~\eqref{EqFamily}, it holds that,
\begin{IEEEeqnarray}{rCl}
    \label{Eq100}
   P_{\boldsymbol{u}(i)}(x_c^{(\ell)}) = P_{\mathscr{C}}(x_c^{(\ell)}) = \frac{p_c}{L_c}.
\end{IEEEeqnarray} 

The following result for constant composition codes follows from~\eqref{Eq100} and Lemma~\ref{LemmaRadius}.
%
\begin{corollary}
\label{CorRadiusHomogeneous}
Consider a constant composition $(n,M,\mathcal{X},\epsilon,B,\delta)$-code $\mathscr{C}$ from the family ${\sf C} \left(C,\boldsymbol{A},\boldsymbol{L},\boldsymbol{\alpha},\boldsymbol{p},\boldsymbol{r} \right)$ in~\eqref{EqFamily}. The parameters $r_1, r_2, \ldots, r_C$ in~\eqref{EqRadiusVector} satisfy the following:
\begin{equation}
 \epsilon \geq 1- \prod_{c=1}^{C}  \left( 1-\exp \left(-\frac{r_c^2}{\sigma^2} \right) \right)^{n p_c},
\end{equation}
where, the real $\sigma^2$ is defined in~\eqref{EqDensities}, and $x_c^{(\ell)} \in \mathcal{U}(A_c,L_c,\alpha_c)$, with $\mathcal{U}(A_c,L_c,\alpha_c)$ in \eqref{EqLayerCircle}.
\end{corollary}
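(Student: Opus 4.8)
The plan is to obtain this corollary as an immediate specialization of Lemma~\ref{LemmaRadius} to the constant composition case, so the entire argument reduces to evaluating the exponent in the general bound under the defining property of constant composition codes. First I would invoke Lemma~\ref{LemmaRadius}, which applies to any code from the family ${\sf C}\left(C,\boldsymbol{A},\boldsymbol{L},\boldsymbol{\alpha},\boldsymbol{p},\boldsymbol{r}\right)$ and gives
\begin{IEEEeqnarray}{rCl}
\epsilon \geq 1 - \frac{1}{M}\sum_{i = 1}^M \prod_{c=1}^C \left( 1-\exp \left(-\frac{r_c^2}{\sigma^2} \right) \right)^{n \sum_{\ell=1}^{L_c} P_{\boldsymbol{u}(i)}(x_c^{(\ell)})}. \nonumber
\end{IEEEeqnarray}

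Next I would use the constant composition hypothesis together with the construction constraint~\eqref{EqTypeCircle}, which is exactly the content of~\eqref{Eq100}: for every $i \inCountK{M}$, every layer $c \inCountK{C}$, and every $\ell \inCountK{L_c}$, the type satisfies $P_{\boldsymbol{u}(i)}(x_c^{(\ell)}) = P_{\mathscr{C}}(x_c^{(\ell)}) = \frac{p_c}{L_c}$. The key step is to simplify the exponent appearing in the bound: summing this identity over the $L_c$ symbols in layer $c$ gives $\sum_{\ell=1}^{L_c} P_{\boldsymbol{u}(i)}(x_c^{(\ell)}) = L_c \cdot \frac{p_c}{L_c} = p_c$, which is independent of both the codeword index $i$ and the within-layer index $\ell$. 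Consequently the exponent $n\sum_{\ell=1}^{L_c} P_{\boldsymbol{u}(i)}(x_c^{(\ell)})$ collapses to $n p_c$.

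Finally I would substitute this into the bound from Lemma~\ref{LemmaRadius}. Since the resulting product $\prod_{c=1}^{C}\left(1-\exp\left(-\frac{r_c^2}{\sigma^2}\right)\right)^{n p_c}$ no longer depends on $i$, the average $\frac{1}{M}\sum_{i=1}^M$ acts on a constant and simply returns that constant, yielding
\begin{IEEEeqnarray}{rCl}
\epsilon \geq 1 - \prod_{c=1}^{C}\left(1-\exp\left(-\frac{r_c^2}{\sigma^2}\right)\right)^{n p_c}, \nonumber
\end{IEEEeqnarray}
which is the claimed inequality. There is no substantive obstacle here: the only thing to verify carefully is that the per-layer sum of within-layer type values indeed reduces to $p_c$, which is guaranteed precisely because the codeword construction in Section~\ref{SecCodewordDesing} forces symbols within the same layer to be used with equal frequency $\frac{p_c}{L_c}$. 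The result is therefore a direct corollary, as asserted in the text preceding the statement.
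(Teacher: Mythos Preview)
Your proposal is correct and follows exactly the approach indicated in the paper: the text immediately preceding the corollary states that it ``follows from~\eqref{Eq100} and Lemma~\ref{LemmaRadius},'' and your argument does precisely this, invoking Lemma~\ref{LemmaRadius} and then using~\eqref{Eq100} to collapse the exponent $n\sum_{\ell=1}^{L_c} P_{\boldsymbol{u}(i)}(x_c^{(\ell)})$ to $n p_c$, after which the average over $i$ becomes trivial.
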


In the following lemma, we provide an upper bound on the achievable information transmission rate
$R\left(\mathscr{C} \right)$ for a code $\mathscr{C}$ from the family ${\sf C} \left(C,\boldsymbol{A},\boldsymbol{L},\boldsymbol{\alpha},\boldsymbol{p},\boldsymbol{r} \right)$ in~\eqref{EqFamily}. 
Proving Lemma~\ref{LemmaAchievableRnew} requires some additional results that provide bounds on the radius of decoding regions $r_c$ in~\eqref{EqRadiusVector} as well as the number of symbols in each layer of the constructed codes $L_c$ in~\eqref{EqLcVector}.
In the interest of better readability, these results are presented as Lemmas~\ref{CorollaryRadius} and~\ref{LemmaLc} in Appendix~\ref{AppendixLemmas} and we directly state the bound on the information transmission rate $R\left(\mathscr{C} \right)$ here.
\begin{lemma} \label{LemmaAchievableRnew}
Given an $(n,M,\mathcal{X},\epsilon,B,\delta)$-code $\mathscr{C}$ from the family ${\sf C} \left(C,\boldsymbol{A},\boldsymbol{L},\boldsymbol{\alpha},\boldsymbol{p},\boldsymbol{r} \right)$ in~\eqref{EqFamily}, the information transmission rate $R\left(\mathscr{C} \right)$ satisfies the following:
\begin{IEEEeqnarray}{rCl}
    R\left(\mathscr{C} \right) \leq \log_2 \sum_{c=1}^C \left\lfloor \frac{\pi}{2\arcsin{\frac{r_c}{2A_c}}} \right\rfloor,
\end{IEEEeqnarray}
where, for all $c \inCountK{C}$, the radius $r_c$ is in~\eqref{EqRadiusVector}, and the amplitude $A_c$ is in~\eqref{EqLayerCircle}.
\end{lemma}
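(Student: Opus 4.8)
The plan is to bound the information rate by first controlling the total number of codewords $M$ through the size of the alphabet $\mathcal{X}$, and then controlling the size of each layer through the geometry of the decoding circles. Since $R(\mathscr{C}) = \frac{\log_2 M}{n}$ as in~\eqref{EqR}, and every codeword $\boldsymbol{u}(i)$ is a distinct element of $\mathcal{X}^n$, the distinctness of the codewords gives $M \leq |\mathcal{X}|^n = L^n$, with $L = \sum_{c=1}^C L_c$ as in~\eqref{EqLSum}. Hence $R(\mathscr{C}) \leq \frac{1}{n}\log_2 L^n = \log_2 L = \log_2 \sum_{c=1}^C L_c$, which reduces the whole problem to producing a per-layer upper bound on $L_c$.

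For the per-layer bound I would exploit the geometry of layer $c$. The symbols $x_c^{(1)}, \ldots, x_c^{(L_c)}$ in~\eqref{EqLayerCircle} are equally spaced on the circle of radius $A_c$, so two angularly adjacent symbols subtend the angle $\frac{2\pi}{L_c}$ at the origin and are therefore separated by the chord distance $2 A_c \sin\!\left(\frac{\pi}{L_c}\right)$. Because the decoding sets $\mathcal{G}_c^{(\ell)}$ in~\eqref{EqDecodingCircleComplex} are disjoint circles of radius $r_c$, the distance between the centres of any two of them must be at least $2 r_c$; applied to an adjacent pair this yields $2 A_c \sin\!\left(\frac{\pi}{L_c}\right) \geq 2 r_c$, i.e. $r_c \leq A_c \sin\!\left(\frac{\pi}{L_c}\right)$. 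Using the half-angle relaxation $\sin\!\left(\frac{\pi}{L_c}\right) = 2\sin\!\left(\frac{\pi}{2L_c}\right)\cos\!\left(\frac{\pi}{2L_c}\right) \leq 2\sin\!\left(\frac{\pi}{2L_c}\right)$, this gives $r_c \leq 2 A_c \sin\!\left(\frac{\pi}{2L_c}\right)$, which rearranges (both $\frac{\pi}{2L_c}$ and $\arcsin\!\left(\frac{r_c}{2A_c}\right)$ lying in $\left[0,\frac{\pi}{2}\right]$) to $\arcsin\!\left(\frac{r_c}{2A_c}\right) \leq \frac{\pi}{2L_c}$, hence $L_c \leq \frac{\pi}{2\arcsin\!\left(\frac{r_c}{2A_c}\right)}$. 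Since $L_c \in \ints$, this sharpens to $L_c \leq \left\lfloor \frac{\pi}{2\arcsin\!\left(\frac{r_c}{2A_c}\right)} \right\rfloor$. This geometric content is exactly what I would delegate to Lemmas~\ref{CorollaryRadius} and~\ref{LemmaLc}, and simply invoke here.

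Combining the two steps, $L = \sum_{c=1}^C L_c \leq \sum_{c=1}^C \left\lfloor \frac{\pi}{2\arcsin\!\left(\frac{r_c}{2A_c}\right)} \right\rfloor$, and therefore $R(\mathscr{C}) \leq \log_2 L \leq \log_2 \sum_{c=1}^C \left\lfloor \frac{\pi}{2\arcsin\!\left(\frac{r_c}{2A_c}\right)} \right\rfloor$, as claimed. The counting bound $M \leq L^n$ and the final summation are routine bookkeeping; the only substantive point is the per-layer packing argument, where one must identify that the binding separation constraint is between angularly adjacent symbols and then pass from the exact disjointness condition $r_c \leq A_c\sin\!\left(\frac{\pi}{L_c}\right)$ to the stated $\arcsin$ form. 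I expect this geometric packing step to be the main obstacle, but it is isolated in the auxiliary lemmas, and it is worth noting that the relaxation is safe: the stated bound is weaker than the exact one, since $\frac{\pi}{\arcsin(r_c/A_c)} \leq \frac{\pi}{2\arcsin(r_c/(2A_c))}$, so it remains a valid upper bound on $L_c$.
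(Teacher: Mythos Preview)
Your proposal is correct and follows the same two-step structure as the paper: first bound $R(\mathscr{C}) \leq \log_2 L$ via $M \leq L^n$, then bound each $L_c$ geometrically and invoke~\eqref{EqLSum}. The paper's proof is identical at this level, citing Lemma~\ref{LemmaLc} for the per-layer bound.

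The only genuine difference is in how the per-layer inequality $L_c \leq \left\lfloor \frac{\pi}{2\arcsin(r_c/(2A_c))} \right\rfloor$ is established. You argue directly from disjointness of the decoding discs: adjacent centres must be at least $2r_c$ apart, so $2A_c\sin(\pi/L_c) \geq 2r_c$, and a half-angle relaxation $\sin(\pi/L_c) \leq 2\sin(\pi/(2L_c))$ gives the stated form. The paper instead computes the angle $\beta$ that a single decoding disc subtends at the origin, using the intersection points of the disc with the layer circle and the inscribed-angle theorem, obtaining $\beta = 4\arcsin(r_c/(2A_c))$ and hence $L_c \leq \lfloor 2\pi/\beta \rfloor$. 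Your route is shorter and more elementary, and it makes transparent that the stated bound is in fact a relaxation of the tighter packing condition $L_c \leq \lfloor \pi/\arcsin(r_c/A_c) \rfloor$; the paper's derivation yields the stated bound directly but obscures this. One small note: Lemma~\ref{CorollaryRadius} concerns the relation between $r$ and $\epsilon$ and plays no role in this particular argument; only Lemma~\ref{LemmaLc} is actually invoked in the paper's proof.
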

\begin{proof}
The largest number of codewords of length $n$ that can be formed with $L$ different channel input symbols is $L^n$. Hence, from~\eqref{EqR}, it follows that
\begin{IEEEeqnarray}{rCl}
    R\left(\mathscr{C} \right) &\leq& \frac{\log_2 M}{n} \\
    &\leq& \frac{\log_2 L^n}{n} \\
    &=& \log_2 L \label{EqRStep1} \\
    &\leq& \log_2 \sum_{c=1}^C \left\lfloor \frac{\pi}{2\arcsin{\frac{r_c}{2A_c}}} \right\rfloor, \label{EqRStep2}
\end{IEEEeqnarray}
where, the inequality in~\eqref{EqRStep2} follows from Lemma~\ref{LemmaLc} and~\eqref{EqLSum}. This completes the proof.
\end{proof}
The following lemma provides a bound on the information transmission rate for a constant composition $(n,M,\mathcal{X},\epsilon,B,\delta)$-code.
\begin{lemma} \label{LemmaAchievableR}
For a constant composition $(n,M,\mathcal{X},\epsilon,B,\delta)$-code $\mathscr{C}$ from the family ${\sf C} \left(C,\boldsymbol{A},\boldsymbol{L},\boldsymbol{\alpha},\boldsymbol{p},\boldsymbol{r} \right)$ in~\eqref{EqFamily}, the information transmission rate $R\left(\mathscr{C} \right)$ is given by:
\begin{IEEEeqnarray}{rCl} \label{Eq174b}
    R\left(\mathscr{C} \right) \leq \frac{1}{n} \log_2 \left( \frac{n!}{\prod_{c=1}^C \left( \left( n\frac{ p_c}{L_c} \right)! \right)^{L_c}} \right).
\end{IEEEeqnarray}
\end{lemma}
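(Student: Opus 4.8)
The plan is to specialize the general constant composition rate bound established in Lemma~\ref{lemma:R_upperbound} to the layered constellation of the family ${\sf C}\left(C,\boldsymbol{A},\boldsymbol{L},\boldsymbol{\alpha},\boldsymbol{p},\boldsymbol{r}\right)$. Since $\mathscr{C}$ is a constant composition code of the prescribed form, Lemma~\ref{lemma:R_upperbound} applies verbatim and yields the bound in~\eqref{EqRbound}, whose right-hand side is the normalized logarithm of the multinomial coefficient $n!/\prod_{\ell=1}^L \left(nP_{\mathscr{C}}(x^{(\ell)})\right)!$ counting the codewords of length $n$ whose type equals $P_{\mathscr{C}}$. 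The whole content of the lemma is therefore an exact rewriting of this denominator once the layered structure is imposed.

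The one substantive step is to evaluate the denominator under the layered type. By~\eqref{Eq100}, every symbol $x_c^{(\ell)}$ in layer $c$ satisfies $P_{\mathscr{C}}(x_c^{(\ell)}) = p_c/L_c$, so each such symbol is used exactly $n p_c/L_c$ times in every codeword. I would then reindex the product over the single index $\ell \inCountK{L}$ as a product over the pair $(c,\ell)$ with $c \inCountK{C}$ and $\ell \inCountK{L_c}$, which is legitimate because $\mathcal{X}$ is the disjoint union of the layers in~\eqref{EqConstellationCircle} with $L = \sum_{c=1}^C L_c$ by~\eqref{EqLSum}. Since layer $c$ contributes $L_c$ identical factorial factors $\left(n p_c/L_c\right)!$, the product factorizes as $\prod_{c=1}^C\left(\left(n p_c/L_c\right)!\right)^{L_c}$, and substituting this into~\eqref{EqRbound} gives the claimed inequality~\eqref{Eq174b}.

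The main obstacle is really only the bookkeeping in the reindexing together with verifying that $n p_c/L_c$ is a nonnegative integer for every layer, so that the factorials are well defined. This integrality is guaranteed by the construction constraint~\eqref{EqTypeCircle}, which forces the per-symbol frequencies $n P_{\mathscr{C}}(x_c^{(\ell)}) = n p_c/L_c$ to coincide with the integer symbol counts of the constant composition code. No inequality beyond Lemma~\ref{lemma:R_upperbound} is needed: the passage from the general multinomial bound to~\eqref{Eq174b} is an exact algebraic identity on the denominator, so the difficulty is purely notational rather than analytical.
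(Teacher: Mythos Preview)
Your proposal is correct and follows essentially the same approach as the paper: both arguments count the codewords of fixed type via the multinomial coefficient and then substitute $P_{\mathscr{C}}(x_c^{(\ell)}) = p_c/L_c$ from~\eqref{Eq100} (equivalently~\eqref{EqTypeCircle}) to collapse the per-symbol factorials into the layered product $\prod_{c=1}^C\bigl((np_c/L_c)!\bigr)^{L_c}$. The only cosmetic difference is that you invoke Lemma~\ref{lemma:R_upperbound} and reindex, whereas the paper re-expands the multinomial coefficient from scratch in the layered indexing before simplifying; the underlying computation is identical.
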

\begin{proof}
%
For a constant composition $(n,M,\mathcal{X},\epsilon,B,\delta)$-code $\mathscr{C}$ from the family ${\sf C} \left(C,\boldsymbol{A},\boldsymbol{L},\boldsymbol{\alpha},\boldsymbol{p},\boldsymbol{r} \right)$ for which the type $P_{\mathscr{C}}$ satisfies~\eqref{EqHomogeneousCodes}, the number of codewords that can be constructed  is given by 
{\allowdisplaybreaks
\begin{IEEEeqnarray}{l}
M = \binom{n}{nP_{\mathscr{C}}(x_1^{(1)})} \binom{n-nP_{\mathscr{C}}(x_1^{(1)})}{nP_{\mathscr{C}}(x_1^{(2)})}  \ldots \binom{n-\sum_{\ell=1}^{L_1-1} nP_{\mathscr{C}}(x_1^{(\ell)})}{nP_{\mathscr{C}}(x_1^{(L_1)})} \nonumber \\
\times \binom{n-\sum_{\ell=1}^{L_1} nP_{\mathscr{C}}(x_1^{(\ell)})}{nP_{\mathscr{C}}(x_2^{(1)})} \binom{n-\sum_{\ell=1}^{L_1} nP_{\mathscr{C}}(x_1^{(\ell)}) - nP_{\mathscr{C}}(x_2^{(1)})}{nP_{\mathscr{C}}(x_2^{(2)})} \nonumber \\
\ldots \binom{n-\sum_{\ell=1}^{L_1} nP_{\mathscr{C}}(x_1^{(\ell)})-\sum_{\ell=1}^{L_2-1} nP_{\mathscr{C}}(x_2^{(\ell)})}{nP_{\mathscr{C}}(x_2^{(L_2)})}  \times \ldots \nonumber \\
\times \binom{n-\sum_{c=1}^{C-1} \sum_{\ell=1}^{L_c} nP_{\mathscr{C}}(x_c^{(\ell)})}{nP_{\mathscr{C}}(x_C^{(1)})} 
 \binom{n-\sum_{c=1}^{C-1} \sum_{\ell=1}^{L_c} nP_{\mathscr{C}}(x_c^{(\ell)}) - nP_{\mathscr{C}}(x_C^{(1)})}{nP_{\mathscr{C}}(x_C^{(2)})} \nonumber \\
 \ldots \binom{n-\sum_{c=1}^{C-1} \sum_{\ell=1}^{L_c} nP_{\mathscr{C}}(x_c^{(\ell)})-\sum_{\ell=1}^{L_C-1} nP_{\mathscr{C}}(x_C^{(\ell)})}{nP_{\mathscr{C}}(x_C^{(L_C)})}  \\
 = \binom{n}{n\frac{p_1}{L_1}} \binom{n-n\frac{p_1}{L_1}}{n\frac{p_1}{L_1}}  \ldots \binom{n-\sum_{\ell=1}^{L_1-1} n\frac{p_1}{L_1}}{n\frac{p_1}{L_1}}\nonumber \\
 \times 
\binom{n-np_1}{n\frac{p_2}{L_2}}
\binom{n-np_1 - n\frac{p_2}{L_2}}{n\frac{p_2}{L_2}} \ldots \binom{n-np_1-\sum_{\ell=1}^{L_2-1} n\frac{p_2}{L_2}}{n\frac{p_2}{L_2}}  \times \nonumber \\
\ldots \times \binom{n-\sum_{c=1}^{C-1} np_c}{n\frac{p_C}{L_C}}
 \binom{n-\sum_{c=1}^{C-1} np_c - n\frac{p_C}{L_C}}{n\frac{p_C}{L_C}} \ldots \nonumber \\
 \binom{n-\sum_{c=1}^{C-1} np_c-\sum_{\ell=1}^{L_C-1} n\frac{p_C}{L_C}}{n\frac{p_C}{L_C}}  \\
 = \frac{n!}{\prod_{c=1}^C \left( \left( n\frac{ p_c}{L_c} \right)! \right)^{L_c}}
\end{IEEEeqnarray}
}
Therefore, the information transmission rate $R\left(\mathscr{C} \right)$ is given by
\begin{IEEEeqnarray}{rCl} \label{Eq159b}
    R\left(\mathscr{C} \right) \leq \frac{\log_2 M}{n} = \frac{1}{n} \log_2 \left( \frac{n!}{\prod_{c=1}^C \left( \left( n\frac{ p_c}{L_c} \right)! \right)^{L_c}} \right). \qquad
\end{IEEEeqnarray}
This completes the proof.
\end{proof}

The following lemma provides the achievable bound on the EOP $\delta$ in~\eqref{EqTheoremrc} for codes from the family ${\sf C} \left(C,\boldsymbol{A},\boldsymbol{L},\boldsymbol{\alpha},\boldsymbol{p},\boldsymbol{r} \right)$.
\begin{lemma} \label{LemmaAchievableEOP}
For an $(n,M,\mathcal{X},\epsilon,B,\delta)$-code $\mathscr{C}$ from the family ${\sf C} \left(C,\boldsymbol{A},\boldsymbol{L},\boldsymbol{\alpha},\boldsymbol{p},\boldsymbol{r} \right)$ in~\eqref{EqFamily}, the following holds:
\begin{IEEEeqnarray}{rCl} \label{Eq213}
 \delta \geq  \frac{1}{M} \sum_{i=1}^M \mathds{1}_{\left\lbrace \sum_{c=1}^C \sum_{\ell=1}^{L_c} P_{\boldsymbol{u}(i)} \left( x_c^{(\ell)} \right) \left( k_1 A_c^2 + k_2  A_c^4 \right) < \frac{B}{n} \right\rbrace},
 \end{IEEEeqnarray}
where, $k_1$ and $k_2$ are positive real constants defined in~\eqref{Eqei} and $P_{\boldsymbol{u}(i)}$ is the type in~\eqref{eq:u_measure}.
\end{lemma}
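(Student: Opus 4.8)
The plan is to derive this EOP lower bound directly, by combining the general EOP bound already available for $(n,M,\mathcal{X},\epsilon,B,\delta)$-codes with the specific layered structure of the constellation in~\eqref{EqConstellationCircle}. The starting point is Lemma~\ref{LemmaB} (equivalently~\eqref{Eq25} together with the defining inequality~\eqref{eq:delta}), which already gives $\delta \geq \frac{1}{M}\sum_{i=1}^M \mathds{1}_{\{e_i < B\}}$, where $e_i$ in~\eqref{Eqei} is the deterministic energy harvested while transmitting codeword $\boldsymbol{u}(i)$. Hence the entire task reduces to rewriting each event $\{e_i < B\}$ in terms of the layer amplitudes $A_c$ and the type $P_{\boldsymbol{u}(i)}$.

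First I would re-express $e_i$ through the type $P_{\boldsymbol{u}(i)}$ by collecting the $n$ channel uses according to the symbol actually transmitted. Using the definition of the type in~\eqref{eq:u_measure}, one has $\sum_{m=1}^n \left| u_m(i) \right|^2 = \sum_{x \in \mathcal{X}} n P_{\boldsymbol{u}(i)}(x) \left| x \right|^2$, and likewise $\sum_{m=1}^n \left| u_m(i) \right|^4 = \sum_{x \in \mathcal{X}} n P_{\boldsymbol{u}(i)}(x) \left| x \right|^4$, so that $e_i = n \sum_{x \in \mathcal{X}} P_{\boldsymbol{u}(i)}(x) \left( k_1 \left| x \right|^2 + k_2 \left| x \right|^4 \right)$.

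Next I would invoke the defining property of the constellation~\eqref{EqConstellationCircle}: every symbol $x_c^{(\ell)} \in \mathcal{U}(A_c,L_c,\alpha_c)$ has magnitude exactly $A_c$, independently of $\ell$, so $\left| x_c^{(\ell)} \right|^2 = A_c^2$ and $\left| x_c^{(\ell)} \right|^4 = A_c^4$. Splitting the sum over $\mathcal{X}$ into the double sum over layers $c \inCountK{C}$ and positions $\ell \inCountK{L_c}$ then yields
\begin{IEEEeqnarray}{rCl}
e_i &=& n \sum_{c=1}^C \sum_{\ell=1}^{L_c} P_{\boldsymbol{u}(i)}\left(x_c^{(\ell)}\right) \left( k_1 A_c^2 + k_2 A_c^4 \right).
\end{IEEEeqnarray}

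Finally, dividing the inequality $e_i < B$ by $n$ shows that the event $\{e_i < B\}$ is identical to the event appearing in the indicator of~\eqref{Eq213}, and substituting this equivalence into the bound from Lemma~\ref{LemmaB} completes the argument. I do not expect a genuine obstacle here: the proof is essentially a bookkeeping rewrite of $e_i$, and the only point requiring mild care is checking that the re-indexing of the single sum over $\mathcal{X}$ as a double sum over $(c,\ell)$ is both exhaustive and non-overlapping, which is guaranteed because~\eqref{EqConstellationCircle} expresses $\mathcal{X}$ as the disjoint union of the $C$ layers.
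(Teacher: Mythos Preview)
Your proposal is correct and follows essentially the same approach as the paper: start from the general EOP bound $\delta \geq \frac{1}{M}\sum_{i=1}^M \mathds{1}_{\{e_i < B\}}$ (the paper cites~\eqref{Eq25} and~\eqref{eq:delta} directly rather than Lemma~\ref{LemmaB}, but it is the same content), observe that $\left| x_c^{(\ell)} \right| = A_c$ for every symbol in layer $c$, and rewrite $e_i$ accordingly. The only cosmetic difference is that the paper substitutes $|x_c^{(\ell)}| = A_c$ directly into the double-sum form of $e_i$, whereas you first pass through the type representation $e_i = n\sum_{x\in\mathcal{X}} P_{\boldsymbol{u}(i)}(x)(k_1|x|^2+k_2|x|^4)$ before re-indexing over $(c,\ell)$; the arguments are otherwise identical.
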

%
\begin{proof}
From~\eqref{EqLayerCircle}, for all $c \inCountK{C}$ and all $\ell \inCountK{L_c}$, the symbols $x_c^{\left( \ell \right)} \in \mathcal{U}(A_c,L_c,\alpha_c)$ in~\eqref{EqLayerCircle} are given by
\begin{IEEEeqnarray}{rCl} 
x_c^{\left( \ell \right)} = A_c \exp\left(\mathrm{i} \left(\frac{2\pi}{L_c} \ell+\alpha_c\right)\right).
\end{IEEEeqnarray}
This implies that
\begin{IEEEeqnarray}{rCl} \label{Eq218}
\left| x_c^{\left( \ell \right)} \right| = A_c.
\end{IEEEeqnarray}
Using~\eqref{Eq25} and~\eqref{Eqei}, the EOP for the $(n,M,\mathcal{X},\epsilon,B,\delta)$-code $\mathscr{C}$ is given by:
\begin{IEEEeqnarray}{l} 
 \theta(\mathscr{C},B) = \frac{1}{M} \sum_{i=1}^M \mathds{1}_{\left\lbrace ( k_1 \sum_{c=1}^C \sum_{\ell=1}^{L_c} n P_{\boldsymbol{u}(i)} \left( x_c^{(\ell)} \right) \left| x_c^{(\ell)} \right|^2 + k_2 \sum_{c=1}^C \sum_{\ell=1}^{L_c} n P_{\boldsymbol{u}(i)} \left( x_c^{(\ell)} \right) \left| x_c^{(\ell)} \right|^4 ) < B \right\rbrace} \\
= \frac{1}{M} \sum_{i=1}^M \mathds{1}_{\left\lbrace ( k_1 \sum_{c=1}^C \sum_{\ell=1}^{L_c} n P_{\boldsymbol{u}(i)} \left( x_c^{(\ell)} \right) A_c^2 + k_2 \sum_{c=1}^C \sum_{\ell=1}^{L_c} n P_{\boldsymbol{u}(i)} \left( x_c^{(\ell)} \right) A_c^4) < B \right\rbrace}. \label{Eq209}
 \end{IEEEeqnarray}
From~\eqref{eq:delta} and~\eqref{Eq209}, the code $\mathscr{C}$ is an $\left( n,M,\mathcal{X},\epsilon,B,\delta \right)$-code if the following holds:
\begin{IEEEeqnarray}{rCl} 
  \frac{1}{M} \sum_{i=1}^M \mathds{1}_{\left\lbrace \sum_{c=1}^C \sum_{\ell=1}^{L_c} P_{\boldsymbol{u}(i)} \left( x_c^{(\ell)} \right) \left( k_1 A_c^2 + k_2  A_c^4 \right) < \frac{B}{n} \right\rbrace} \leq \delta.
\end{IEEEeqnarray}
This completes the proof.
\end{proof}
The achievable bound on the EOP $\delta$ in~\eqref{eq:delta} for a constant composition code $\mathscr{C}$ from the family ${\sf C} \left(C,\boldsymbol{A},\boldsymbol{L},\boldsymbol{\alpha},\boldsymbol{p},\boldsymbol{r} \right)$ in~\eqref{EqFamily} is given by the following lemma:
\begin{lemma} \label{LemmaAmplitudeC}
For a constant composition $(n,M,\mathcal{X},\epsilon,B,\delta)$-code $\mathscr{C}$ from the family ${\sf C} \left(C,\boldsymbol{A},\boldsymbol{L},\boldsymbol{\alpha},\boldsymbol{p},\boldsymbol{r} \right)$ in~\eqref{EqFamily}, for all $c \inCountK{C}$, the parameters $p_c$ in~\eqref{Eqpc} satisfy the following:
\begin{IEEEeqnarray}{rCl}
 \delta = \mathds{1}_{\left\lbrace \left( k_1 \sum_{c=1}^C n p_c A_c^2 + k_2 \sum_{c=1}^C n p_c A_c^4 \right) < B \right\rbrace},
 \end{IEEEeqnarray}
where, $k_1$ and $k_2$ are positive real constants defined in~\eqref{Eqei}.
\end{lemma}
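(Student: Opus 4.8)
The plan is to specialize Lemma~\ref{LemmaAchievableEOP} to the constant composition case, exactly paralleling the way Corollary~\ref{CorRadiusHomogeneous} specializes Lemma~\ref{LemmaRadius}. The natural starting point is the exact expression for the energy outage probability $\theta(\mathscr{C},B)$ obtained in the proof of Lemma~\ref{LemmaAchievableEOP}, namely equation~\eqref{Eq209}, which holds for every code in the family ${\sf C}\left(C,\boldsymbol{A},\boldsymbol{L},\boldsymbol{\alpha},\boldsymbol{p},\boldsymbol{r}\right)$ irrespective of whether it is constant composition.

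First I would invoke the constant composition property. By Definition~\ref{DefHC} together with the layer-frequency constraint~\eqref{EqTypeCircle}, relation~\eqref{Eq100} gives $P_{\boldsymbol{u}(i)}(x_c^{(\ell)}) = p_c/L_c$ for every message index $i$, every layer $c$, and every symbol index $\ell$. Substituting this into the indicator argument in~\eqref{Eq209} and performing the inner sum over $\ell$ collapses the double sum, since $\sum_{\ell=1}^{L_c} n P_{\boldsymbol{u}(i)}(x_c^{(\ell)}) = L_c \cdot n p_c/L_c = n p_c$. The decisive consequence is that the resulting condition $k_1 \sum_{c=1}^C n p_c A_c^2 + k_2 \sum_{c=1}^C n p_c A_c^4 < B$ no longer depends on $i$.

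Because the indicator is then identical for all $i$, the empirical average $\frac{1}{M}\sum_{i=1}^M \mathds{1}_{\{\cdots\}}$ reduces to a single indicator, so $\theta(\mathscr{C},B)$ equals $\mathds{1}_{\{(k_1\sum_{c=1}^C n p_c A_c^2 + k_2 \sum_{c=1}^C n p_c A_c^4) < B\}}$, which is exactly the claimed expression; the EOP constraint~\eqref{eq:delta} then yields the stated relation for $\delta$. This collapse is already anticipated by~\eqref{EqeiHomogeneous} and~\eqref{EqeiHomogeneousb}, where every codeword of a constant composition code harvests the common energy $e_\mathscr{C}$; here $e_\mathscr{C} = k_1\sum_{c=1}^C n p_c A_c^2 + k_2 \sum_{c=1}^C n p_c A_c^4$, so the claim could equivalently be read off directly from Theorem~\ref{LemmaBhomogeneous}.

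There is no genuine obstacle in this argument; it is a direct substitution followed by the disappearance of the $i$-average. The only point requiring a moment of care is the bookkeeping in collapsing $\sum_{c=1}^C \sum_{\ell=1}^{L_c}$: one must use that all $L_c$ symbols in layer $c$ share both the common magnitude $A_c$ (equation~\eqref{Eq218}) and the common type value $p_c/L_c$, so that the per-symbol contribution $n(p_c/L_c)(k_1 A_c^2 + k_2 A_c^4)$ is multiplied by $L_c$ to give $n p_c (k_1 A_c^2 + k_2 A_c^4)$. Once this is in place, the reduction of the average to a single indicator is immediate.
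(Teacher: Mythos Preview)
Your proposal is correct and follows essentially the same route as the paper: both compute $\theta(\mathscr{C},B)$ exactly by using $|x_c^{(\ell)}|=A_c$ from~\eqref{Eq218} and the constant composition type value $p_c/L_c$ from~\eqref{EqTypeCircle} (equivalently~\eqref{Eq100}) to collapse the double sum and remove the $i$-dependence, then invoke~\eqref{eq:delta}. The only cosmetic difference is that the paper starts directly from~\eqref{Eq25} with $P_{\mathscr{C}}$ via~\eqref{EqeiHomogeneous}, whereas you start from~\eqref{Eq209} with $P_{\boldsymbol{u}(i)}$ and then substitute; the substance is identical.
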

\begin{proof}
From~\eqref{Eq25} and~\eqref{EqeiHomogeneous}, the EOP for the constant composition code $\mathscr{C}$ is given by:
\begin{IEEEeqnarray}{l} 
 \theta(\mathscr{C},B) = \frac{1}{M} \sum_{i=1}^M \mathds{1}_{\left\lbrace \substack{ ( k_1 \sum_{c=1}^C \sum_{\ell=1}^{L_c} n P_{\mathscr{C}} \left( x_c^{(\ell)} \right) \left| x_c^{(\ell)} \right|^2 + \\ k_2 \sum_{c=1}^C \sum_{\ell=1}^{L_c} n P_{\mathscr{C}} \left( x_c^{(\ell)} \right) \left| x_c^{(\ell)} \right|^4 ) < B} \right\rbrace} \\
= \mathds{1}_{\left\lbrace \substack{ ( k_1 \sum_{c=1}^C \sum_{\ell=1}^{L_c} n P_{\mathscr{C}} \left( x_c^{(\ell)} \right) A_c^2 + \\ k_2 \sum_{c=1}^C \sum_{\ell=1}^{L_c} n P_{\mathscr{C}} \left( x_c^{(\ell)} \right) A_c^4) < B} \right\rbrace} \label{Eq191}\\
= \mathds{1}_{\left\lbrace \left( k_1 \sum_{c=1}^C \sum_{\ell=1}^{L_c} n \frac{p_c}{L_c} A_c^2 + k_2 \sum_{c=1}^C \sum_{\ell=1}^{L_c} n \frac{p_c}{L_c} A_c^4 \right) < B \right\rbrace} \label{Eq192} \\
= \mathds{1}_{\left\lbrace \left( k_1 \sum_{c=1}^C n p_c A_c^2 + k_2 \sum_{c=1}^C n p_c A_c^4 \right) < B \right\rbrace}. \label{Eq266}
 \end{IEEEeqnarray}
 The equality in~\eqref{Eq191} follows from~\eqref{Eq218} and~\eqref{Eq192} follows from~\eqref{EqTypeCircle}.
From~\eqref{eq:delta} and~\eqref{Eq266}, it follows that the code $\mathscr{C}$ is an $\left( n,M,\mathcal{X},\epsilon,B,\delta \right)$-code if the following holds:
\begin{IEEEeqnarray}{rCl} 
 \delta = \mathds{1}_{\left\lbrace \left( k_1 \sum_{c=1}^C n p_c A_c^2 + k_2 \sum_{c=1}^C n p_c A_c^4 \right) < B \right\rbrace}.
\end{IEEEeqnarray}
This completes the proof.
\end{proof}

The following lemma provides an upper bound on the energy transmission rate $B$ for $(n,M,\mathcal{X},\epsilon,B,\delta)$-codes from the family ${\sf C} \left(C,\boldsymbol{A},\boldsymbol{L},\boldsymbol{\alpha},\boldsymbol{p},\boldsymbol{r} \right)$.

\begin{lemma} \label{LemmaBAchievable}
For an $(n,M,\mathcal{X},\epsilon,B,\delta)$-code $\mathscr{C}$ from the family ${\sf C} \left(C,\boldsymbol{A},\boldsymbol{L},\boldsymbol{\alpha},\boldsymbol{p},\boldsymbol{r} \right)$ in~\eqref{EqFamily}, let $j^{+} \in \ints$ be 
\begin{IEEEeqnarray}{rcl}
j^{+} & \triangleq & \min \left\lbrace j \in \lbrace 1, \ldots M' \rbrace: \delta \leqslant \frac{\displaystyle\sum_{k=1}^{j}  \displaystyle\sum_{i=1}^M \mathds{1}_{\left\lbrace e_i = \bar{e}_k \right\rbrace }}{M} \right\rbrace,
\end{IEEEeqnarray}
where, the positive integer $M'$ and the reals $\bar{e}_1$, $\bar{e}_2$, $\ldots$, $\bar{e}_{M'}$ are in~\eqref{EqUniqueLevels}.
Then, the following holds for the energy transmission rate $B$:
\begin{IEEEeqnarray}{rCl} \label{EqBAchievable}
B \leq \bar{e}_{j^{+}} 
\end{IEEEeqnarray}
\end{lemma}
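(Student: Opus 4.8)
The plan is to reproduce, in the setting of the family ${\sf C} \left(C,\boldsymbol{A},\boldsymbol{L},\boldsymbol{\alpha},\boldsymbol{p},\boldsymbol{r} \right)$ in~\eqref{EqFamily}, the same step-function argument that established the converse bound of Lemma~\ref{LemmaBnew} in Appendix~\ref{AppendixE}. Indeed, the EOP expression in~\eqref{Eq25} and the feasibility requirement~\eqref{eq:delta} apply verbatim to any code, and the only property used in the argument is that, by~\eqref{Eqei} and~\eqref{Eq34}, each codeword $\boldsymbol{u}(i)$ delivers a deterministic amount of energy $e_i$ lying in the finite ordered set $\lbrace \bar{e}_1, \bar{e}_2, \ldots, \bar{e}_{M'} \rbrace$ of~\eqref{EqUniqueLevels}.

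First I would write the EOP as a step function of the threshold $B$. Setting $c_k \triangleq \sum_{i=1}^M \mathds{1}_{\lbrace e_i = \bar{e}_k \rbrace}$ for the number of codewords delivering exactly $\bar{e}_k$ energy units, equation~\eqref{Eq25} gives $\theta(\mathscr{C},B) = \frac{1}{M}\sum_{k=1}^{j-1} c_k$ for every $B$ in the range $\bar{e}_{j-1} < B \leq \bar{e}_j$, since the only codewords with $e_i < B$ are those with energy at most $\bar{e}_{j-1}$. Thus $\theta(\mathscr{C},B)$ takes values only in the finite set $\lbrace \frac{1}{M}\sum_{k=1}^{j} c_k : j \in \lbrace 0, 1, \ldots, M' \rbrace \rbrace$ and increases precisely as $B$ crosses each level $\bar{e}_j$.

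Next I would invoke the constraint~\eqref{eq:delta}, namely $\theta(\mathscr{C},B) \leq \delta$. Since the number of codewords delivering less than $B$ is an integer, on the interval $\bar{e}_{j-1} < B \leq \bar{e}_j$ this constraint forces $\sum_{k=1}^{j-1} c_k \leq \lfloor M\delta \rfloor$. The largest threshold compatible with this inequality is then identified through the index $j^{+}$, defined in the statement as the smallest $j$ for which the cumulative fraction $\frac{1}{M}\sum_{k=1}^{j} c_k$ first attains $\delta$; one concludes that $B$ cannot exceed $\bar{e}_{j^{+}}$, which is the bound~\eqref{EqBAchievable}. As the specific structure of the family enters only through the numerical values of the $e_i$ (made explicit via the layer amplitudes $A_c$ in Lemma~\ref{LemmaAchievableEOP}), the argument is otherwise identical to the general converse. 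I foresee no real obstacle; the only delicate point is the book-keeping of the strict inequality $e_i < B$ against the weak ordering of the levels $\bar{e}_j$, which determines whether the boundary value $B = \bar{e}_{j^{+}}$ is attained, and this is handled exactly as in Appendix~\ref{AppendixE}.
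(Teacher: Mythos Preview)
Your proposal is correct and takes essentially the same approach as the paper: the paper's proof simply states that it follows the same lines as Lemma~\ref{LemmaBnew}, with the codeword energies $e_i$ specialized to the layer-amplitude form $e_i = k_1 \sum_{c} \sum_{\ell} n P_{\boldsymbol{u}(i)}(x_c^{(\ell)}) A_c^2 + k_2 \sum_{c} \sum_{\ell} n P_{\boldsymbol{u}(i)}(x_c^{(\ell)}) A_c^4$. Your write-up is in fact more detailed than the paper's, but the underlying argument is identical.
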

%
%
\begin{proof}
The proof follows on the same lines as that for Lemma~\ref{LemmaBnew} where, for all $i \inCountK{M}$, the energy $e_i$ in~\eqref{Eqei} is given by
\begin{IEEEeqnarray*}{l}
e_i = \sum_{c=1}^C \sum_{\ell=1}^{L_c} n P_{\boldsymbol{u}(i)} \left( x_c^{(\ell)} \right) A_c^2 + k_2 \sum_{c=1}^C \sum_{\ell=1}^{L_c} n P_{\boldsymbol{u}(i)} \left( x_c^{(\ell)} \right) A_c^4.
\end{IEEEeqnarray*}
\end{proof}

 \subsection{Miscellaneous Results} \label{AppendixLemmas}
For the special case where for all $c \inCountK{C}$, the radius of the decoding regions $r_c = r$ in~\eqref{EqDecodingCircleComplex}, the following lemma provides a lower bound on the value of $r$.
\begin{lemma} \label{CorollaryRadius}
Consider an $(n,M,\mathcal{X},P,\epsilon,B,\delta)$-code $\mathscr{C}$ from the family ${\sf C} \left(C,\boldsymbol{A},\boldsymbol{L},\boldsymbol{\alpha},\boldsymbol{p},\boldsymbol{r} \right)$ in~\eqref{EqFamily} such that, for all $c \inCountK{C}$, $r_c = r$ in~\eqref{EqRadiusVector}. Then, the parameter $r$ satisfies:
\begin{equation}\label{EqRminEqual}
    r \geq \sqrt{\sigma^2 \log \left( \frac{1}{1-(1-\epsilon)^{\frac{1}{n}}} \right) },
\end{equation}
where, the real $\sigma^2$ is defined in~\eqref{EqDensities}.
\end{lemma}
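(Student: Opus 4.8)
The plan is to specialize the general DEP lower bound of Lemma~\ref{LemmaRadius} to the equal-radii case and then invert it. Setting $r_c = r$ for all $c \inCountK{C}$ in that bound, the base $1-\exp(-r^2/\sigma^2)$ becomes identical across all layers, so the product over $c$ merges into a single power whose exponent is $n \sum_{c=1}^C \sum_{\ell=1}^{L_c} P_{\boldsymbol{u}(i)}(x_c^{(\ell)})$. The crucial observation is that, for each $i \inCountK{M}$, the type $P_{\boldsymbol{u}(i)}$ in~\eqref{eq:u_measure} is a probability mass function on $\mathcal{X}$ in~\eqref{EqConstellationCircle}, and the symbols $\{x_c^{(\ell)}\}$ exhaust $\mathcal{X}$; hence $\sum_{c=1}^C \sum_{\ell=1}^{L_c} P_{\boldsymbol{u}(i)}(x_c^{(\ell)}) = 1$, and the exponent collapses to $n$ independently of $i$. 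Each inner summand therefore equals $\left(1-\exp(-r^2/\sigma^2)\right)^n$, the average over $i$ is trivial, and the bound reduces to
\begin{IEEEeqnarray}{rCl}
\epsilon &\geq& 1 - \left( 1-\exp \left(-\frac{r^2}{\sigma^2} \right) \right)^{n}. \nonumber
\end{IEEEeqnarray}

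The remainder is purely algebraic. I would rearrange this to $\left(1-\exp(-r^2/\sigma^2)\right)^n \geq 1-\epsilon$, take $n$-th roots (valid since both sides lie in $[0,1]$ and $t \mapsto t^{1/n}$ is increasing) to obtain $1-\exp(-r^2/\sigma^2) \geq (1-\epsilon)^{1/n}$, isolate $\exp(-r^2/\sigma^2) \leq 1-(1-\epsilon)^{1/n}$, apply the (increasing) logarithm, and multiply by $-\sigma^2$ to flip the inequality, yielding $r^2 \geq \sigma^2 \log\!\left(\frac{1}{1-(1-\epsilon)^{1/n}}\right)$. Taking the positive square root gives~\eqref{EqRminEqual}.

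There is no substantive obstacle here: the argument is a specialization followed by a monotone inversion. The only point requiring care is tracking the direction of the inequalities when taking roots and logarithms, together with checking that the arguments stay in the admissible ranges so that $1-(1-\epsilon)^{1/n} > 0$ (which holds whenever $\epsilon > 0$) keeps the logarithm well defined. It is worth emphasizing that constant composition is \emph{not} invoked: the collapse of the product relies solely on the normalization $\sum_{x \in \mathcal{X}} P_{\boldsymbol{u}(i)}(x) = 1$ of each per-codeword type, so the bound holds for every code in the family ${\sf C}\left(C,\boldsymbol{A},\boldsymbol{L},\boldsymbol{\alpha},\boldsymbol{p},\boldsymbol{r}\right)$ whose decoding radii coincide.
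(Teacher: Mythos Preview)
Your proposal is correct and follows essentially the same route as the paper: specialize the DEP expression from Lemma~\ref{LemmaRadius} to the equal-radii case, collapse the product using $\sum_{c}\sum_{\ell} P_{\boldsymbol{u}(i)}(x_c^{(\ell)})=1$, and invert the resulting inequality. Your version is in fact slightly more careful than the paper's, since you explicitly justify the monotonicity steps and the well-definedness of the logarithm.
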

The lower bound on $r$ in Lemma~\ref{CorollaryRadius} characterizes the relationship between the DEP requirement $\epsilon$ and the spacing between channel input symbols $r$.
This is a key element in the code construction since it helps determine the precise structure of the constellation that can guarantee a given DEP.
Another crucial insight from Lemma~\ref{CorollaryRadius} is the relationship between the noise variance $\sigma^2$ and $r$.
It is known from classical information theoretic literature that the channel noise variance dictates how clearly two neighboring symbols can be distinguished at the decoder.
While this is a well known result, the precise relationship between these quantities cannot be determined due to the complexity of calculating the DEP.
The unique choice of decoding regions made in~\eqref{EqDecodingCircleComplex} allows us to arrive at~\eqref{EqRminEqual}, which clearly shows the trade-off between these two quantities and provides the required minimum distance between channel input symbols as a function of the noise variance $\sigma^2$.
\begin{proof}
If the parameters $r_c$ in~\eqref{EqRadiusVector} are such that, for all $c \inCountK{C}$, $r_c = r$, the average DEP in~\eqref{Eq95} is given by:
\begin{IEEEeqnarray}{rCl}
\gamma \left( \mathscr{C} \right) &=& 1 - \frac{1}{M}\sum_{i = 1}^M \prod_{c=1}^C \left( 1-e^{-\frac{r^2}{\sigma^2}} \right)^{n \sum_{\ell=1}^{L_c} P_{\boldsymbol{u}(i)}(x_c^{(\ell)})}, \\
&=& 1 - \frac{1}{M}\sum_{i = 1}^M \left( 1-e^{-\frac{r^2}{\sigma^2}} \right)^{n \sum_{c=1}^C \sum_{\ell=1}^{L_c} P_{\boldsymbol{u}(i)}(x_c^{(\ell)})}, \\
&=& 1 - \frac{1}{M}\sum_{i = 1}^M \left( 1-e^{-\frac{r^2}{\sigma^2}} \right)^{n}, \\
&=& 1 - \left( 1-e^{-\frac{r^2}{\sigma^2}} \right)^{n}. \label{Eq105}
\end{IEEEeqnarray}
From~\eqref{EqGammaUpperbound}, for $\mathscr{C}$ to be an $\left( n,M,\mathcal{X},P,\epsilon \right)$-code, the following must hold:
\begin{IEEEeqnarray}{rCl}
\epsilon \geq 1 - \left( 1-e^{-\frac{r^2}{\sigma^2}} \right)^{n}.
    \end{IEEEeqnarray}
This implies that
\begin{IEEEeqnarray}{rCl}
    r \geq \sqrt{\sigma^2 \log \left( \frac{1}{1-(1-\epsilon)^{\frac{1}{n}}} \right) }.
\end{IEEEeqnarray}
This completes the proof.
\end{proof}
\begin{figure}
    \centering
\tikzset{every picture/.style={line width=0.75pt}} 

\begin{tikzpicture}[x=0.75pt,y=0.75pt,yscale=-1,xscale=1]

\draw  [color={rgb, 255:red, 0; green, 0; blue, 0 }  ,draw opacity=1 ] (189.67,167) .. controls (189.67,101.28) and (242.94,48) .. (308.67,48) .. controls (374.39,48) and (427.67,101.28) .. (427.67,167) .. controls (427.67,232.72) and (374.39,286) .. (308.67,286) .. controls (242.94,286) and (189.67,232.72) .. (189.67,167) -- cycle ;
\draw    (310.67,324) -- (310.67,5) ;
\draw [shift={(310.67,2)}, rotate = 450] [fill={rgb, 255:red, 0; green, 0; blue, 0 }  ][line width=0.08]  [draw opacity=0] (8.93,-4.29) -- (0,0) -- (8.93,4.29) -- cycle    ;
\draw    (138.77,167.08) -- (484.67,169.48) ;
\draw [shift={(487.67,169.5)}, rotate = 180.4] [fill={rgb, 255:red, 0; green, 0; blue, 0 }  ][line width=0.08]  [draw opacity=0] (8.93,-4.29) -- (0,0) -- (8.93,4.29) -- cycle    ;
\draw  [fill={rgb, 255:red, 248; green, 231; blue, 28 }  ,fill opacity=0.27 ][dash pattern={on 4.5pt off 4.5pt}] (394.67,167) .. controls (394.67,148.77) and (409.44,134) .. (427.67,134) .. controls (445.89,134) and (460.67,148.77) .. (460.67,167) .. controls (460.67,185.23) and (445.89,200) .. (427.67,200) .. controls (409.44,200) and (394.67,185.23) .. (394.67,167) -- cycle ;
\draw [color={rgb, 255:red, 245; green, 166; blue, 35 }  ,draw opacity=1 ]   (309.8,168) -- (422.61,134) ;
\draw [color={rgb, 255:red, 245; green, 166; blue, 35 }  ,draw opacity=1 ]   (422.61,134) -- (427.67,167) ;
\draw [color={rgb, 255:red, 245; green, 166; blue, 35 }  ,draw opacity=1 ]   (309.8,169) -- (422.61,199.6) ;
\draw [color={rgb, 255:red, 245; green, 166; blue, 35 }  ,draw opacity=1 ]   (427.67,167) -- (422.61,199.6) ;
\draw  [dash pattern={on 0.84pt off 2.51pt}]  (221.63,86.08) -- (308.67,167) ;
\draw  [dash pattern={on 0.84pt off 2.51pt}]  (427.67,167) -- (447.13,193.58) ;
\draw  [dash pattern={on 4.5pt off 4.5pt}] (197.67,255) .. controls (197.67,236.77) and (212.44,222) .. (230.67,222) .. controls (248.89,222) and (263.67,236.77) .. (263.67,255) .. controls (263.67,273.23) and (248.89,288) .. (230.67,288) .. controls (212.44,288) and (197.67,273.23) .. (197.67,255) -- cycle ;
\draw  [dash pattern={on 4.5pt off 4.5pt}] (161.67,199) .. controls (161.67,180.77) and (176.44,166) .. (194.67,166) .. controls (212.89,166) and (227.67,180.77) .. (227.67,199) .. controls (227.67,217.23) and (212.89,232) .. (194.67,232) .. controls (176.44,232) and (161.67,217.23) .. (161.67,199) -- cycle ;
\draw  [dash pattern={on 4.5pt off 4.5pt}] (324.67,58) .. controls (324.67,39.77) and (339.44,25) .. (357.67,25) .. controls (375.89,25) and (390.67,39.77) .. (390.67,58) .. controls (390.67,76.23) and (375.89,91) .. (357.67,91) .. controls (339.44,91) and (324.67,76.23) .. (324.67,58) -- cycle ;
\draw  [dash pattern={on 4.5pt off 4.5pt}] (374.67,103) .. controls (374.67,84.77) and (389.44,70) .. (407.67,70) .. controls (425.89,70) and (440.67,84.77) .. (440.67,103) .. controls (440.67,121.23) and (425.89,136) .. (407.67,136) .. controls (389.44,136) and (374.67,121.23) .. (374.67,103) -- cycle ;
\draw  [dash pattern={on 4.5pt off 4.5pt}] (161.67,133) .. controls (161.67,114.77) and (176.44,100) .. (194.67,100) .. controls (212.89,100) and (227.67,114.77) .. (227.67,133) .. controls (227.67,151.23) and (212.89,166) .. (194.67,166) .. controls (176.44,166) and (161.67,151.23) .. (161.67,133) -- cycle ;
\draw  [dash pattern={on 4.5pt off 4.5pt}] (197.67,77) .. controls (197.67,58.77) and (212.44,44) .. (230.67,44) .. controls (248.89,44) and (263.67,58.77) .. (263.67,77) .. controls (263.67,95.23) and (248.89,110) .. (230.67,110) .. controls (212.44,110) and (197.67,95.23) .. (197.67,77) -- cycle ;
\draw  [dash pattern={on 4.5pt off 4.5pt}] (258.67,50) .. controls (258.67,31.77) and (273.44,17) .. (291.67,17) .. controls (309.89,17) and (324.67,31.77) .. (324.67,50) .. controls (324.67,68.23) and (309.89,83) .. (291.67,83) .. controls (273.44,83) and (258.67,68.23) .. (258.67,50) -- cycle ;
\draw  [dash pattern={on 4.5pt off 4.5pt}] (256.67,284) .. controls (256.67,265.77) and (271.44,251) .. (289.67,251) .. controls (307.89,251) and (322.67,265.77) .. (322.67,284) .. controls (322.67,302.23) and (307.89,317) .. (289.67,317) .. controls (271.44,317) and (256.67,302.23) .. (256.67,284) -- cycle ;
\draw  [dash pattern={on 4.5pt off 4.5pt}] (376.67,232) .. controls (376.67,213.77) and (391.44,199) .. (409.67,199) .. controls (427.89,199) and (442.67,213.77) .. (442.67,232) .. controls (442.67,250.23) and (427.89,265) .. (409.67,265) .. controls (391.44,265) and (376.67,250.23) .. (376.67,232) -- cycle ;
\draw  [dash pattern={on 4.5pt off 4.5pt}] (324.67,276) .. controls (324.67,257.77) and (339.44,243) .. (357.67,243) .. controls (375.89,243) and (390.67,257.77) .. (390.67,276) .. controls (390.67,294.23) and (375.89,309) .. (357.67,309) .. controls (339.44,309) and (324.67,294.23) .. (324.67,276) -- cycle ;

\draw (422.42,108.03) node [anchor=north west][inner sep=0.75pt]  [font=\large] [align=left] {{\fontfamily{ptm}\selectfont {\footnotesize \textit{S}}}};
\draw (298,168) node [anchor=north west][inner sep=0.75pt]  [font=\large] [align=left] {{\fontfamily{ptm}\selectfont {\footnotesize \textit{O}}}};
\draw (431.35,155) node [anchor=north west][inner sep=0.75pt]  [font=\large] [align=left] {{\fontfamily{ptm}\selectfont {\footnotesize \textit{T}}}};
\draw (418.82,204.23) node [anchor=north west][inner sep=0.75pt]  [font=\large] [align=left] {{\fontfamily{ptm}\selectfont {\footnotesize \textit{U}}}};
\draw (265.29,102.71) node [anchor=north west][inner sep=0.75pt]  [font=\normalsize] [align=left] {$\displaystyle A_c$};
\draw (440.79,169.21) node [anchor=north west][inner sep=0.75pt]  [font=\normalsize] [align=left] {$\displaystyle r_c$};
\draw (473,173) node [anchor=north west][inner sep=0.75pt]   [align=left] {$\displaystyle u$};
\draw (321,1) node [anchor=north west][inner sep=0.75pt]   [align=left] {$\displaystyle v$};
\draw (344,156) node [anchor=north west][inner sep=0.75pt]   [align=left] {$\displaystyle \beta $};

\draw (178,147.9) node [anchor=north west][inner sep=0.75pt]  [font=\large] [align=left] {{\fontfamily{ptm}\selectfont {\footnotesize \textit{V}}}};
\draw (185,165) node [anchor=north west][inner sep=0.75pt]  [font=\huge] [align=left] {{\fontfamily{ptm}\selectfont \textcolor[rgb]{0.82,0.01,0.11}{.}}};

\draw (418.31,131) node [anchor=north west][inner sep=0.75pt]  [font=\huge] [align=left] {{\fontfamily{ptm}\selectfont \textcolor[rgb]{0.82,0.01,0.11}{.}}};
\draw (423,167) node [anchor=north west][inner sep=0.75pt]  [font=\huge] [align=left] {{\fontfamily{ptm}\selectfont \textcolor[rgb]{0.82,0.01,0.11}{.}}};
\draw (418.31,198) node [anchor=north west][inner sep=0.75pt]  [font=\huge] [align=left] {{\fontfamily{ptm}\selectfont \textcolor[rgb]{0.82,0.01,0.11}{.}}};

\end{tikzpicture}

    \caption{Graphical representation of the symbols in layer $c$ defined in~\eqref{EqLayerCircle}}
    \label{FigConstellation}
\end{figure}
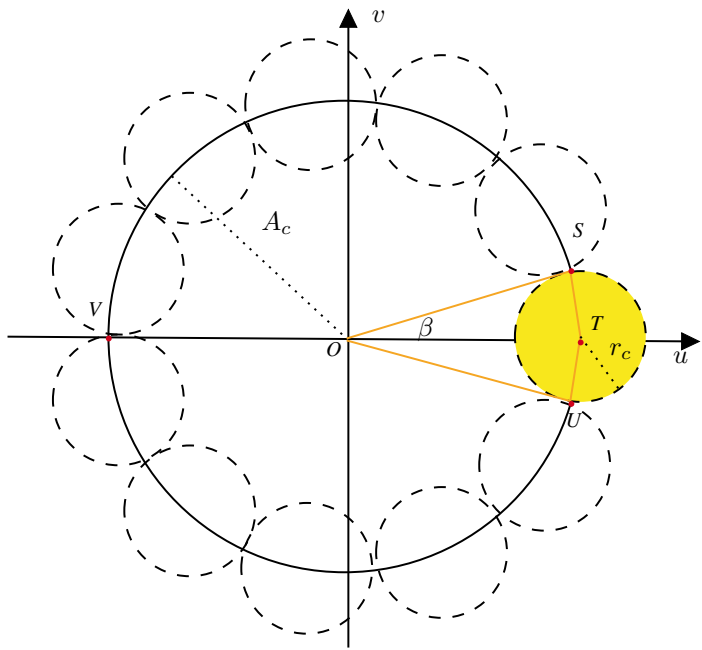

The information rate achievable by a code is a function of the number of channel input symbols $L$ in~\eqref{EqLSum} which in turn is a function of the number of symbols in each layer of the set of channel inputs $\mathcal{X}$ in~\eqref{EqConstellationCircle}. The following lemma provides an upper bound on the number of symbols in a layer $c \inCountK{C}$ denoted by $L_c$.
\begin{lemma} \label{LemmaLc}
Consider an $(n,M,\mathcal{X},P,\epsilon,B,\delta)$-code $\mathscr{C}$ from the family ${\sf C} \left(C,\boldsymbol{A},\boldsymbol{L},\boldsymbol{\alpha},\boldsymbol{p},\boldsymbol{r} \right)$ in~\eqref{EqFamily}. Then, for all $c \inCountK{C}$, the number of symbols in layer $c$ of $\mathcal{X}$ must satisfy the following:
\begin{equation} \label{Eq159n}
    L_c \leq \left\lfloor \frac{\pi}{2\arcsin{\frac{r_c}{2A_c}}} \right\rfloor,
\end{equation}
where, $r_c$ is the radius of the decoding regions in~\eqref{EqDecodingCircleComplex} and $A_c$ is the amplitude in~\eqref{EqLayerCircle}.
\end{lemma}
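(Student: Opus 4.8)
The plan is to convert the requirement that the decoding regions $\mathcal{D}_1,\ldots,\mathcal{D}_M$ be pairwise disjoint into a purely geometric packing constraint inside the single layer $c$. By the product construction in \eqref{EqDecodingCodeword}, two codewords that differ in exactly one coordinate $m$ have $\mathcal{D}_i\cap\mathcal{D}_j=\emptyset$ if and only if the two disks $\mathcal{G}_c^{(\ell)}$ of \eqref{EqDecodingCircleComplex} assigned to that coordinate are themselves disjoint. Since the codewords of a code in the family \eqref{EqFamily} may use any pair of symbols in any coordinate, guaranteeing disjoint decoding regions forces every pair of distinct symbol-disks to be disjoint; in particular the $L_c$ disks of radius $r_c$ centered at the equispaced points $x_c^{(1)},\ldots,x_c^{(L_c)}$ of layer $c$ must be pairwise disjoint. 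This reduces the lemma to bounding the number of equal disks of radius $r_c$ whose centers are equispaced on the circle of radius $A_c$, the radial separation across layers being already handled by \eqref{EqAmplitudesDifference}.

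The crux is the geometric computation pictured in Fig.~\ref{FigConstellation}. For a fixed symbol $T=x_c^{(\ell)}$, I would intersect its disk with the constellation circle $\{y\in\complex:|y|=A_c\}$. A boundary point $S$ of this intersection satisfies $|S|=A_c$ and $|S-T|=r_c$, so the triangle with vertices $O$ (the origin), $T$ and $S$ is isosceles with $|OT|=|OS|=A_c$ and base $|TS|=r_c$. The chord relation $r_c=2A_c\sin(\beta/2)$ then yields the apex angle $\beta=\angle TOS=2\arcsin\!\left(\tfrac{r_c}{2A_c}\right)$, which is exactly the angular half-width of the arc that the disk of $T$ cuts on the circle. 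Because the symbols of layer $c$ are equispaced with angular separation $\tfrac{2\pi}{L_c}$ by \eqref{EqLayerCircle}, the closest pair is a pair of adjacent symbols, and disjointness of their disks implies disjointness of the two arcs they cut; since each arc has half-width $\beta$, this forces $\tfrac{2\pi}{L_c}\ge 2\beta$.

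Substituting $\beta=2\arcsin(r_c/(2A_c))$ into $\tfrac{2\pi}{L_c}\ge 2\beta$ and solving for $L_c$ gives $L_c\le \tfrac{\pi}{2\arcsin(r_c/(2A_c))}$, and since $L_c$ is a positive integer the right-hand side may be replaced by its floor, which is exactly \eqref{Eq159n}. Thus the whole content of the lemma lives inside a single layer, with the cross-layer geometry contributing nothing beyond \eqref{EqAmplitudesDifference}.

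The main obstacle I anticipate is the reduction step rather than the trigonometry: one must argue carefully that the abstract requirement ``$\mathcal{D}_i\cap\mathcal{D}_j=\emptyset$ for all $i\neq j$'' genuinely forces pairwise-disjoint symbol-disks within a layer (this uses that the family admits codewords differing in a single coordinate), and that it is legitimate to pass from disjoint disks to disjoint circular arcs and then to the adjacent-pair inequality $\tfrac{2\pi}{L_c}\ge 2\beta$. Once the isosceles triangle $OTS$ of Fig.~\ref{FigConstellation} is fixed, the identity $r_c=2A_c\sin(\beta/2)$ and the final rearrangement are routine.
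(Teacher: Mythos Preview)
Your argument is correct and follows the same geometric packing idea as the paper: both reduce the question to how many disjoint disks of radius $r_c$ centred on the circle $|y|=A_c$ can be placed, by measuring the angular footprint one such disk occupies and dividing $2\pi$ by it. The difference is purely in the trigonometry. The paper lets $\beta=\angle SOU$ be the \emph{full} angle subtended by the two intersection points of the disk boundary with the layer circle, invokes the inscribed angle theorem to get $\angle STU=(2\pi-\beta)/2$, and then applies the law of sines in $\triangle STO$ to obtain $\sin(\beta/4)=r_c/(2A_c)$, i.e.\ $\beta=4\arcsin(r_c/(2A_c))$ and $L_c\le\lfloor 2\pi/\beta\rfloor$. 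You instead work directly with the isosceles triangle $OTS$ (sides $A_c,A_c,r_c$) and the chord identity $r_c=2A_c\sin(\angle TOS/2)$ to get the half-angle $\angle TOS=2\arcsin(r_c/(2A_c))$, then impose $2\pi/L_c\ge 2\angle TOS$. Your route is shorter and avoids the inscribed angle theorem altogether, and you are also more explicit than the paper about why decoding-region disjointness in~\eqref{EqDecodingCodeword} forces the per-symbol disks $\mathcal{G}_c^{(\ell)}$ to be pairwise disjoint (the paper simply asserts the packing picture). The only place to be careful is the one you already flag: the single-coordinate-difference argument requires that the code actually contains two codewords differing in one slot by two prescribed layer-$c$ symbols, which is a mild assumption on the family but is exactly what the paper is tacitly using as well.
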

Lemma~\ref{LemmaLc} provides an upper bound on the number of symbols that can be packed into a given layer of the proposed code construction while respecting the DEP requirement $\epsilon$ in~\eqref{EqGammaUpperbound}.
Using~\eqref{Eq159n}, the maximum number of symbols $L_c$ that should be placed in layer $c$ of the code can be precisely determined.
For all $c \inCountK{C}$, the number of symbols in a layer $L_c$ is a function of the radius of decoding sets $r_c$ and the amplitude of the layer $A_c$.
Note that the radii $r_c$ are in turn functions of the noise variance $\sigma^2$ and the DEP requirement $\epsilon$ as shown in Lemma~\ref{CorollaryRadius}.
\begin{proof}
For an $(n,M,\mathcal{X},P,\epsilon,B,\delta)$-code $\mathscr{C}$ in the family ${\sf C} \left(C,\boldsymbol{A},\boldsymbol{L},\boldsymbol{\alpha},\boldsymbol{p},\boldsymbol{r} \right)$, for all $c \inCountK{C}$, the radius $r_c$ of the decoding regions in~\eqref{EqRadiusVector} and the amplitude $A_c$ in~\eqref{EqAcVector} determine the number of symbols $L_c$ that can be accommodated in the layer $c$. 
\par
Layer $c$ of the form in~\eqref{EqLayerCircle} is illustrated in Fig.~\ref{FigConstellation}. 
The symbols in layer $c$ are distributed uniformly along the circle of radius $A_c$ centered at the origin $O$.
The maximum number of symbols in layer $c$ is equal to the number of non-overlapping circles of radius $r_c$ corresponding to the decoding regions defined in~\eqref{EqDecodingCircleComplex} that can be placed along the circumference of the circle of radius $A_c$.
From Fig.~\ref{FigConstellation}, a circle of radius $r_c$ centered at a symbol in layer $c$ subtends angle $\angle \mbox{SOU} = \beta$ at $O$. Therefore, the maximum number of symbols $L_c$ that can be accommodated along the circle of radius $A_c$ is given by
    \begin{equation} \label{Eq109}
        L_c \leq \left\lfloor  \frac{2\pi}{\beta} \right\rfloor.
    \end{equation}
%

%
 To determine the value of the angle $\beta$ in Fig.~\ref{FigConstellation}, consider the circle of radius $r_c$ centered at $T$ and the larger circle of radius $A_c$ centered at the origin $O$. The circles intersect at points $S$ and $U$. The angle subtended by the major arc $\arc{\mbox{SU}}$ at $O$ is the reflex angle $2\pi - \beta$. Since the angle subtended by an arc of a circle at its centre is two times the angle that it subtends anywhere on the circumference, it holds that
\begin{equation}
    2\pi- \beta = 2\angle \mbox{STU},
\end{equation}
which implies that
\begin{equation} \label{EqHalfAngle}
    \angle \mbox{STU} = \frac{2\pi- \beta}{2}.
\end{equation}
The line segment $TO$ bisects angles $\angle \mbox{STU}$ and $\angle \mbox{SOU}$. Therefore, the following hold:
   \begin{IEEEeqnarray}{rCl}
        \angle \mbox{STO} &=& \frac{2\pi - \beta}{4}, \\
       \angle \mbox{SOT} &=& \frac{\beta}{2}.
   \end{IEEEeqnarray}
   From the triangle $\triangle \mbox{SOT}$, it holds that:
   \begin{IEEEeqnarray}{rCl}
  \frac{\sin \left(\angle \mbox{SOT} \right)}{ST} = \frac{\sin \left(\angle \mbox{STO} \right)}{SO}.
       \end{IEEEeqnarray}
       This implies that,
     \begin{IEEEeqnarray}{rCl}    
\frac{\sin \left(\frac{\beta}{2} \right) }{r_c} &=& \frac{\sin \left(\frac{2\pi-\beta}{4} \right)}{A_c}, \\
	&=& \frac{1}{A_c} \sin \left(\frac{\pi}{2} - \frac{\beta}{4} \right), \\
	&=&  \frac{1}{A_c} \cos \left( \frac{\beta}{4} \right). \label{Eq127a}
 \end{IEEEeqnarray}
From~\eqref{Eq127a}, it follows that,
       \begin{IEEEeqnarray}{rCl}    	
	  \frac{2}{r_c} \sin \left(\frac{\beta}{4} \right) \cos \left(\frac{\beta}{4} \right) = \frac{1}{A_c} \cos \left( \frac{\beta}{4} \right), 
	  \end{IEEEeqnarray}
	  which implies that,
	 \begin{IEEEeqnarray}{rCl}  
	 && \sin \left(\frac{\beta}{4} \right)  = \frac{r_c}{2A_c},  \quad \mbox{and}\\
       &&\beta = 4\arcsin{\frac{r_c}{2A_c}}. \label{Eq110}
    \end{IEEEeqnarray}

Substituting the value of $\beta$ from~\eqref{Eq110} in~\eqref{Eq109}, it follows that the number of symbols in layer $c$ of $\mathcal{X}$ is at most
    \begin{equation} \label{Eq126}
        L_c \leq \left\lfloor \frac{\pi}{2\arcsin{\frac{r_c}{2A_c}}} \right\rfloor.
    \end{equation}
This completes the proof.
\end{proof}

%
%
%
%
%
%
%
%

\end{document}